\tikzstyle{label}=[shape=circle,draw,inner sep=0pt,minimum size=5mm]
\tikzstyle{tran}=[draw,->,>=stealth, rounded corners]
\tikzstyle{line} = [draw, color=black, -latex]
\lstdefinelanguage{prog}
{
morekeywords={if, then, else, end, for, true, false, and, or, skip, return},
sensitive = false
}
\DeclareMathAlphabet{\mathpzc}{OT1}{pzc}{m}{it}
\DeclareMathOperator{\cone}{cone}
\DeclareMathOperator{\eff}{\mathit{eff}}
\newcommand{\plen}{\text{L}}
\newcommand{\term}{\mathcal{L}}
\newcommand{\conf}{\mathit{C}}
\newcommand{\Inc}{\mathit{Inc}}
\newcommand{\Cone}{\mathit{cone}}
\newcommand{\A}{\ensuremath{\mathcal{A}}}
\newcommand{\calH}{\ensuremath{\mathcal{H}}}
\newcommand{\N}{\ensuremath{\mathbb{N}}}
\newcommand{\Z}{\ensuremath{\mathbb{Z}}}
\newcommand{\R}{\ensuremath{\mathbb{R}}}
\newcommand{\Q}{\ensuremath{\mathbb{Q}}}
\newcommand{\bigO}{\ensuremath{\mathcal{O}}}
\newcommand{\ce}[1]{\ensuremath{\left(#1 \right)}}
\newcommand{\size}[1]{|\!|#1|\!|}
\newcommand{\bx}{\ensuremath{\bold{x}}}
\newcommand{\by}{\ensuremath{\bold{y}}}
\newcommand{\bz}{\ensuremath{\bold{z}}}
\newcommand{\bu}{\ensuremath{\bold{u}}}
\newcommand{\bv}{\ensuremath{\bold{v}}}
\newcommand{\bw}{\ensuremath{\bold{w}}}
\newcommand{\bn}{\ensuremath{\bold{n}}}
\newcommand{\bc}{\ensuremath{\bold{c}}}
\newcommand{\EXPSPACE}{\textsf{EXPSPACE}}
\newcommand{\Norm}{\mathit{norm}}
\newcommand{\Decomp}{\mathit{Decomp}}
\newcommand{\Rset}{\mathbb{R}}
\newcommand{\vass}{\A}
\newcommand{\update}{\bu}
\newcommand{\updates}{U}
\newcommand{\transition}{\ensuremath{t}}
\newcommand{\loc}{\ensuremath{p}}
\newcommand{\zeroVec}{\ensuremath{\vec{0}}}
\newcommand{\oneVec}{\ensuremath{\vec{1}}}
\newcommand{\flowMatrix}{\ensuremath{F}}
\newcommand{\parameter}{\ensuremath{n}}
\newcommand{\rationalLP}{\ensuremath{R}}
\newcommand{\optimal}{\ensuremath{c}}
\newcommand{\denominator}{\ensuremath{m}}
\newcommand{\counters}{\ensuremath{\bolden{\mu}}}
\newcommand{\countersAlt}{\ensuremath{\bolden{\rho}}}
\newcommand{\paath}{\ensuremath{\pi}}
\newcommand{\valueSum}{\eff}
\newcommand{\cycle}{\ensuremath{C}}
\newcommand{\multCycle}{\ensuremath{M}}
\newcommand{\length}{\ensuremath{l}}
\newcommand{\dimOp}{\ensuremath{\mathit{dim}}}
\newcommand{\val}{\bv}
\newcommand{\qrf}{f}
\newcommand{\normalVec}{\bc}
\newcommand{\qrfLP}{\ensuremath{Q}}
\newcommand{\activeQ}{\ensuremath{\bolden{b}}}
\newcommand{\rankCoeff}{\normalVec}
\newcommand{\offsets}{\weights}
\newcommand{\exampleCS}{\ensuremath{\mathit{csys}}}
\newcommand{\exampleProg}{\ensuremath{\mathit{prog}}}
\newcommand{\locs}{Q}
\newcommand{\trans}{T}
\newcommand{\bolden}[1]{\boldsymbol{#1}}
\newcommand{\weights}{\bw}
\newcommand{\precise}{\mathit{tight}}
\newcommand{\maxup}[1]{\max_{#1}}
\newcommand{\rationalLPdual}{\rationalLP_{\mathit{dual}}}
\begin{document}
	
\author{Tom\'{a}\v{s} Br\'{a}zdil}
	\affiliation{
	\institution{Faculty of Informatics, Masaryk University}
		\city{Brno}
		\country{Czech Republic}
	}
	\email{brazdil@fi.muni.cz}
	
\author{Krishnendu Chatterjee}
	\affiliation{
		\institution{IST Austria}
		\city{Klosterneuburg}
		\country{Austria}
	}
	\email{Krishnendu.Chatterjee@ist.ac.at}
	
\author {Anton\'{\i}n Ku\v{c}era}
\affiliation{
	\institution{Faculty of Informatics, Masaryk University}
	\city{Brno}
	\country{Czech Republic}
}
\email{kucera@fi.muni.cz}
	
\author{Petr Novotn\'{y}}
	\affiliation{
		\institution{IST Austria}
		\city{Klosterneuburg}
		\country{Austria}
	}
	\email{petr.novotny@ist.ac.at}
	
\author {Dominik Velan}
	\affiliation{
		\institution{Faculty of Informatics, Masaryk University}
		\city{Brno}
		\country{Czech Republic}
	}
	\email{xvelan1@fi.muni.cz}

\author {Florian Zuleger}
\affiliation{
	\institution{Forsyte, TU Wien}
	\city{Wien}
	\country{Austria}
}
\email{zuleger@forsyte.at}
	
\title[Asymptotic termination in VASS]{\bf Efficient Algorithms for Asymptotic Bounds on Termination Time in VASS}

\begin{abstract}
	Vector Addition Systems with States (VASS) provide a well-known and
	fundamental model for the analysis of concurrent processes,
	parameterized systems, and are also used as abstract models of programs in resource bound analysis. In this paper we study the problem of obtaining asymptotic bounds
	on the termination time of a given VASS. In particular, we focus on the practically
	important case of obtaining polynomial bounds on termination time.
	Our main contributions are as follows:
	First, we present a polynomial-time algorithm for deciding whether a given VASS has a linear asymptotic complexity. We also show
	that if the complexity of a VASS is not linear, it is at least quadratic.
	Second, we classify VASS according to quantitative properties of their cycles. We show that certain singularities in these properties are the key reason for non-polynomial  asymptotic complexity of VASS. In absence of singularities, we show that the asymptotic complexity is always polynomial and of the form $\Theta(n^k)$, for some integer $k\leq d$, where $d$ is the dimension of the VASS. We present a polynomial-time algorithm computing the optimal $k$. For general VASS, the same algorithm, which is based on a complete technique for the construction of ranking functions in VASS, produces a valid lower bound, i.e., a
	$k$ such that the termination complexity is $\Omega(n^k)$. Our results are based on new insights into the geometry of VASS dynamics, which hold the potential for further applicability to VASS analysis.
\end{abstract}

\maketitle

\renewcommand{\shortauthors}{T. Br\'azdil, K. Chatterjee, A. Ku\v{c}era, P.
Novotn\'y, D.
Velan, F. Zuleger}

\section{Introduction}
\label{sec-intro}

Vector Addition Systems with States (VASS) are a fundamental model widely used in program analysis. Intuitively, a VASS consists of a finite set of control states and transitions between the control states, and a set of $d$ counters that hold non-negative integer values, where at every transition between the control states each counter is updated by a fixed integer value. A \emph{configuration} $p\bv$ of a given VASS is determined by the current control state $p$ and the vector $\bv$ of current counter values.

One of the most basic problems studied in program analysis is  \emph{termination} that, given a program, asks whether it always terminates. For VASS, the problem whether all paths initiated in \emph{given} configuration reach a terminal configuration is \EXPSPACE-complete. Here, a terminal configuration is a configuration where the computation is ``stuck'' because all outgoing transitions would decrease some counter to a negative value. The \EXPSPACE-hardness follows from~\cite{Lipton:PN-Reachability}, and the upper bound from~\cite{Yen92:Petri-Net-logic,AH11:Yen}. Contrasting to this, the problem of \emph{structural} VASS termination, which asks whether \emph{all} configurations of a given VASS terminate, is solvable in polynomial time \cite{KS88}. This is encouraging, because structural termination guarantees termination for all instances of the parameters represented by the counter values (i.e., all inputs, all instances of a given parameterized system, etc.).

The \emph{quantitative} variant of the termination question asks whether a given program terminates in $\bigO(f(n))$ steps for every input of size~$n$, where $f \colon \N \rightarrow \N$ is some function.
A significant research effort has recently been devoted to this question in the program analysis literature:
Recent projects include SPEED~\cite{SPEED2,conf/pldi/GulwaniZ10}, COSTA~\cite{DBLP:journals/entcs/AlbertAGGPRRZ09},
RAML~\cite{journals/toplas/0002AH12},
Rank~\cite{ADFG10:lexicographic-ranking-flowcharts},
Loopus~\cite{SZV14,journals/jar/SinnZV17},
AProVE~\cite{journals/jar/GieslABEFFHOPSS17},
CoFloCo~\cite{conf/aplas/Flores-MontoyaH14},
C4B~\cite{conf/cav/Carbonneaux0RS17}.
The cited projects target general-purpose programming languages with the goal of designing \emph{sound (but incomplete)} analyses that work well in practice.
The question whether \emph{sound and complete} techniques can be developed for restricted classes of programs (such as VASS), however, has received considerably less attention.

\textbf{Our contribution.} In this work, we study the quantitative variant of structural VASS termination. The \emph{termination complexity} of a given VASS is a function $\term \colon \N \rightarrow \N \cup \{\infty\}$ such that $\term(n)$ is the length of the longest computation initiated in a configuration $p\bv$ where all components of $\bv$ are bounded by $n$.  We concentrate on \emph{polynomial} and particularly on \emph{linear} asymptotic bounds for termination complexity, which seem most relevant for practical applications. Our main results can be summarized as follows:

\emph{Linear bounds.} We show that the problem whether $\term \in \Theta(n)$ is decidable in polynomial time. Our proof reveals that if the termination complexity is \emph{not} linear, then it is at least quadratic (or the VASS is non-terminating). Hence, there is no VASS with asymptotic termination complexity ``between'' $\Theta(n)$ and $\Theta(n^2)$. In addition, for strongly connected linear VASS, we compute a constant $c \in \mathbb{Q}$ (in polynomial time)  such that $\term(n) = cn$ for $n \rightarrow \infty$.
Further, a linear VASS always has a ranking function that witnesses the linear termination complexity; this ranking function is also computable in polynomial time.

\emph{Polynomial bounds.} We show that the termination complexity of a given VASS is highly influenced by the properties of \emph{normals of quasi-ranking functions}, see Section~\ref{sec-poly}. We start with strongly connected VASS, and classify them into the following three types:
\begin{compactitem}
	\item[(A)] Non-terminating VASS.
	\item[(B)] Positive normal VASS: Terminating VASS for which there
	exists a quasi-ranking function such that each component of its normal is positive.
	\item[(C)] Singular normal VASS: Terminating VASS for which there
	exists a quasi-ranking function such that each component of its normal is non-negative
	and~(B) does not hold.
\end{compactitem}
This classification is efficient, i.e., we can decide in polynomial time to which class a given VASS belongs. We show that each type~(B) VASS of dimension $d$ has termination complexity in $\Theta(n^k)$, where $1\leq k \leq d$, and we show that the $k$ is computable in polynomial time. Termination complexity of a type~(C) VASS is not necessarily polynomial, and hence singularities in the normal are the key reason for high asymptotic bounds in VASS.  For a given type~(C) VASS, we show how to compute a valid lower bound, i.e., a $k$ such that the termination complexity is $\Omega(n^k)$ (in general, this bound does not have to be tight). Our tight analysis for type~(B) VASS extends to general (not necessarily strongly connected) VASS where each SCC determines a type~(B) VASS.

\emph{Ranking Functions and Completeness.}
Algorithmically the result on polynomial bounds is established by a recursive procedure:
the procedure computes quasi-ranking functions which establish that certain transitions can only be taken a linear number of times; these transitions are then removed and the algorithm recurses on the remaining strongly-connected components.
We show that if there is no quasi-ranking function, then the VASS does not terminate, i.e., our ranking function construction is complete.
To the best of our knowledge, this is the first \emph{completeness} result for the construction of ranking functions for VASS.

Technically, our results are based on new insights into the geometry of VASS dynamics, some of which are perhaps interesting on their own and can enrich the standard toolbox of techniques applicable to VASS analysis.

\begin{figure}[t]
\begin{tabular}{l|l}
\begin{minipage}{4cm}
 \begin{alltt}
  void main(uint n) \{
    uint i = n, j = n;
\(l\sb{1}\!:\)  while (i > 0) \{
      i--;
      j++;
\(l\sb{2}\!:\)    while (j > 0 && *)
          j--;
  \} \}
\end{alltt}
\end{minipage}
&
\begin{minipage}{2,5cm}
\begin{tikzpicture}[scale=0.4, node distance = 1cm, auto]
    \node (t1) [xshift=2cm] {$\loc_1$};
    \node (t2) [below of=t1,node distance=1cm] {$\loc_2$};
    \path(t1) edge [line,bend left] node [right]
    {(-1,1)}(t2)
    (t2) edge [loop below,  every loop/.append style={looseness= 20}] node [right, yshift=0.3cm,xshift=0.1cm]
    {(0,-1)}(t2)
    (t2) edge [line,bend left] node [left]
    {(0,0)} (t1)
    ;
   \end{tikzpicture}
   \end{minipage}
\vspace{-0.3cm}
\end{tabular}
\caption{(a) a program, (b) VASS $\vass_\exampleProg$}
\vspace{-0.2cm}
\label{fig:intro-prog}
\end{figure}

\begin{figure}[t]
\begin{tabular}{l|l}
\begin{minipage}{4cm}
\begin{tikzpicture}[scale=0.4, node distance = 1cm, auto]
    \node (t1) [xshift=2cm] {$i$};
    \node (t2) [below of=t1,node distance=1.5cm] {$j$};
    \node (t3) [right of=t1,node distance=1cm] {$k$};
    \path(t1) edge [line] node [above]
    {d==ff, d:=tt}(t3)
    (t1) edge [line,bend left] node [right]
    {
    {d:=ff}
    }(t2)
    (t2) edge [line,bend left] node [left]
    {d==tt} (t1)
    ;
   \end{tikzpicture}
   \end{minipage}
&
\begin{minipage}{2,5cm}
\begin{tikzpicture}[scale=0.4, node distance = 1cm, auto]
    \node (t1) [xshift=2cm] {$\loc_{d=\mathtt{tt}}$};
    \node (t2) [below of=t1] {$\loc_{d=\mathtt{ff}}$};
    \path(t1) edge [line,bend left] node [right]
    {(-1,1,0)}(t2)
    (t2) edge [loop below,  every loop/.append style={looseness= 10}] node [right, yshift=0.3cm,xshift=0.3cm]
    {(-1,1,0)}(t2)
    (t2) edge [line,bend left] node [left]
    {(-1,0,1)} (t1)
    (t1) edge [loop above,  every loop/.append style={looseness= 10}] node [right, yshift=-0.3cm,xshift=0.3cm]
    {(1,-1,0)}(t1)
    ;
   \end{tikzpicture}
\end{minipage}
\vspace{-0.3cm}
\end{tabular}
\caption{(a) a process template, (b) VASS $\vass_\exampleCS$}
\vspace{-0.2cm}
\label{fig:intro-parameterized}
\end{figure}

\paragraph{Motivation and Illustration of our Results.}
In previous work we have described automated techniques for the complexity analyses of imperative programs, which use VASS (and extensions) as backend~\cite{SZV14,journals/jar/SinnZV17}.
For example, our techniques allow to abstract the program given in Fig.~\ref{fig:intro-prog} (a) to the VASS $\vass_\exampleProg$ in Fig.~\ref{fig:intro-prog} (b).
$\vass_\exampleProg$ has two locations $\loc_1$ and $\loc_2$, which correspond to the loop headers of the program.
$\vass_\exampleProg$ has dimension two in order to represent the variables $i$ and $j$.
The transitions of $\vass_\exampleProg$ correspond to the variable increments/decrements.
In contrast to our previous approaches~\cite{SZV14,journals/jar/SinnZV17},
the analysis in this paper is guaranteed to compute tight bounds:
we obtain the precise linear termination complexity $\term(n) = 4n$ for $\vass_\exampleProg$ and can construct a linear ranking function, e.g., $\qrf(\loc,(i,j)) = 3i + j + \offsets(\loc)$, where $\offsets(\loc_1) = 0$ and $\offsets(\loc_1) = 1$
(our construction is not guaranteed to return this ranking function, but it 
will always find a linear ranking function).

We illustrate VASSs as models of concurrent systems:
Fig.~\ref{fig:intro-parameterized} (a) states a process template.
A concurrent system consists of $n$ copies of this process template.
The processes communicate via the Boolean variable $d$.
The concurrent system is equivalently represented by the VASS $\vass_\exampleCS$ in Fig.~\ref{fig:intro-parameterized}  (b).
$\vass_\exampleCS$ has two locations $\loc_{d=\mathtt{tt}}$ and $\loc_{d=\mathtt{ff}}$, which represent the global state.
$\vass_\exampleCS$ has dimension three in order to represent the number of processes in the local states $i$, $j$ and $k$.
The transitions of $\vass_\exampleCS$ reflect the transitions of the process template, e.g., transition $(-1,1,0)$ means that one process moves from state $i$ to $j$.
We are interested in the \emph{parameterized verification problem}, i.e., to study the termination of the concurrent system for all system sizes $n$.
Our results in this paper establish $\term(n) \in \Theta(n^2)$, i.e., after quadratically many steps of the concurrent system there is no more process that can take another step.

\textbf{Related Work.}
\noindent{\em Results on VASS.}
The model of VASS~\cite{KM69} or equivalently Petri nets are a fundamental model
for parallel programs~\cite{EN94,KM69} as well as parameterized systems~\cite{Bloem16,conf/icalp/AminofRZS15,conf/lpar/AminofRZ15}.
The termination problems (counter-termination, control-state termination) as well
as the related problems of boundedness and coverability have been a rich source
of theoretical problems
that have been widely
studied~\cite{Lipton:PN-Reachability,Rackoff:Covering-TCS,Esparza:PN,ELMMN14:SMT-coverability,BG11:Vass}.
The complexity of the termination problem with fixed initial configuration is
\EXPSPACE-complete~\cite{Lipton:PN-Reachability,Yen92:Petri-Net-logic,AH11:Yen}.
Besides the termination problem, the more general reachability problem where given a VASS,
an initial and a final configuration, whether there exists a path between them has also been
studied~\cite{Mayr:PN-reachability,Kosaraju82:VASS-reach-dec,Leroux:VASS-POPL}.
The reachability problem is
decidable~\cite{Mayr:PN-reachability,Kosaraju82:VASS-reach-dec,Leroux:VASS-POPL},
and
\EXPSPACE-hard~\cite{Lipton:PN-Reachability},
and the current best-known upper bound is cubic
Ackermannian~\cite{LS15:VASS-reach-complexity}, a complexity class belonging to
the third level of a fast-growing complexity hierarchy introduced
in~\cite{Schmitz16:hyperackermannian-complexity-hierarchy}. Functions (non)computable by VASS are studied in \cite{DBLP:conf/rp/LerouxS14}.
Our algorithm for computing polynomial bounds can be seen as the dual (in the sense of linear programming) of the algorithm of~\cite{KS88};
this connection is the basis for the completeness of our ranking function construction (we further comment on the connection to~\cite{KS88} in Section~\ref{sec-poly}).

\smallskip\noindent{\em Ranking functions and extensions.}
Ranking functions for intraprocedural analysis have been studied widely in the literature.
We restrict ourselves here to approaches which present complete methods for the construction of linear/polynomial ranking functions~\cite{DBLP:conf/vmcai/PodelskiR04,ADFG10:lexicographic-ranking-flowcharts,DBLP:journals/fcsc/YangZZX10};
in contrast to this paper these approaches target general programs and do not show that the non-existence of a linear/polynomial ranking function implies the non-termination of the program.

The problem of existence of infinite computations in VASS has been studied in
the literature.
Polynomial-time algorithms have been presented in~\cite{CDHR10,VCDHRR15} using results of~\cite{KS88}.
In the more general context of games played on VASS, even deciding the existence of
infinite computation is coNP-complete~\cite{CDHR10,VCDHRR15}, and various algorithmic approaches
based on hyperplane-separation technique have been studied~\cite{CV13,JLS15,CJLS17}.

\section{Preliminaries}
\label{sec-prelim}

We use $\N$, $\Z$, $\Q$, and $\R$ to denote the sets of non-negative integers,
integers,
rational numbers, and real numbers. The subsets of all \emph{positive} elements
of $\N$, $\Q$, and $\R$ are denoted by $\N^+$, $\Q^+$, and $\R^+$.  Further, we
use $\N_\infty$ to denote the set $\N \cup \{\infty\}$ where $\infty$ is
treated according to the standard conventions.
The cardinality of a given set $M$ is denoted by $|M|$. When no confusion arises, we also use $|c|$ to denote the absolute value of a given $c \in \R$.

Given a function $f \colon \N \rightarrow \N$, we use  $\bigO(f(n))$ and $\Omega(f(n))$ to denote the sets of all $g \colon \N \rightarrow \N$ such that $g(n) \leq a \cdot f(n)$ and $g(n) \geq b \cdot f(n)$ for all sufficiently large $n \in \N$, where $a,b \in \R^+$ are some constants. If $h(n) \in \bigO(f(n))$ and $h(n) \in \Omega(f(n))$, we write $h(n) \in \Theta(f(n))$.

Let $A,B$ be arbitrary index sets. Elements of $\R^A$ are denoted by bold
letters such as
$\bu,\bv,\bz,\ldots$.
The component of $\bv$ of index $i\in A$ is denoted by $\bv(i)$. For a matrix
$A \in \Rset^{A\times
B}$ we denote by $A(a,b)$ the element in row of index $a\in A$ and column
of index by $b\in
B$, and by $A^{\top}$ the transpose of $A$. If the index set is of the form $A=\{1,2,\dots,d\}$ for some positive
integer $d$, we write $\Rset^d$ instead of $\Rset^A$, i.e., for $\bv\in \Rset^d$
we have $\bv = (\bv(1),\dots,\bv(d))$.
For every $n \in \N$, we use $\vec{n}$ to denote the constant vector where all
components are equal to~$n$.
The scalar product of $\bv,\bu \in \R^d$ is denoted by $\bv \cdot \bu$, i.e.,
$\bv\cdot \bu = \sum_{i=1}^d \bv(i)\cdot\bu(i)$. The other
standard operations and relations on $\R$ such as
$+$, $\leq$, or $<$ are extended to $\R^d$ in the component-wise way. In
particular, $\bv$ is \emph{positive} if $\bv > \vec{0}$, i.e., all components of
$\bv$ are positive. The norm of $\bv$ is defined by $\Norm(\bv) = \sqrt{\bv(1)^2 + \cdots+\bv(d)^2}$.

\paragraph{Half-spaces and Cones.}
An \emph{open half-space} of $\R^d$ determined by a normal vector $\bn \in
\R^d$, where $\bn \neq \vec{0}$, is the set $\calH_{\bn}$ of all $\bx \in \R^d$ such that $\bx \cdot \bn < 0$. A \emph{closed half-space} $\hat{\calH}_{\bn}$ is defined in the same way
but the above inequality is non-strict. Given a finite set of vectors $U
\subseteq \R^d$, we use $\Cone(U)$ to denote the set of all vectors of the form
$\sum_{\bu \in U} c_{\bu} \bu$, where $c_{\bu}$ is a non-negative real constant
for every $\bu \in U$.

\subsection{Syntax and semantics of VASS}
\label{sec-VASS-def}

In this subsection we present a syntax of VASS, represented as finite state
graphs with transitions labelled by vectors of counter changes.

\begin{definition}
\label{def-VASS}
Let $d \in \N$. A \emph{$d$-dimensional vector addition system with states
(VASS)} is a pair $\A = \ce{Q,T}$, where $Q \neq \emptyset$ is a finite set of
\emph{states} and  $T \subseteq Q \times \Z^d \times Q$ is a finite set of
\emph{transitions} such that for every $q \in Q$ there exists $p \in Q$
 and $\bu \in \Z^d$ such that $(q,\bu,p) \in T$.
\end{definition}

We denote by $\maxup{\A}$ the number $\max_{(p,\bu,q)\in T,1\leq i \leq d}|\bu(i)|$. The encoding size of $\A$ is denoted by $\size{\A}$ (the integers representing counter updates are written in binary).

In our disucssion it is often beneficial to express constraints on transitions using matrix notation.
We define the \emph{update} matrix $\updates \in \mathbb{Z}^{d \times T}$ by
setting $\updates(i, \transition) = \update(i)$ for all $1\leq i\leq d$ and all transitions $\transition =
(\loc,\update,\loc') \in \trans$.
We also define the \emph{oriented incidence matrix} $\flowMatrix \in
\mathbb{Z}^{\locs \times \trans}$ by setting
$\flowMatrix(\loc,\transition) = 1$ resp. $\flowMatrix(\loc,\transition) = -1$,
if $\transition = (\loc,\update,\loc')$ resp. $\transition =
(\loc',\update,\loc)$ and $\loc' \neq \loc$, and $\flowMatrix(\loc,\transition)
= 0$, otherwise.
We note that every column of $\flowMatrix$, corresponding to a transition $\transition$, either contains exactly one $-1$ entry and exactly one $1$ entry (in case the source and target of transition~$\transition$ are different) or only $0$ entries (in case the source and target of transition~$\transition$ are the same).

\begin{example}
VASS $\vass_\exampleProg$ from Fig.~\ref{fig:intro-prog} (b) has two states $\loc_1,\loc_2$ and three
transitions
$t_1 = (q_1,(-1,1),q_2)$, $t_2 = (q_2,(0,0),q_1)$, $t_3 = (q_2,(0,-1),q_2)$. The matrices $F$ and $U$ look as follows:
\[
F=\begin{pmatrix}
1 & -1 & 0\\
-1 & 1 & 0
\end{pmatrix}
\]
Here the rows correspond to the states $q_1, q_2$ and columns to transitions $t_1,t_2,t_3$.
\[
U=\begin{pmatrix}
-1 & 0 & 0\\
1 & 0 & -1
\end{pmatrix}
\]
Hence, the columns are the update vectors of transitions $t_1,t_2,t_3$.

\end{example}

\paragraph{Paths and cycles.}
A \emph{finite path} in $\A$ of length~$n$ is a finite sequence $\pi$ of the form
$p_0,\bu_1,p_1,\bu_2,p_2,\ldots,\bu_n,p_n$ where $n \geq 1$ and
$(p_i,\bu_{i+1},p_{i+1}) \in T$ for all $0 \leq i < n$. If $p_0 = p_n$, then
$\pi$ is a \emph{cycle}. A~cycle is \emph{simple} if all $p_1,\ldots, p_{n-1}$ are pairwise different. The \emph{effect} of $\pi$, denoted by $\eff(\pi)$, is the sum $\bu_1 + \cdots + \bu_n$. Given a set of paths $P$, we denote by $\eff(P)$ the sum of effects of all paths in $P$.
Let
\(
   \Inc = \{\eff(\pi) \mid \pi \text{ is a simple cycle of } \A\}\,.
\)
The elements of $\Inc$ are called \emph{increments}.

Given two finite paths $\alpha = p_0,\bu_1,\ldots,p_n$ and $\beta = q_0,\bv_1,\ldots,q_m$ such that $p_n = q_0$, we use $\alpha \odot \beta$ to denote the finite path $p_0,\bu_1,\ldots,p_n,\bv_1,\ldots,q_m$.
A \emph{multi-cycle} in $\A$ is a multiset of simple cycles.
The length of a multi-cycle is the sum of lengths of all its cycles.

Let $\pi$ be a finite path in $\A$. A \emph{decomposition of $\pi$ into simple cycles}, denoted by $\Decomp(\pi)$, is a multi-cycle, i.e., a multiset of simple cycles, defined recursively as follows:
\begin{compactitem}
	\item If $\pi$ does not contain any simple cycle, then $\Decomp(\pi)$ is an empty multiset.
	\item If $\pi = \alpha \odot \gamma \odot \beta$ where $\gamma$ is the first simple cycle occurring in $\pi$, then $\Decomp(\pi) = \{\gamma\}\cup\Decomp(\alpha \odot \beta)$.
\end{compactitem}
Observe that if $\Decomp(\pi)$ is empty, then the length of $\pi$ is at most $|Q|-1$. Since the length of every simple cycle is bounded by $|Q|$, the length of $\pi$ is asymptotically the \emph{same} as the number of elements in $\Decomp(\pi)$, assuming a fixed VASS~$\A$.
Considering $\pi'$ to be the remainder of $\pi$ after all simple cycles of $\Decomp(\pi)$ removed by the above procedure, we obtain
$\valueSum(\pi)=
\valueSum(\pi')+\valueSum(\Decomp(\pi))$.

Let $\A = (Q,T)$ be a VASS. A \emph{sub-VASS} of $\A$ is a VASS $\A' = (Q',T')$ such that $Q' \subseteq Q$ and $T' \subseteq T$. VASS $\A$ is \emph{strongly connected} if for every $p,q \in Q$ there is a finite path from $p$ to $q$.

A \emph{strongly connected component (SCC)} of $\A$ is a maximal strongly connected sub-VASS of $\A$.

\paragraph{Configurations and computation.}
A \emph{configuration} of $\A$ is a pair $p\bv$, where $p \in Q$ and $\bv \in
\N^d$. The set of all configurations of $\A$ is denoted by $\conf(\A)$. The
\emph{size} of  $p\bv \in \conf(\A)$ is defined as $\size{p\bv} = \size{\bv} =
\max \{\bv(i)
\mid 1\leq i \leq d\}$. Given $n\in \mathbb{N}$, we say that $p\bv$ is {\em $n$-bounded} if $\size{p\bv}\leq n$.

A \emph{computation} initiated in $p_0\bv_0$ is a finite sequence $p_0\bv_0,\ldots,p_n \bv_n$ of configurations such that there exists a path $p_0,\bu_1,p_1,\bu_2,p_2,\ldots,\bu_n,p_n$  where $\bv_i = \bv_0 + \bu_1 + \cdots + \bu_i$ for all $0 \leq i \leq n$. The length of a given computation is the length of its (unique) corresponding path.

\subsection{Termination Complexity of VASS}
\label{sec-VASS-term}

\begin{definition}
\label{def-VASS-compl}
Let $\A = \ce{Q,T}$ be a $d$-dimensional VASS. For every configuration $p\bv$ of $\A$, let $\plen(p\bv)$ be the least $\ell \in \N_\infty$ such that the length of every finite computation initiated in $p\bv$ is bounded by $\ell$. The \emph{termination complexity} of $\A$ is a function $\term \colon \N \rightarrow \N \cup \{\infty\}$ defined by
\(
   \term(n) = \sup \left\{\plen(p\bv) \mid p\bv \in \conf(\A) \mbox{ where } \size{p\bv} = n \right\} .
\)
If $\term(n) = \infty$ for some $n \in \N$, we say that $\A$ is \emph{non-terminating}, otherwise it is \emph{terminating}.
\end{definition}	
Observe that if $\A$ is non-terminating, then $\term(n) = \infty$ for all sufficiently large $n \in \N$. Further, if $\A$ is terminating, then $\term(n) \in \Omega(n)$. In particular, if $\term(n) \in \bigO(n)$, we also have $\term(n) \in \Theta(n)$.

\section{Linear Termination Time}
\label{sec-linear}

In this section, we give a complete and effective characterization of all VASS
with \emph{linear} termination complexity. 
Let us consider a VASS $\A = \ce{Q,T}$. We assume that $\A$ is strongly connected unless explicitly stated otherwise.

Consider an integer solution $\counters \in \mathbb{Z}^{\trans}$ to the
constraints $\counters \ge \zeroVec$ and $\flowMatrix \counters = \zeroVec$ (here $\flowMatrix$ is the oriented incidence matrix of $\A$).
Note that $\counters$ induces a \emph{multi-cycle}
$\multCycle$ of $\A$. Indeed, if $\counters(t)>0$, then there is a transition $t'$ with $\counters(t')>0$ such that the source state of $t'$ is equal to the target state of $t$.
Hence one may trace a path over states with positive value in $\counters$ that eventually leads to a simple cycle. Subtracting one from $\counters(t)$ for all $t$ on the simple cycle we obtain  $\counters'$ still satisfying the above constraints. Repeating this process we eventually end up with a~zero vector and the desired multi-cycle~$\multCycle$. 

Note that $\oneVec^{\top}\counters$ is equal to the number of transitions traced along the multi-cycle. So, roughly speaking, it suffices to add a constraint $U\counters \geq -n \oneVec$ (here $U$ is the update matrix) to characterize multicycles that, when appropriately executed in an $n$-bounded configuration, produce a zero-avoiding computation. However, there are several issues in such a formulation, namely dependency of the constraints on the parameter $n$ and demand for an integer solution.

So we transform the constraints into the following relaxed optimization problem to completely characterize the linear computational complexity:

\begin{center}
\begin{tabular}{|c|}
\hline

{\begin{minipage}[c]{0.5\linewidth}
\vspace{0.2cm}

rational LP ($\rationalLP$):

\vspace{0.2cm}
$\max \oneVec^\top\countersAlt$  with
\begin{align*}
  \countersAlt & \ge \zeroVec\\
  \updates \countersAlt & \ge -\oneVec\\
  \flowMatrix \countersAlt & = \zeroVec
\end{align*}
\end{minipage}}\\
\hline
\end{tabular}
\end{center}

\begin{theorem}
\label{thm:precise-complexity}
  Let $\vass$ be a strongly connected VASS.
  We consider LP ($\rationalLP$) over $\mathbb{Q}$.
	\begin{compactenum}  
\item[(A)]   If LP ($\rationalLP$) has a solution $\countersAlt$ with $\oneVec^\top\countersAlt=\optimal \in \mathbb{Q}$, then
  $\optimal \parameter$ is the precise asymptotic computational complexity of
  $\vass$, i.e., $\term(\parameter)$ converges to $\optimal \parameter$ for
  $\parameter \rightarrow \infty$.
  \item[(B)] If ($\rationalLP$) is unbounded, then the computational complexity of $\vass$
  is at least quadratic.
  \end{compactenum}
\end{theorem}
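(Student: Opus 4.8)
The plan is to split along the two cases of the theorem. Since ($\rationalLP$) is always feasible (the point $\countersAlt=\zeroVec$ satisfies all constraints), it is either bounded with attained rational optimum $\optimal$, or unbounded. I first handle the \emph{upper bound} of~(A). Given any finite computation of length $\ell$ from an $n$-bounded configuration $\loc\bv$, let $\pi$ be its path and split $\pi$ into the multi-cycle $\Decomp(\pi)$ and a remainder $\pi'$ with $|\pi'|\le|Q|-1$; let $\counters\in\N^{\trans}$ record how often each transition occurs in $\Decomp(\pi)$. Then $\counters\ge\zeroVec$, $\flowMatrix\counters=\zeroVec$ (a multi-cycle conserves flow), and $\ell=\oneVec^{\top}\counters+|\pi'|$. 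Zero-avoidance together with $\bv\le\vec{n}$ gives $\eff(\pi)\ge-\bv\ge-\vec{n}$, and since $\eff(\pi)=\updates\counters+\eff(\pi')$ with $\eff(\pi')$ bounded componentwise by $B:=|Q|\cdot\maxup{\A}$, we obtain $\updates\counters\ge-(n+B)\oneVec$. Hence $\counters/(n+B)$ is feasible for ($\rationalLP$), so $\oneVec^{\top}\counters\le\optimal(n+B)$ and $\ell\le\optimal n+(\optimal B+|Q|)$. Thus $\term(n)\le\optimal n+O(1)$; in particular $\vass$ is terminating, and since a terminating VASS has $\term(n)\in\Omega(n)$ this forces $\optimal>0$.

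For the \emph{lower bound} of~(A) it suffices, in view of the upper bound, to prove $\liminf_{n}\term(n)/n\ge\optimal$. Fix an optimal solution $\countersAlt^{*}$ of ($\rationalLP$); its support is nonempty as $\optimal>0$. If the support induces a disconnected subgraph, I perturb: using strong connectivity of $\vass$, pick a closed walk visiting a vertex of every weakly-connected component of $\operatorname{supp}(\countersAlt^{*})$, add a small positive multiple $\epsilon$ of its transition-count vector, and divide by $1+\epsilon\kappa$ (with $\kappa$ bounding the resulting update violations) to restore $\updates(\cdot)\ge-\oneVec$. Letting $\epsilon\to0$, this yields for every $\eta>0$ a feasible solution $\countersAlt'$ with \emph{connected} support and $\oneVec^{\top}\countersAlt'\ge\optimal-\eta$. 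So it is enough to show that any feasible $\countersAlt'$ with connected support and value $c'$ gives $\term(n)\ge c'n-O(1)$, and then take $\eta\to0$. Clearing denominators, $\counters'=m\countersAlt'\in\N^{\trans}$ has connected support, $\flowMatrix\counters'=\zeroVec$, $\updates\counters'\ge-m\oneVec$, and $\oneVec^{\top}\counters'=c'm$; connectedness plus flow-conservation means $\counters'$ is the transition-count vector of a single closed walk $E$ (an Eulerian circuit) of length $c'm$ and effect $\ge-m\oneVec$ whose prefixes dip at most a constant $B'$ below the start. Executing $E$ exactly $k:=\lfloor(n-B')/m\rfloor$ times from $\loc\vec{n}$ keeps the counter vector $\ge\vec{n}-km\oneVec-B'\oneVec\ge\zeroVec$ throughout, producing a computation of length $kc'm\ge c'n-O(1)$.

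For part~(B): if $\vass$ is non-terminating then $\term(n)=\infty$ for all large $n$ and there is nothing to prove, so assume $\vass$ terminates. Unboundedness together with feasibility of ($\rationalLP$) yields a rational recession direction with positive objective; scaled to integers this is a $\bw\in\N^{\trans}$ with $\bw\ge\zeroVec$, $\flowMatrix\bw=\zeroVec$, $\updates\bw\ge\zeroVec$ and $L:=\oneVec^{\top}\bw\ge1$. Decompose $\operatorname{supp}(\bw)$ into connected components with Eulerian circuits $E_1,\dots,E_r$ (effects $\bz_1,\dots,\bz_r$ with $\sum_j\bz_j=\updates\bw\ge\zeroVec$, total length $L$) and join them to a common state $\loc$ by short connector paths $\gamma_j,\gamma_j'$. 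For a parameter $a$ set $S_a:=\gamma_1E_1^{a}\gamma_1'\cdots\gamma_rE_r^{a}\gamma_r'$, a closed walk at $\loc$; then $|S_a|\ge aL$, $\eff(S_a)\ge-C_1\oneVec$ (using $\sum_j\bz_j\ge\zeroVec$), and one execution of $S_a$ started from $\bx$ stays $\ge\bx-aC_2\oneVec$, where $C_1,C_2$ are constants depending only on $\vass$ (not on $a$). Iterating $S_a$ $K$ times from $\loc\vec{n}$ stays non-negative whenever $KC_1+aC_2\le n$, which holds for $a=\lfloor n/(2C_2)\rfloor$ and $K=\lfloor n/(2C_1)\rfloor$; the resulting computation has length $K|S_a|\ge KaL=\Omega(n^2)$, so $\term(n)\in\Omega(n^2)$.

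The main obstacle is the lower bound of~(A): realizing a multi-cycle as an honest zero-avoiding computation \emph{without degrading the leading constant} $\optimal$. A scaled-up multi-cycle can have connected components whose net effect drains some counter by $\Theta(n)$, so simply concatenating their Eulerian circuits would force a starting configuration of size $\Theta(n)$ with too large a constant, ruining the rate. The device above --- approximating the optimum by a feasible solution whose support is \emph{connected}, so that one iterated cycle suffices with only $O(1)$ slack --- circumvents this, paying an arbitrarily small loss that vanishes in the limit; alternatively one can keep the original support but interleave the component circuits at granularity $\Theta(\sqrt{n})$, trading $O(\sqrt{n})$ extra connector steps against $O(\sqrt{n})$ extra slack. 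A milder version of the same scheduling tension drives part~(B): the inner repetition count $a$ must be $\Theta(n)$ so that each super-iteration $S_a$ has length $\Theta(n)$, yet small enough that the $\Theta(a)$-deep internal dip of $S_a$ fits inside the budget $n$; balancing $a$ against the number of iterations $K$ is exactly what yields a quadratic-length computation.
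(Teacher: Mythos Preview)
Your argument is correct, and for the upper bound of~(A) it is essentially the paper's proof (you are just more explicit about the $O(1)$ slack coming from the acyclic remainder). The lower bounds, however, take a genuinely different route. For~(A), the paper keeps the optimal solution $\countersAlt$ as is (with a possibly disconnected multi-cycle $M$), threads a fixed spanning cycle $C$ through all states, forms $C'$ by inserting $\sqrt{n}$ copies of $M$ into $C$, and iterates $C'$; the $\sqrt{n}$ granularity is precisely the balance that makes $n'\ell'/(\optimal n)\to 1$. You instead perturb $\countersAlt$ by $\epsilon$ times the count vector of a connecting closed walk and rescale, obtaining for every $\eta>0$ a feasible point with \emph{connected} support and value $\ge \optimal-\eta$; then a single Eulerian circuit of constant length can be iterated $\Theta(n)$ times with only $O(1)$ slack, and letting $\eta\to 0$ recovers the exact constant. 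Your approach avoids the $\sqrt{n}$ bookkeeping at the cost of an $\epsilon$-approximation step; the paper's buys a single fixed schedule that works without a limiting argument---and you even flag the paper's scheme as the alternative in your obstacle discussion. For~(B) the paper first argues, via the recession direction, that no open half-space $\calH_{\bn}$ with $\bn>\vec 0$ contains $\Inc$, then invokes a separate geometric lemma to extract simple cycles $\gamma_i$ with $\sum b_i\,\eff(\gamma_i)\ge\vec 0$, and finally builds a nested family $\pi_t$ iterated adaptively. You bypass the half-space lemma entirely: the recession direction already is an integer multi-cycle with $\updates\bw\ge\vec 0$, you split it into per-component Eulerian circuits $E_j$, glue them via short connectors into $S_a$, and balance the inner repetition $a$ against the outer count $K$. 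This is more direct and self-contained; the paper's detour through $\Inc$ and half-spaces pays off elsewhere in the paper (the same geometric viewpoint underlies the quasi-ranking-function analysis in Section~\ref{sec-poly}).
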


{\bf Intuition:} Let $\countersAlt$ be a rational solution of ($\rationalLP$) with $\oneVec^\top\countersAlt=\optimal$ and consider a non-negative integer $n\in \mathbb{N}$. 
Let $\counters \in \mathbb{Z}^{\trans}$ satisfy $\counters = n m \countersAlt$ where $m$ is the least common multiple of the denominators of $\countersAlt$. Since $\counters = nm\countersAlt \geq \zeroVec$ and $\flowMatrix \counters = \flowMatrix nm\countersAlt = \zeroVec$, the vector $\counters$ specifies a multi-cycle of length $\oneVec^\top \counters = c n m$. Moreover, $\counters$ satisfies
$\updates \counters = \updates n m \countersAlt \geq -n m \cdot \oneVec$ which means that executing all transitions of the multi-cycle cannot decrease the counters by more than $m n$. By executing cycles 
of the multi-cycle in a carefuly arranged order initiated in a $n$-bounded configuration, we obtain a zero-avoiding computation 
 whose length is, roughly, $c n$.

On the other hand, if the program ($\rationalLP$) is unbounded, we show that then there is a solution $\countersAlt$ satisfying $U\countersAlt\geq\zeroVec$. From this we obtain multi-cycles of arbitrary length whose overall effect is non-negative. Note that this does not mean that the VASS is non-terminating since the cycles need to be connected into a single computation. However, we show that they always can be connected into a computation of at least quadratic length.

{\bf Proof of Theorem~\ref{thm:precise-complexity} (A).}
Assume ($\rationalLP$) is bounded.
Let $\countersAlt \in \mathbb{Q}^{T}$ be an optimal solution.
We set $\optimal = \oneVec^\top\countersAlt$.
We first show the upper bound.
We fix some $\parameter$.
We consider the longest computation starting from some $\parameter$-bounded configuration.
Let $\paath$ be the path associated to this computation.
Because we are interested only in asymptotic behaviour, we can assume $\paath$ is a cycle.
Let $\counters_\paath(\transition)$ denote the number of occurrences of transition $\transition$ on $\paath$.
We note that $\updates \counters_\paath = \valueSum(\paath) \ge -\parameter \cdot \oneVec$ because the starting configuration of the considered worst-case computation is $\parameter$-bounded.
Because $\paath$ is a cycle, we have $\flowMatrix \counters_\paath = \zeroVec$.
Hence, $\frac{1}{\parameter} \cdot \counters_\paath$ is a feasible point of LP ($\rationalLP$) and we get $\oneVec^\top\frac{1}{\parameter} \cdot \counters_\paath \le \optimal$.
Thus, $\oneVec^\top\counters_\paath \le \optimal\parameter$.
Because this holds for all $\parameter$, we can conclude $\term(\parameter) \le \optimal\parameter$.

We show the lower bound.
We fix some $\parameter$.
Let $m$ be the least common multiple of the denominators of $\countersAlt$.
We set $\counters = \denominator \cdot \countersAlt \in \mathbb{Z}^{T}$.
We have $\counters \ge \zeroVec$, $\updates \counters \ge -\denominator \cdot \oneVec$, $\flowMatrix \counters = \zeroVec$ and $\oneVec^\top\counters = \optimal \denominator$.
We consider the multi-cycle $\multCycle$ associated to $\counters$.
Let $\cycle$ be some cycle of $\vass$ which visits each state at least once.
Let $\length$ be the length of $\cycle$.
Because $\cycle$ visits every state at least once we can combine $\cycle$ and $\sqrt{\parameter}$ copies of multi-cycle $\multCycle$ into a single cycle $\cycle'$.
Let $\length'$ be the length of $\cycle'$.
We have $\length' = \length + \sqrt{\parameter}\oneVec^\top\counters = \length + \sqrt{\parameter} \optimal \denominator$.
Let $\loc$ be the start and end state of $\cycle'$.
We set $\parameter' = \frac{\parameter - (\length + \sqrt{\parameter} 
	\optimal \denominator) \cdot\maxup{A} }{ \maxup{A} \cdot\length + 
	\denominator 
	\sqrt{\parameter}}$ (rounded down if needed).
Let $\val_0 = \parameter \cdot \oneVec$.
We show that starting from configuration $\loc\val_0$ we can $\parameter'$ times
execute the cycle $\cycle'$.
This is sufficient to establish $\term(\parameter) = \optimal \parameter$ because of $\frac{\parameter' \length'}{\optimal \parameter} \rightarrow 1$ for $\parameter \rightarrow \infty$.

We consider the configurations $\loc\val_i$ after $0 \le i < \parameter'$ executions of $\cycle'$.
We show by induction on $i$ that $\cycle'$ can be executed one more time.
We have $\valueSum(\cycle') = \valueSum(\cycle) + 
\sqrt{\parameter}\valueSum(\multCycle) = \valueSum(\cycle) + 
\sqrt{\parameter} \updates \counters \ge - (\maxup{A}\cdot \length + 
\denominator 
\sqrt{\parameter}) \cdot \oneVec$.
Hence, we have $\val_i \ge \parameter \cdot \oneVec - i (\maxup{A} 
\cdot\length + 
\denominator \sqrt{\parameter})\cdot \oneVec$.
We have to show that we can execute $\cycle'$ one more time.
In every step of $\cycle'$ we decrease each vector component by at most 
$\maxup{A}$.
Hence, we need to show $\val_i \ge \length'\cdot \maxup{A} \cdot \oneVec$.
Indeed, we have $\val_i \ge \parameter \cdot \oneVec - i (\maxup{A} \cdot
\length + \denominator \sqrt{\parameter}) \cdot \oneVec \ge (\length + 
\sqrt{\parameter} \optimal \denominator) \cdot \maxup{A} \cdot \oneVec$.

{\bf Proof of Theorem~\ref{thm:precise-complexity} (B).}
Assume ($\rationalLP$) is unbounded.
We will show that there is no open half-space $\calH_{\bn}$ of $\R^d$ such that $\bn > \zeroVec$ and  $\Inc \subseteq \calH_{\bn}$.
As we show later, this implies that the computational complexity of $\vass$ is at least quadratic.
From the theory of linear programming we know that there is a direction in which the polyhedron given by $\countersAlt \ge \zeroVec$, $\updates \countersAlt \ge -\oneVec$ and $\flowMatrix \countersAlt = \zeroVec$ is unbounded and which increases the objective function $\oneVec^\top\countersAlt$.
Hence, there is a $\countersAlt \ge \zeroVec$ with $\updates \countersAlt \ge \zeroVec$ and $\flowMatrix \countersAlt = \zeroVec$ and $\countersAlt(\transition) \ge 1$ for some $\transition \in T$.
We consider the multi-cycle $\multCycle$ extracted from the integer vector $\counters = m\countersAlt$ where $m$ is the common multiple of denominators in $\countersAlt$.
Assume now for the sake of contradiction that there is an open half-space $\calH_{\bn}$ of $\R^d$ such that $\bn > \zeroVec$ and $\Inc \subseteq \calH_{\bn}$.
Let $\cycle_1, \ldots,\cycle_k$ be all simple cycles occuring in $\multCycle$.
Because of $\Inc \subseteq \calH_{\bn}$ we have $\bn^\top\cdot \valueSum(\cycle_i) < 0$ for all $i$, and hence
\[
0 > \sum_i \bn^\top \valueSum(\cycle_i) = \bn^\top
\valueSum(\multCycle) = \bn^\top \updates \counters =
\bn^\top \updates m \countersAlt = m (\bn^\top \updates\counters)
\]
which implies $\bn^\top\updates\countersAlt<0$. 
On the other hand, we get $\bn^\top\cdot \updates \countersAlt \ge 0$ from $\bn > \zeroVec$ and $\updates \countersAlt \ge \zeroVec$.
A contradiction.

Now suppose there is no open half-space $\calH_{\bn}$ of $\R^d$ such that
$\bn > \vec{0}$ and  $\Inc \subseteq \calH_{\bn}$. We show that $\term(n) \in
\Omega(n^2)$, i.e., there exist $p \in Q$ and a constant $a \in \R^+$ such
that for all configurations $p \vec{n}$, where $n \in \N$ is sufficiently
large, there is a computation initiated in $p\vec{n}$ whose
length is at least $a \cdot n^2$.

The crucial point is that now there are
$\bv_1,\ldots,\bv_k \in \Inc$ and $b_1,\ldots,b_k \in \N^+$ such that $k \geq 1$ and

\begin{equation}\label{eq:growingCycles}
\sum_{i=1}^k b_i\bv_i \geq \vec{0}.
\end{equation}

The above is a direct consequence of the following purely geometric lemma (proved in appendix) with $X = \Inc_\A$.

\begin{lemma}
	\label{lem-no-halfspace}
	Let $X \subset \Z^d$ be a finite set. If there is no open\linebreak half-space 
	$\calH_{\bn}$ of\, $\R^d$ such that $\bn > \vec{0}$ and  $X \subseteq 
	\calH_{\bn}$, then there exist ${\bv_1,\ldots,\bv_k \in X}$ and 
	$b_1,\ldots,b_k \in \N^+$ such that $k \geq 1$ and $\sum_{i=1}^k b_i\bv_i 
	\geq \vec{0}$.
\end{lemma}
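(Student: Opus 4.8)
The plan is to argue by contraposition, phrased in the language of convex cones and separation. Suppose no nonnegative integer combination of elements of $X$ lies in $\N^d \setminus\{\vec 0\}$ — more precisely, suppose the conclusion fails, i.e.\ there is no choice of $\bv_1,\dots,\bv_k \in X$ and $b_1,\dots,b_k \in \N^+$ with $\sum_i b_i \bv_i \geq \vec 0$. Since the $b_i$ range over positive integers but the condition $\sum b_i \bv_i \ge \vec 0$ is invariant under positive rational (hence real) scaling, failure of the conclusion is equivalent to: $\Cone(X) \cap \R^d_{\geq 0} = \{\vec 0\}$, where $\Cone(X)$ is the (finitely generated, hence closed) convex cone generated by $X$. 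I would spell out this equivalence carefully: a real combination $\sum_{\bu\in X} c_\bu \bu \ge \vec 0$ with some $c_\bu>0$ can be perturbed/scaled to a positive-rational one and then cleared of denominators to a positive-integer one, using that $X$ is finite.

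Next I would invoke a separation argument to produce the half-space. The two convex cones $\Cone(X)$ and $\R^d_{\geq 0}$ intersect only in the origin; both are closed and $\R^d_{\geq 0}$ is pointed with nonempty interior. A standard separation theorem for cones then yields a hyperplane through the origin separating them: there is $\bn \in \R^d\setminus\{\vec 0\}$ with $\bn\cdot\bx \le 0$ for all $\bx\in\Cone(X)$ and $\bn\cdot\by \ge 0$ for all $\by \in \R^d_{\geq 0}$. The second condition forces $\bn \ge \vec 0$; the issue is that we need $\bn > \vec 0$ (strictly positive) and $\bn\cdot\bx < 0$ strictly for all $\bx \in X$, i.e.\ $X \subseteq \calH_{\bn}$ as an \emph{open} half-space. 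This strictness is the crux of the argument and where I expect the real work to be.

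To upgrade to strictness I would proceed in two stages. First, for the strict inequality on $X$: if some $\bv\in X$ satisfied $\bn\cdot\bv = 0$, then $\bv$ lies in the separating hyperplane; I would then recurse / refine the separating functional. Concretely, consider the face $\Cone(X)\cap \bn^\perp$; it is again a finitely generated cone meeting $\R^d_{\geq 0}$ only in $\vec 0$, generated by those $\bv\in X$ with $\bn\cdot\bv=0$. Applying the same separation to this lower-dimensional situation and taking a suitable positive combination $\bn' = \bn + \varepsilon\,\bn''$ of the old and new normals, for $\varepsilon>0$ small, makes $\bn'\cdot\bv<0$ strictly on all of $X$ while keeping $\bn'\ge\vec 0$; induction on $\dim\lspan(X)$ or on $|X|$ closes this stage. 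Second, for $\bn>\vec 0$: if $\bn$ has a zero coordinate, say $\bn(j)=0$, note that the unit vector $\bold e_j \in \R^d_{\geq 0}$ satisfies $\bn\cdot\bold e_j = 0$, so adding a small multiple of $\bold e_j$ to $\bn$ pushes that coordinate positive; since $X$ is finite and all the strict inequalities $\bn\cdot\bv<0$ are robust under small perturbations, a sufficiently small total perturbation $\bn + \delta\,\vec 1$ gives $\bn+\delta\vec 1 > \vec 0$ while still $\bn+\delta\vec1 \in \calH$ separates $X$ strictly. Combining both perturbations yields the desired strictly positive normal $\bn$ with $X\subseteq\calH_{\bn}$, contradicting the hypothesis.

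The main obstacle, as indicated, is handling the strictness/boundary cases cleanly: naive separation of two cones only gives a weak inequality and a merely nonnegative normal, whereas the lemma demands an \emph{open} half-space with a \emph{strictly positive} normal. The perturbation arguments above work because $X$ is finite (finitely many inequalities to preserve) and because $\R^d_{\geq 0}$ is full-dimensional (so we have room to tilt $\bn$ into its interior), but getting the bookkeeping right — ensuring the two successive perturbations do not undo each other — requires care; doing everything in one shot via a single LP feasibility argument (Farkas' lemma / Gordan's theorem applied to the system "$\bn > \vec 0$, $\bn\cdot\bv \le -1$ for $\bv\in X$") is an attractive alternative that packages all the strictness into one application of LP duality, and I would likely present that route if the inductive perturbation becomes unwieldy.
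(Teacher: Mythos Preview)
Your main approach has a genuine gap. The claimed equivalence ``failure of the conclusion $\Leftrightarrow$ $\Cone(X)\cap\R^d_{\geq 0}=\{\vec 0\}$'' is false: only the forward implication holds. Consider $X=\{(1,-1),(-1,1)\}\subset\Z^2$. Here $\Cone(X)$ is the line $\{(t,-t):t\in\R\}$, so $\Cone(X)\cap\R^2_{\geq 0}=\{\vec 0\}$; yet the conclusion of the lemma \emph{does} hold, since $1\cdot(1,-1)+1\cdot(-1,1)=\vec 0\geq\vec 0$. And indeed there is no open half-space $\calH_{\bn}$ with $\bn>\vec 0$ containing $X$: that would require $\bn_1<\bn_2$ and $\bn_2<\bn_1$ simultaneously. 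So the separation-plus-perturbation pipeline, which starts only from $\Cone(X)\cap\R^d_{\geq 0}=\{\vec 0\}$, cannot possibly manufacture the required strictly positive normal in general. Concretely, your recursive refinement stalls: the only separating normal is $\bn=(1,1)$, both elements of $X$ lie on $\bn^\perp$, so the ``lower-dimensional face'' $\Cone(X)\cap\bn^\perp$ equals $\Cone(X)$ itself and neither $|X|$ nor $\dim\lspan(X)$ decreases.

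What the cone reformulation loses is precisely the information that no \emph{nontrivial} nonnegative combination of $X$ equals $\vec 0$. The paper's proof handles this by a case split: if $X$ lies in some \emph{closed} half-space $\hat\calH_{\bn}$ with $\bn>\vec 0$ (but by hypothesis in no open one), it picks $\bn$ with a maximal strict part, takes $\bu\in X$ on the boundary, and shows $-\bu\in\Cone(X)$ via Farkas and a tilting argument, yielding a zero combination; if no such closed half-space exists, it strictly separates $\Cone(X)$ from a compact slab $B\subset\R^d_{\geq 0}$, which forces $\Cone(X)\cap B\neq\emptyset$ and hence a strictly nonnegative combination. Your fallback LP route (apply Farkas to the system $\bn\geq\vec 1$, $\bv^\top\bn\leq -1$ for $\bv\in X$) does in fact give exactly the right dual certificate $\bz\geq\vec 0$, $\bz\neq\vec 0$, $M\bz\geq\vec 0$ in one shot and would constitute a correct (and arguably cleaner) proof; but your primary plan, as written, does not.
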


As the individual simple cycles with effects
$\bv_1,\ldots,\bv_k$ may proceed through disjoint sets of states,
they \emph{cannot} be trivially concatenated into one large
cycle with non-negative effect. Instead, we fix a control state $p \in Q$ and
a cycle $\pi$ initiated in $p$ visiting \emph{all} states of~$Q$. Further, for every $1 \leq i \leq k$
we fix a simple cycle $\gamma_i$ such that $\eff(\gamma_i) = \bv_i$. For every
$t \in \N$, let $\pi_t$ be a cycle obtained from $\pi$ by inserting precisely
$t\cdot b_i$ copies of every $\gamma_i$, where $1 \leq i \leq k$. Observe
that the inequality~(\ref{eq:growingCycles}) implies
\begin{equation}\label{eq:iteratingCycles}
\eff(\pi_t) = \eff(\pi) + t\cdot \sum_{i=1}^k b_i\bv_i \geq \eff(\pi)  \quad \text{ for every } t \in \N.
\end{equation}
For every configuration
$p\bu$, let $t(\bu)$ be the largest $t \in \N$ such that $\pi_t$ is
executable in $p\bu$. If such a $t(\bu)$ does not exist, i.e. $\pi_t$ is executable in $p\bu$ for all $t\in
\N$, then $\A$ is non-terminating (since, e.g. $\bv_1$ must be non-negative
in such a case), and the proof is finished. Hence, we can assume that
$t(\bu)$ is well-defined for each $\bu$. Since the
cycles $\pi$ and
$\gamma_1,\ldots,\gamma_k$ have fixed
effects, there is $b \in \R^+$ such that for all configurations $p\bu$ where
all components of $\bu$ (and thus also $\size{p\bu}$) are above some
sufficiently large threshold $\xi$
we have that $t(\bu) \geq b \cdot
\size{p\bu}$, i.e. $t(\bu)$ grows asymptotically at least linearly with the
minimal component of $\bu$. Now, for every $n \in \N$, consider a computation
$\alpha(n)$ initiated in $p \vec{n}$ defined inductively as follows:
Initially, $\alpha(n)$ consists just of  $p\bu_0 = p \vec{n}$; if the
prefix of $\alpha(n)$ constructed so far ends in a configuration $p\bu_i$
such that $t(\bu_i) \geq 1$ and $\bu_i \geq \vec{\xi}$ (an
event we call a \emph{successful hit}), then
the prefix is prolonged by executing the
cycle $\pi_{t(\bu_i)}$ (otherwise, the construction of $\alpha(n)$ stops).
Thus, $\alpha(n)$ is obtained from $p\vec{n}$ by applying the inductive rule
$I(n)$ times, where $I(n) \in \N_{\infty}$ is the number of successful hits
before the construction of $\alpha(n)$ stops.
Denote by $p\bu_i$ the configuration visited by $\alpha(n)$ at $i$-th
successful hit.
Now the inequality~(\ref{eq:iteratingCycles}) implies that
$\bu_i \geq \vec{n} + i \cdot \eff(\pi)$, so there exists a constant $e$ such
that $\size{p\bu_i}\geq n - i \cdot e$. In
particular the decrease of all components
of $\bu_i$ is at most linear in~$i$. This means that $I(n) \geq c \cdot n$
for all sufficiently large $n \in \N$, where $c \in \R^+$ is a suitable
constant. But at the same time, upon each successful hit we have $\bu_i \geq
\vec{\xi}$, so the length of the segment beginning with the $i$-th successful hit and
ending with the $(i+1)$-th hit or with the last configuration of $\alpha(n)$
is at least $b\cdot \size{p\bu_i}\geq b\cdot( n-i\cdot e)$.
Hence, the length of $\alpha(n)$ is at least $\sum_{i=1}^{c\cdot n}b\cdot
(n-i\cdot e)$, i.e. quadratic.
\qed

Finally, let us consider an arbitrary VASS $\A$, not necessarily strongly connected.
The following lemma allows us to characterize the linear complexity of termination for $\A$ by applying
Theorem~\ref{thm:precise-complexity} to its strongly connected components. A~proof is straightforward.
\begin{lemma}
	\label{lem-SCC}
	Let $d \in \N$, and let $\A = \ce{Q,T}$ be a $d$-dimensional VASS. Then $\term(n) \in  \bigO(n)$ iff $\term_R(n) \in \bigO(n)$ for every SCC $R$ of $Q$, where $\term_R(n)$ is the termination complexity of $\A_R$.
\end{lemma}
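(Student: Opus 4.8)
The plan is to exploit the DAG structure of the condensation of $\A$ into strongly connected components. The forward implication will be immediate; for the reverse one, all the work is in bounding how much the counters can grow as a computation passes through the chain of SCCs.

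For the forward direction I would argue that for any (nontrivial) SCC $R$, $\A_R$ is a sub-VASS of $\A$, so every computation of $\A_R$ initiated in a configuration $p\bv$ with $p\in R$ is also a computation of $\A$ initiated in $p\bv$ (the transitions of $\A_R$ are transitions of $\A$ and the non-negativity requirement is identical). Hence $\plen_{\A_R}(p\bv)\le\plen_{\A}(p\bv)$, and taking suprema over $n$-bounded configurations gives $\term_R(n)\le\term(n)$; so $\term\in\bigO(n)$ forces $\term_R\in\bigO(n)$ for every $R$ (trivial one-state SCCs contribute $\term_R\equiv 0$).

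For the reverse direction, assume $\term_R\in\bigO(n)$ for every SCC $R$; since there are finitely many SCCs, I would fix constants $a,c$ (depending only on $\A$) with $\term_R(m)\le a m+c$ for all $m$ and all nontrivial SCCs $R$. Given an $n$-bounded configuration and a computation $\sigma$ from it, I would split the underlying path as $\sigma=\sigma_1\odot\tau_1\odot\cdots\odot\tau_{N-1}\odot\sigma_N$, where each $\sigma_j$ stays inside one SCC $R_j$ and each $\tau_j$ is a single transition leaving $R_j$ for a strictly later SCC in the condensation order; since the condensation is a DAG, $R_1,\dots,R_N$ is a simple path in it, so $N\le|Q|$. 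With $L_j$ the length of $\sigma_j$, $S_j=L_1+\cdots+L_j$ and $K=\maxup{\A}$: every step changes each counter by at most $K$ and the initial counters are $\le n$, so $\sigma_j$ starts in a configuration of size $\le n+(S_{j-1}+|Q|)K$. Trivial SCCs give $L_j=0$; otherwise $\sigma_j$ is a computation of $\A_{R_j}$ from that configuration, so $L_j\le\term_{R_j}\!\left(n+(S_{j-1}+|Q|)K\right)\le aK\cdot S_{j-1}+B$ with $B=an+a|Q|K+c$ linear in $n$. This yields the recurrence $S_j\le(1+aK)S_{j-1}+B$, and unrolling from $S_0=0$ bounds $S_N$ by $\kappa\cdot B$ for a constant $\kappa$ depending only on $\A$; hence $|\sigma|=S_N+(N-1)\le\kappa B+|Q|$ is linear in $n$, uniformly in $\sigma$ and in the chosen initial configuration, so $\term(n)\in\bigO(n)$.

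The hard part will be this counter-growth bookkeeping: a crude argument could let the counters, and hence the segment lengths, blow up exponentially along the chain of SCCs, which would destroy linearity. The linear recurrence above is exactly what keeps the blow-up linear, using that there are only $\le|Q|$ segments and that each segment is only linearly long in its own starting counter value. I would also spell out the constants explicitly (they depend only on $\A$) and treat the trivial one-state SCCs separately, since for those $\A_R$ need not even be a VASS but the corresponding segments have length $0$.
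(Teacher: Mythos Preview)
Your argument is correct. The paper does not give a proof at all beyond the remark ``A~proof is straightforward,'' so there is nothing substantive to compare against; your proposal simply supplies the details the authors omitted, and the DAG-of-SCCs decomposition together with the linear recurrence $S_j\le(1+aK)S_{j-1}+B$ (unrolled over at most $|Q|$ segments) is exactly the natural way to make that ``straightforward'' claim rigorous.
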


\begin{corollary}
	\label{coro:lin-polytime}
The problem whether the termination complexity of a given
   $d$-dimensional VASS is linear is solvable in time polynomial in the size
   of~$\A$.
\end{corollary}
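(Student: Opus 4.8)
The plan is to turn the characterization of Theorem~\ref{thm:precise-complexity} into an algorithm by combining it with the SCC reduction of Lemma~\ref{lem-SCC}. First I would compute the decomposition of $\A$ into strongly connected components in time linear in $|Q|+|T|$ (e.g.\ by Tarjan's algorithm); the states of $Q$ that lie on no cycle of $\A$ form trivial (singleton) components, and there are at most $|Q|$ components overall. By Lemma~\ref{lem-SCC} we have $\term(n)\in\bigO(n)$ iff $\term_R(n)\in\bigO(n)$ for every SCC $R$, so it suffices to give a polynomial-time test for a single component and run it for each.

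Fix an SCC $R$. If $R$ is trivial, then $\term_R$ is bounded by a constant and the test passes. Otherwise $\A_R$ is a strongly connected VASS, and I would construct its oriented incidence matrix $\flowMatrix_R$ and update matrix $\updates_R$ (immediate from the transition list) and form the rational LP $(\rationalLP)$ of Section~\ref{sec-linear} for $\A_R$. This LP is always feasible, since $\countersAlt=\zeroVec$ meets all three constraint families; hence the only question is whether its supremum $\oneVec^\top\countersAlt$ is finite (the LP is bounded) or $+\infty$ (unbounded). Deciding boundedness of a feasible rational LP takes time polynomial in the bit-size of its coefficients by polynomial-time linear programming (e.g.\ the ellipsoid method), and every coefficient of $(\rationalLP)$ is either in $\{-1,0,1\}$ (the entries of $\flowMatrix_R$) or is an entry of $\updates_R$, i.e.\ a counter update of $\A$; thus the LP has size polynomial in $\size{\A}$. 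By Theorem~\ref{thm:precise-complexity}(A), if $(\rationalLP)$ is bounded with optimum $\optimal$ then $\term_R(n)\to\optimal\,n$, so $\term_R(n)\in\bigO(n)$; by Theorem~\ref{thm:precise-complexity}(B), if it is unbounded then $\term_R(n)\in\Omega(n^2)$, so $\term_R(n)\notin\bigO(n)$. Hence the test ``$R$ is trivial, or $(\rationalLP)$ for $\A_R$ is bounded'' holds exactly when $\term_R(n)\in\bigO(n)$.

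Putting the pieces together, $\A$ has linear termination complexity iff this test succeeds for every SCC of $\A$, which is decidable in time polynomial in $\size{\A}$. When it does succeed, each $\A_R$ is terminating (its complexity is finite), hence $\A$ itself is terminating (an infinite computation of $\A$ would eventually stay within a single SCC, contradicting termination of that $\A_R$), and then the remark after Definition~\ref{def-VASS-compl} gives $\term(n)\in\Omega(n)$, so in fact $\term(n)\in\Theta(n)$; when the test fails, Lemma~\ref{lem-SCC} gives $\term(n)\notin\bigO(n)$. So the algorithm answers ``yes'' precisely for the VASS with linear termination complexity.

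I expect no deep obstacle here --- this is essentially bookkeeping on top of Theorem~\ref{thm:precise-complexity} --- but three points need care: (i) it is \emph{boundedness} of $(\rationalLP)$ that must be decided, not the exact value of its optimum; (ii) one must check that $(\rationalLP)$ stays of polynomial size even though counter updates are written in binary; and (iii) the observation that unboundedness of $(\rationalLP)$ for a \emph{single} component already forces super-linear complexity of the whole VASS, which is exactly the content of Lemma~\ref{lem-SCC} together with the fact that a worst-case computation may be confined to that component.
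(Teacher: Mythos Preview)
Your proposal is correct and follows essentially the same route as the paper: reduce to SCCs via Lemma~\ref{lem-SCC}, observe that the LP $(\rationalLP)$ is always feasible (take $\countersAlt=\zeroVec$), and decide boundedness of $(\rationalLP)$ in polynomial time, invoking Theorem~\ref{thm:precise-complexity} for the dichotomy. You are simply more explicit than the paper about the handling of trivial SCCs, the bit-size of the LP, and the passage from $\bigO(n)$ to $\Theta(n)$.
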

\section{Polynomial termination time}
\label{sec-poly}

We now concentrate on VASS with polynomial termination complexity.
For simplicity, we restrict ourselves to \emph{strongly connected VASS}. The
general case is discussed at the end of the section.

A prominent notion in our analysis is the one of a \emph{ranking} function for
VASS.
Let $\A = \ce{Q,T}$ be a
VASS. A \emph{linear map} for $\A$ is a function $f$ assigning rational numbers
to
configurations of $\A$ s.t. there exists a vector $\bolden{c}_f$ and a
weighting
vector $\weights_f\in \Q^{\locs}$ such that for each configuration
$p\bv$ of $\A$ it holds $f(p\bv)=\bolden{c}_f^\top\cdot \bv + \weights_f(p)$. The
vector $\bolden{c}_f$ is called a \emph{normal} of $f$. Given
a linear map $f$, we say that a transition $(p,\bu,q)$ of $\A$ is $f$-ranked if
$\bolden{c}_f^\top \cdot \bu + \weights_f(q) \leq \weights_f(p) - 1$ and $f$-neutral if
$\bolden{c}_f^\top \cdot \bu + \weights_f(q) = \weights_f(p)$. A linear map $f$ is a
\emph{quasi-ranking} function (QRF) for $\A$ if $\bolden{c}_f \geq
\vec{0}$ and if all transitions of $\A$ are either
$f$-ranked or $f$-neutral, and a \emph{ranking} function (RF) if $\bolden{c}_f \geq
\vec{0}$ and all transitions of $\A$ are $f$-ranked.
A quasi-ranking function $f$ is positive if each component of
$\bolden{c}_f$ is positive. Note that in the language of update and incidence
matrices $\updates$ and $\flowMatrix$ the conditions can be phrased as follows:
a linear map
$f$ is a QRF if and only if $\bc_f \geq 0$ and $\bc_f^{\top}\cdot \updates
-\bw_f^\top\cdot \flowMatrix \leq \vec{0}^\top$ such that if there is a negative number in some column, it is $\leq -1$.
Similarly, a linear map
$f$ is a RF if and only if $\bc_f \geq 0$ and $\bc_f^{\top}\cdot \updates
-\bw_f^\top\cdot \flowMatrix \leq \vec{-1}^\top $.

The existence of ranking functions is already tightly connected to the question
whether a given VASS has linear complexity, as shown in the following theorem.

\begin{theorem}
\label{thm:onedim_rf}
A VASS $\A$ has a linear termination complexity if and only if there exists a
ranking function for $\A$.
\end{theorem}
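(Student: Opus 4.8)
The plan is to prove the two directions of the equivalence separately, and in each direction to pass through LP~($\rationalLP$) and Theorem~\ref{thm:precise-complexity}, since that theorem already ties linear complexity to the existence of a suitable solution of~($\rationalLP$). First I would treat general (not necessarily strongly connected) $\A$ by reducing to SCCs via Lemma~\ref{lem-SCC}: a ranking function for $\A$ restricts to a ranking function for each SCC $\A_R$ (the defining inequalities $\bc_f^\top\bu + \weights_f(q) \le \weights_f(p)-1$ only get easier when we drop transitions), and conversely, ranking functions for the SCCs can be patched into one for $\A$ by adding a large location-dependent offset that decreases strictly along the finitely many cross-SCC transitions — using that the SCC DAG is acyclic, so we can layer the offsets. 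Thus it suffices to prove the theorem for strongly connected $\A$.

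For the ``ranking function $\Rightarrow$ linear'' direction: suppose $f$ is an RF for strongly connected $\A$, so $\bc_f \ge \vec 0$, $\weights_f \in \Q^Q$, and every transition is $f$-ranked. Given an $n$-bounded configuration $p\bv$, along any computation $p_0\bv_0,\dots,p_m\bv_m$ the value $f(p_i\bv_i) = \bc_f^\top \bv_i + \weights_f(p_i)$ starts at most at $\bc_f^\top \vec n + \max_q \weights_f(q) = O(n)$, is bounded below by $\min_q \weights_f(q)$ (here I use $\bv_i \ge \vec 0$ and $\bc_f \ge \vec 0$, so $\bc_f^\top \bv_i \ge 0$), and strictly decreases by at least $1$ at every step. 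Hence $m = O(n)$, i.e. $\term(n) \in \bigO(n)$, and since $\A$ is terminating we get $\term(n)\in\Theta(n)$ by the remark after Definition~\ref{def-VASS-compl}.

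For the harder converse, ``linear $\Rightarrow$ ranking function'': assume $\term(n) \in \bigO(n)$, so in particular $\A$ is terminating. By Theorem~\ref{thm:precise-complexity}(B), LP~($\rationalLP$) cannot be unbounded (unboundedness forces at-least-quadratic complexity), so LP~($\rationalLP$) is bounded; it is feasible (take $\countersAlt = \zeroVec$), hence has an optimal value $\optimal \in \Q$. I would now take the \emph{dual} of~($\rationalLP$): the primal maximizes $\oneVec^\top\countersAlt$ subject to $\countersAlt \ge \zeroVec$, $\updates\countersAlt \ge -\oneVec$, $\flowMatrix\countersAlt = \zeroVec$, and its dual has variables $\bc \ge \vec 0$ (for the rows $\updates\countersAlt \ge -\oneVec$) and $\bw$ free (for $\flowMatrix\countersAlt = \zeroVec$), minimizing $-\oneVec^\top\bc$ subject to the transposed constraint $\bc^\top \updates - \bw^\top \flowMatrix \ge \oneVec^\top$, equivalently $-\bc^\top\updates + \bw^\top\flowMatrix \le -\oneVec^\top$. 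Reading off the definitions in this section, a dual feasible point $(\bc,\bw)$ with $\bc \ge \vec 0$ is \emph{exactly} a linear map $f$ with $\bc_f = \bc$, $\weights_f = \bw$ satisfying $\bc_f^\top\updates - \bw_f^\top\flowMatrix \le \vec{-1}^\top$, i.e. precisely a ranking function for $\A$. Strong LP duality guarantees the dual is feasible (and its optimum equals $\optimal$), so a ranking function exists, completing the proof.

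The main obstacle I anticipate is bookkeeping the sign/orientation conventions so that the dual constraint $\bc^\top\updates - \bw^\top\flowMatrix \ge \oneVec^\top$ lines up verbatim with the column-wise characterization of a ranking function given right before the theorem (``$\bc_f \ge 0$ and $\bc_f^\top\updates - \bw_f^\top\flowMatrix \le \vec{-1}^\top$''); in particular one must check that the contribution of $\weights_f$ to the transition $(p,\bu,q)$, namely $\weights_f(q) - \weights_f(p)$, is indeed $-(\bw^\top\flowMatrix)$ in the column of that transition given the sign convention for $\flowMatrix$, and handle self-loops (zero column of $\flowMatrix$) consistently. The SCC-patching step is routine but also needs care that the added offsets do not spoil the $f$-ranked/$f$-neutral status of intra-SCC transitions — which is immediate since a constant shift of $\weights_f$ on a whole SCC cancels in $\weights_f(q) - \weights_f(p)$ for $p,q$ in the same SCC.
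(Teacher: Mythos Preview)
Your core argument---take the dual of LP~($\rationalLP$) and observe that a dual feasible point is precisely a ranking function---is the same as the paper's proof, and it is correct for strongly connected $\A$ (which is the standing assumption of Section~\ref{sec-poly}). Two remarks:

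\textbf{The sign bookkeeping you flagged is indeed off.} The dual objective is $\min \oneVec^\top\bc$ (not $-\oneVec^\top\bc$), and the dual constraint comes out as $\bc^\top\updates - \bw^\top\flowMatrix \le -\oneVec^\top$, not $\ge \oneVec^\top$. You land on the right RF condition in the end, but the intermediate ``equivalently'' is the opposite of what you wrote. This is cosmetic once the signs are fixed. Your direct argument for ``RF $\Rightarrow$ linear'' is a slight variation on the paper (the paper closes the loop via duality: RF $\Rightarrow$ dual feasible $\Rightarrow$ primal bounded $\Rightarrow$ linear), but your version is equally valid and arguably cleaner.

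\textbf{The SCC reduction is not just unnecessary, it cannot work.} The theorem is stated under the section's blanket strong-connectedness assumption, and in fact it is \emph{false} for general VASS. Consider $d=2$, states $q_1,q_2$, a self-loop at $q_1$ with update $(1,-1)$, a self-loop at $q_2$ with update $(-1,1)$, and a transition $q_1\to q_2$ with update $(0,0)$. Each SCC has an RF and the whole VASS is linear ($\term(n)\le 3n+1$), but there is no global RF: the two self-loops force $\bc(1)-\bc(2)\le -1$ and $\bc(2)-\bc(1)\le -1$ simultaneously. So your patching step---``combine the per-SCC RFs by adding DAG-level offsets''---cannot succeed in general, because the obstacle is the \emph{normal} $\bc_f$, not the weights $\weights_f$: different SCCs may require incompatible normals, and shifting $\weights_f$ per SCC does nothing to reconcile them. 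Drop the first paragraph and work under the strong-connectedness hypothesis; the rest of your proof goes through.
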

\begin{proof}[Proof.]
Consider the LP $\rationalLP$ from Theorem~\ref{thm:precise-complexity}. Its
	dual LP is the LP $\rationalLPdual$ pictured in
	Figure~\ref{fig:dual-linear}.
\begin{figure}
	\begin{center}
		\begin{tabular}{|c|}
			\hline
			
			{\begin{minipage}[c]{0.5\linewidth}
					
					\vspace{0.2cm}
					$\min \by_U^\top \cdot \oneVec$  with
					\begin{align*}
					\by_{\updates}^\top \cdot \updates - \by_\flowMatrix^\top
					\cdot \flowMatrix & \leq -\oneVec^\top\\
					\by_\updates & \geq \zeroVec
					\end{align*}
				\end{minipage}}\\
				\hline
			\end{tabular}
		\end{center}

\caption{The rational LP $\rationalLPdual$ that is dual to $\rationalLP$. Here
the variables are vectors
$\by_\updates \in
\Q^d$ and
		$\by_\flowMatrix \in \Q^\locs$.}
\label{fig:dual-linear}
\end{figure}
		
		The dual LP has a feasible solution if and only if the original LP has
		an optimal solution (since it always has a feasible solution) and that
		is if and only if the VASS $\A$ is linear (due to
		Theorem~\ref{thm:precise-complexity}).
		Assume there exists a feasible solution. Let $f$ be a function such that
		\[
		f(p\bv) \quad=\quad \by_\updates^\top\cdot \bv + \by_\flowMatrix(p)
		\]
		i.e., $\bc_f = \by_\updates$ and $\weights_f(p) = \by_\flowMatrix(p)$.
		From the constraints of the dual LP we obtain for any transition
		$(p,\bu,q)$
		\[
		\bc_f^\top \cdot \bu - (\weights_f(p) - \weights_f(q)) \leq -1,
		\]
		i.e. $f$ is a RF.
		Conversely, let $f$ be any RF. Then $\by_\updates = \bc_f$,
		$\by_\flowMatrix = \weights_f$ is a feasible solution for the dual LP.
\end{proof}

Below, we show that complexity of general VASS $\A$ is highly influenced by
properties of
normals of QRFs for $\A$. In particular, we classify each VASS $\A$ into one of
three types:

\begin{compactitem}
	\item[(A)] Non-terminating VASS.
	\item[(B)] Positive normal VASS: Terminating VASS $\A$ for which there
	exists a QRF $f$ s.t. each
	component of the normal $\bc_f$ is positive.
	\item[(C)] Singular normal VASS: Terminating VASS $\A$ for which there
	exists a QRF $f$ for $\A$ s.t. each
	the normal $\bc_f$ is non-negative
	and~(B) does not hold.
\end{compactitem}

\textbf{Results.} We perform our complexity analysis on top of the above
classification. We show that each non-trivial
type (B)
VASS of dimension $d$ has termination complexity
 in $\Theta(n^k)$, where $1\leq k \leq d$ is an integer. Condition (C) is not
strong enough to guarantee polynomial termination complexity, and hence
singularities in the QRF normals are the key reason for complex asymptotic
bounds in VASS. On the algorithmic front, we present a polynomial-time algorithm
which classifies VASS into one of the above classes. Moreover, for type (B)
VASS the
algorithm also computes the degree $k$ such that the termination complexity of
the VASS is $\Theta(n^k)$. Hence, we give a complete complexity classification
of type (B)
VASS. For type (C) VASS, the algorithm returns a valid lower bound: a $k$
such that the termination complexity is $\Omega(n^k)$ (in general, such a bound
does not have to be tight). In the following, we first present the algorithm
and then formally state and prove its properties, which establish the above
results.

Theorem~\ref{thm:onedim_rf} gives complete classification of linear complexity VASS. Note that the ranking function doesn't have to be positive. The following lemma shows that every linear VASS is actually of type~(B).
\begin{lemma}
Let $\A$ be a VASS. There exists a ranking function for $\A$ if and only if there exists a positive ranking function for $\A$.
\end{lemma}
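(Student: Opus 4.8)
The plan is to show that a ranking function can always be turned into a positive ranking function by perturbation. The ``if'' direction is trivial, since a positive ranking function is in particular a ranking function. For the ``only if'' direction, suppose $f$ is a ranking function for $\A$, so $\bc_f \geq \vec{0}$ and $\bc_f^{\top}\cdot \updates - \bw_f^{\top}\cdot \flowMatrix \leq \vec{-1}^{\top}$. The idea is to add a small positive vector to the normal and correct the weighting vector accordingly. Since $\A$ is strongly connected, there is a QRF $g$ whose normal $\bc_g$ is strictly positive in every coordinate --- this is exactly the existence of a positive QRF, which comes for free from the structure of strongly connected VASS (one can take $g$ obtained from any cycle that visits all states, or more directly: the all-ones normal with a suitable weighting witnesses a QRF because each column of $\bc_g^{\top}\updates - \bw_g^{\top}\flowMatrix$ can be made $\leq \vec{0}$ by choosing $\bw_g$ along a spanning cycle). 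Actually, the cleanest route is: first establish the existence of a positive QRF $g$, then for small enough $\varepsilon > 0$ the linear map $f_\varepsilon$ with normal $\bc_f + \varepsilon \bc_g$ and weighting $\bw_f + \varepsilon \bw_g$ still satisfies the strict inequality $\bc_{f_\varepsilon}^{\top}\updates - \bw_{f_\varepsilon}^{\top}\flowMatrix \leq \vec{-1}^{\top}$, because the ``$-1$'' slack in $f$'s constraints absorbs the $\varepsilon$-perturbation coming from $g$'s non-strict constraints.

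First I would prove the auxiliary claim that a strongly connected VASS always admits a positive QRF. For this, fix a cycle $C$ visiting every state of $Q$, and let $\bc = \vec{1}$. Define $\bw(p)$ for $p \in Q$ by walking along $C$ and decrementing appropriately so that along each transition $(p,\bu,q)$ used by $C$ we have $\vec{1}^{\top}\bu + \bw(q) \leq \bw(p)$; since $C$ returns to its start, consistency of these inequalities follows from $\vec{1}^{\top}\eff(C)$ being a constant that can be offset. More carefully: since every transition lies on some simple cycle (by strong connectedness) and simple cycle effects are bounded, one can take $\bw(p)$ to be $M$ times the graph-distance-based potential for $M$ large; then every transition is either $g$-ranked or $g$-neutral. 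This gives a positive QRF $g = (\vec{1}, \bw)$.

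Next, given the ranking function $f$ and the positive QRF $g$, set $f_\varepsilon := f + \varepsilon g$, i.e. $\bc_{f_\varepsilon} = \bc_f + \varepsilon\vec{1}$ and $\bw_{f_\varepsilon} = \bw_f + \varepsilon \bw_g$. Since $\bc_f \geq \vec{0}$, the normal $\bc_{f_\varepsilon}$ is strictly positive for every $\varepsilon > 0$. For the ranking condition, for each transition $(p,\bu,q)$ we compute
\[
\bc_{f_\varepsilon}^{\top}\bu + \bw_{f_\varepsilon}(q) - \bw_{f_\varepsilon}(p) = \bigl(\bc_f^{\top}\bu + \bw_f(q) - \bw_f(p)\bigr) + \varepsilon\bigl(\bc_g^{\top}\bu + \bw_g(q) - \bw_g(p)\bigr) \leq -1 + \varepsilon \cdot K,
\]
where $K = \maxup{\A}\cdot d + 2\max_p|\bw_g(p)|$ bounds the $g$-term uniformly over the finitely many transitions. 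Choosing $\varepsilon \leq 1/(K+1)$ (and rescaling $f_\varepsilon$ by a constant to restore the ``$\leq -1$'' form exactly, or noting that any negative slack can be scaled to $\leq -1$) yields a positive ranking function.

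The main obstacle is the auxiliary claim that a strongly connected VASS has a positive QRF; the perturbation argument itself is routine once that is in hand. One must make sure the weighting vector $\bw_g$ can indeed be chosen consistently so that no transition is $g$-increasing --- this is where strong connectedness is essential, and the argument parallels the construction of potentials along a spanning closed walk. If one prefers to avoid constructing $g$ by hand, an alternative is an LP-duality argument mirroring Theorem~\ref{thm:onedim_rf}: strong connectedness forces the relevant polyhedron to be pointed/bounded in the right directions so that a strictly-interior normal exists; but the explicit cycle-based construction is more elementary and self-contained.
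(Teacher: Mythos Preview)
Your auxiliary claim --- that every strongly connected VASS admits a positive QRF --- is false, and this is the heart of the gap. Take the one-state VASS with a single self-loop of update $(1)$. It is strongly connected, but for any normal $c>0$ the unique transition has $c\cdot 1 = c>0$, so it is neither ranked nor neutral; no positive QRF exists. Your sketched construction cannot close the cycle consistently precisely in such cases: walking along a spanning cycle $C$ and setting $\bw$ to satisfy $\vec 1^\top\bu+\bw(q)\le\bw(p)$ forces $\vec 1^\top\eff(C)\le 0$ upon return to the start, which nothing guarantees; and for transitions off $C$ the ``$M$ times a graph-distance potential'' idea is not a valid argument.

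The fix is to notice that you never actually use the QRF property of $g$ in the perturbation step: your bound $K$ is a crude uniform bound on $\bc_g^\top\bu+\bw_g(q)-\bw_g(p)$ over the finitely many transitions, which holds for \emph{any} linear map $g$, QRF or not. So just take $\bc_g=\vec 1$, $\bw_g=\vec 0$; then the $g$-term is $\vec 1^\top\bu\le d\cdot\maxup{\A}$, and your perturb-and-rescale argument goes through. This is exactly the paper's proof, which does it even more directly: it sets $\bc_g=2\bc_f+\vec\epsilon$ and $\bw_g=2\bw_f$, where $\epsilon>0$ is small enough that $\vec\epsilon\cdot\bu\le 1$ for every transition; the doubling creates slack $-2$, and the $\vec\epsilon$-perturbation eats at most $1$ of it, so no final rescaling is needed. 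In short, drop the auxiliary claim entirely --- it is both false and unnecessary.
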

\begin{proof}
One direction is trivial. For the other, assume we have some ranking function $f$ for $\A$.
Then for any transition $t = (p,\bu,q)$ we have $\bc_f \cdot \bu + \bw_f(q) \leq \bw_f(p) - 1$.

Let $\epsilon > 0$ be such that every transition $(p,\bu,q)$ we have
$\vec{\epsilon} \cdot \bu \leq 1$
(there are only finitely many transitions so such $\epsilon$ must exist).
We define a linear map $g$ as follows
\[
\bc_g = 2\bc_f + \vec{\epsilon} \quad \text{and} \quad \bw_g = 2\bw_f.
\]
Then for any transition $(p,\bu,q)$ we have
\begin{align*}
\bc_g \cdot \bu + \bw_g(q) &= 2\bc_f \cdot \bu + \vec{\epsilon}\cdot \bu + 2\bw_f(q) \leq\\
& 2\bw_f(p) - 2 + \vec{\epsilon}\cdot \bu \leq 2\bw_f(p) - 1 = \bw_g(p) - 1.
\end{align*}

Therefore, $g$ is a positive RF.
\end{proof}

\begin{algorithm}[t]
	\SetKwInOut{Input}{input}\SetKwInOut{Output}{output}
	\SetKwFunction{dec}{Decompose}
	\SetKwProg{proc}{procedure}{}{}
	\DontPrintSemicolon
	
	\Input{A strongly connected $d$-dimensional VASS $\A=(\locs,\trans)$ with
	at
		least one transition.}
	\Output{A tuple $(k,\precise)\in
	\{1,2,\dots,d\}\times\{\mathit{true},\mathit{false}\}$, or
	``non-terminating''.}
	
	\lIf{$\exists$ positive QRF for $\A$}{$\precise \leftarrow \mathit{true}$}\label{alg-line-1}
	\lElse{$\precise \leftarrow \mathit{false}$}\label{alg-line-2}
	$k:=$\dec{$\A$}\label{alg-line3}\\
	\lIf{$k= \infty$}{\Return{''non-terminating''}}\lElse{
		\Return{($k,\precise$)}}
	\vspace{1mm}
	
	\proc{\dec{$\A$}}{
		{$f\leftarrow$ a QRF for $\A$ maximizing the no. of $f$-ranked
		transitions}\label{alg-line-7}\\
		$T_f\leftarrow \{\text{$f$-neutral transitions of $\A$ } \}$\label{alg-line-8}	\\
		\lIf{$T_f$ contains all transitions of
			$\A$}{\Return{$\infty$}}
		\lIf{$T_f = \emptyset$}{\Return{$1$}}\label{alg-line-10}
		$\A_1,\dots,\A_\ell \leftarrow \text{ all SCCs of $\A_{T_f} $ }$\label{alg-line-11}\\
		\Return{$1+\max($\dec{$\A_1$}$,\dots,$\dec{$\A_\ell$}$ ) $ }\label{alg-line-12}
	}
	
	\caption{Computing polynomial upper/lower bounds on the termination
		complexity of $\A$.}
	\label{algo:main}
\end{algorithm}

\textbf{Algorithm.} Our method is formalized in Algorithm~\ref{algo:main}. In
the algorithm, for a
VASS $\A = (Q,T)$ and $T' \subset T$, we denote by $\A_{T'} = (Q,T')$ a
pair obtained from $\A$ by removing all transitions not belonging to $T'$. Note that this may not be a VASS (since some state doesn't have to have an outgoing transition). An SCC of $\A_{T'}$ is a maximal strongly connected VASS in $\A_{T'}$. We
now formally state the properties of the algorithm, starting with bounds on its
running time.

\begin{theorem}
	\label{thm:termination-dimension}
	Algorithm~\ref{algo:main} runs in time polynomial in $\size{\vass}$.
	In particular, when called on a VASS of dimension $d$, the overall depth of
	recursion is $< d$.
\end{theorem}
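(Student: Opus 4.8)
The plan is to analyze the two sources of computational cost separately: the cost of a single ``top-level'' operation (deciding existence of a positive QRF, and computing a QRF maximizing $f$-ranked transitions) and the cost of the recursive structure of \texttt{Decompose}. First I would observe that line~\ref{alg-line-1} asks whether there is a positive QRF, which amounts to feasibility of a linear program: by the matrix characterization given just before Theorem~\ref{thm:onedim_rf}, $f$ is a QRF iff $\bc_f \ge 0$ and $\bc_f^\top\updates - \bw_f^\top\flowMatrix \le \vec 0^\top$ with every negative column entry $\le -1$; positivity adds the strict constraint $\bc_f > \vec 0$, which can be made a non-strict LP in the standard way (introduce a fresh variable $\delta \ge 0$, require $\bc_f \ge \delta\oneVec$ and maximize $\delta$, then check whether the optimum is $>0$, or equivalently use an $\varepsilon$ shift as in the ``positive ranking function'' lemma). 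Such an LP has size polynomial in $\size{\vass}$ and is solvable in polynomial time (e.g. by Khachiyan's ellipsoid method or Karmarkar), so line~\ref{alg-line-1} is polynomial.

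Next I would handle line~\ref{alg-line-7}: computing a QRF $f$ maximizing the number of $f$-ranked transitions. The set of ``$f$-ranked or $f$-neutral'' constraints is linear, but ``ranked'' vs.\ ``neutral'' is a disjunction, so maximizing the count is not a single LP. The key combinatorial fact I would use is that the set $S \subseteq T$ of transitions that \emph{can} be simultaneously $f$-ranked by \emph{some} QRF is determined by a single LP: a transition $t$ can be $f$-ranked for some QRF iff the LP asserting ``$f$ is a QRF and $t$ is $f$-ranked'' is feasible, and moreover the maximal such set is achieved simultaneously — if $t_1$ is ranked by QRF $f_1$ and $t_2$ by QRF $f_2$, then $f_1+f_2$ (which is again a QRF, since the constraints are closed under addition up to the ``$\le -1$'' normalization, handled by scaling) ranks both. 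Hence the maximal set $T_{\mathit{ranked}}$ of $f$-rankable transitions equals $\{t : t \text{ is $f$-rankable individually}\}$, which requires at most $|T|$ LP feasibility checks, and then a single QRF witnessing all of $T_{\mathit{ranked}}$ simultaneously is obtained by summing (and rescaling) the witnesses. All of this is polynomial in $\size{\vass}$. The set $T_f$ of $f$-neutral transitions is then $T \setminus T_{\mathit{ranked}}$.

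Then I would bound the recursion. The crucial geometric claim is: \emph{each recursive call of \texttt{Decompose} strictly decreases the dimension of the affine span of the increments (or of the relevant counter space) available to the sub-VASS.} Concretely, when $f$ is a QRF for $\A$ with at least one $f$-ranked transition, every simple cycle of $\A_{T_f}$ uses only $f$-neutral transitions, so its effect $\bv$ satisfies $\bc_f^\top\bv = 0$; since $\bc_f \ne \vec 0$ (as $\A$ has a transition and is strongly connected, a QRF ranking something must have nonzero normal), the increments of each SCC $\A_i$ of $\A_{T_f}$ lie in the hyperplane $\bc_f^\perp$, hence $\dim(\lspan(\Inc_{\A_i})) < \dim(\lspan(\Inc_{\A}))$. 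Since $\dim(\lspan(\Inc_\A)) \le d$ initially, and the recursion bottoms out (returns $1$ or $\infty$) once $T_f = \emptyset$ or $T_f = T$, the recursion depth is $< d$. The total number of recursive calls is therefore bounded by a polynomial in the number of SCCs produced at each level, which is at most $|Q|$ per level, giving $\bigO(|Q|^d)$ in the worst case — but since the depth is $< d$ and at each level the SCCs partition disjoint subsets of states, the total number of nodes in the recursion tree is actually $\bigO(d \cdot |Q|)$, or more crudely polynomial; each node does polynomially many LP computations as above. Hence the whole algorithm is polynomial in $\size{\vass}$.

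The main obstacle I anticipate is making precise the claim that line~\ref{alg-line-7} is polynomial — specifically, justifying that ``maximizing the number of $f$-ranked transitions'' reduces to polynomially many LPs rather than an exponential search over the ranked/neutral partition. The argument above (individual $f$-rankability of each transition, plus closure of QRFs under addition) resolves this, but it relies on the convexity of the QRF cone and the fact that scaling a QRF preserves the QRF property after renormalizing the ``$\le -1$'' thresholds; I would need to state carefully that if $\bc_f^\top\updates(\cdot,t) - \bw_f^\top\flowMatrix(\cdot,t) < 0$ then some positive multiple makes it $\le -1$, so that additivity of witnesses indeed yields a genuine QRF ranking the union of the ranked sets. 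A secondary subtlety is the base-case analysis and the claim that disjointness of SCCs across recursion levels keeps the recursion-tree size polynomial (rather than merely the depth); this follows because all SCCs at a fixed recursion depth are vertex-disjoint sub-VASS of $\A$, so there are at most $|Q|$ of them across each level and at most $d$ levels.
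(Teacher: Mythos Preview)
Your plan is structurally the same as the paper's: LP for the positive-QRF test, an LP-based computation of the maximizing QRF on line~\ref{alg-line-7}, a dimension-drop argument for the recursion depth, and disjointness of the sub-VASS at each level for the overall polynomial bound. Two points deserve comment.

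\textbf{Line~\ref{alg-line-7}.} Your reduction is correct but different from the paper's. You solve $|T|$ feasibility LPs (one per transition~$t$, asking whether some QRF ranks~$t$) and then add and rescale the witnesses, using closure of the QRF cone under addition and positive scaling. The paper instead sets up a \emph{single} LP with auxiliary variables $\activeQ\in[0,1]^T$, maximizing $\oneVec^\top\activeQ$ subject to $\bc\ge\vec 0$ and $\bc^\top\updates-\bw^\top\flowMatrix\le -\activeQ^\top$, and argues that any optimal solution has $\activeQ\in\{0,1\}^T$ and already realizes the maximal ranked set. Your approach is conceptually simpler; theirs avoids the loop over transitions and is what lets them answer affirmatively the Kosaraju--Sullivan question about expressing the complexity as $d$ times the cost of one LP.

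\textbf{The dimension-drop step has a gap.} You argue that $\bc_f\neq\vec 0$ and that $\Inc_{\A_i}\subseteq\bc_f^\perp$, and then conclude $\dim(\lspan(\Inc_{\A_i}))<\dim(\lspan(\Inc_\A))$. But $\bc_f\neq\vec 0$ alone does not rule out $\lspan(\Inc_\A)\subseteq\bc_f^\perp$ as well (for instance, $\bc_f$ could be nonzero in a coordinate that no increment touches), in which case there is no strict drop. What you need, and what the paper supplies, is an element of $\Inc_\A$ \emph{outside} $\bc_f^\perp$: since a recursive call is made, some transition is $f$-ranked; by strong connectivity it lies on a simple cycle $C$, and summing the QRF inequalities around $C$ (the $\bw_f$ terms telescope) gives $\bc_f^\top\eff(C)\le -1<0$, so $\eff(C)\in\Inc_\A\setminus\bc_f^\perp$. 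Add this one sentence and your argument goes through.
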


We proceed with correctness of the algorithm w.r.t. non-termination.

\begin{theorem}
\label{thm:nonterm}
	Assume that on input $\A$, Algorithm~\ref{algo:main} returns
	``non-terminating.''
	Then $\A$ is a non-terminating VASS.
\end{theorem}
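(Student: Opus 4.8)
Algorithm~\ref{algo:main} outputs ``non-terminating'' precisely when some recursive invocation of \textsc{Decompose} on a sub-VASS $\mathcal{B}$ of $\A$ reaches the branch that returns $\infty$; by line~\ref{alg-line-7} this happens exactly when the QRF computed for $\mathcal{B}$ ranks no transition, and since that QRF \emph{maximizes} the number of ranked transitions, it means that \emph{no} QRF of $\mathcal{B}$ ranks a single transition. A straightforward induction on the recursion depth shows that every such $\mathcal{B}$ is a strongly connected sub-VASS of $\A$ with at least one transition (the SCCs created in line~\ref{alg-line-11} use only states and transitions of the caller, and the trivial map is always a QRF so line~\ref{alg-line-7} is well defined). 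Since $\mathcal{B}$'s transition set is contained in that of $\A$, any infinite computation of $\mathcal{B}$ is also one of $\A$. Hence the theorem reduces to the following claim: \emph{if a strongly connected VASS $\mathcal{B}$ with at least one transition has no QRF ranking any transition, then $\mathcal{B}$ is non-terminating.}

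The first step is to translate the hypothesis, via LP duality, into the existence of a suitable multi-cycle. By a scaling argument, ``no QRF of $\mathcal{B}$ ranks a transition'' is equivalent to: there is no linear map $f$ with $\bc_f \geq \zeroVec$ all of whose transitions have $f$-value $\le 0$ and at least one of which has $f$-value $< 0$ (from such an $f$, multiplying $\bc_f$ and the weighting vector by a large enough positive constant turns every negative value into one that is $\le -1$ while keeping the zero ones zero, yielding a genuine QRF that ranks a transition; the converse is immediate). Now fix any transition $t^{*}$ of $\mathcal{B}$: the linear system requiring such an $f$ to additionally strictly rank $t^{*}$ is infeasible, and Farkas' lemma — the weighting vector being free and $\bc_f$ sign-constrained — produces a \emph{rational} vector $\by_{t^{*}} \geq \zeroVec$ indexed by the transitions of $\mathcal{B}$ with $\flowMatrix \by_{t^{*}} = \zeroVec$, $\updates \by_{t^{*}} \geq \zeroVec$, and $\by_{t^{*}}(t^{*}) > 0$. (This is exactly the duality between our QRF search and the Kosaraju--Sullivan non-termination test of~\cite{KS88}, in the same spirit as the primal/dual pair underlying Theorem~\ref{thm:onedim_rf}.) Summing $\by_{t^{*}}$ over all transitions $t^{*}$ of $\mathcal{B}$ and clearing denominators yields an \emph{integer} vector $\by$ with $\by > \zeroVec$, $\flowMatrix \by = \zeroVec$ and $\updates \by \geq \zeroVec$; i.e.\ $\by$ describes a multi-cycle of $\mathcal{B}$ that uses every transition and has non-negative effect.

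The second step builds an infinite computation from $\by$. Form the directed multigraph $H$ on the state set of $\mathcal{B}$ with $\by(t)$ parallel copies of each transition $t$. The equality $\flowMatrix \by = \zeroVec$ makes $H$ in/out-degree balanced at every state, and $\by > \zeroVec$ forces every transition of $\mathcal{B}$ to appear in $H$, so $H$ is connected (as $\mathcal{B}$ is strongly connected and every state of $\mathcal{B}$ has an outgoing transition). Hence $H$ admits an Eulerian circuit, i.e.\ a single cycle $\sigma$ of $\mathcal{B}$ traversing each transition $t$ exactly $\by(t)$ times, so $\eff(\sigma) = \updates \by \geq \zeroVec$. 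Let $p$ be the start state of $\sigma$ and $\ell$ its length. For every $N \geq \ell \cdot \maxup{\mathcal{B}}$, the cycle $\sigma$ is executable from the configuration $p\,\vec{N}$ — no counter ever drops below $N - \ell \cdot \maxup{\mathcal{B}} \geq 0$ along it — and ends in $p(\vec{N} + \eff(\sigma))$ with $\vec{N} + \eff(\sigma) \geq \vec{N}$; iterating $\sigma$ forever gives an infinite computation from $p\,\vec{N}$. Therefore $\plen(p\,\vec{N}) = \infty$, so $\mathcal{B}$, and hence (since this computation uses only states and transitions of $\A$) $\A$, is non-terminating.

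\textbf{Main obstacle.} The delicate part is the duality step: setting up the Farkas alternative correctly, with the free weighting vector and the sign constraint $\bc_f \geq \zeroVec$, and in particular the passage between the QRF condition — whose transition values must lie in $\{0\} \cup (-\infty, -1]$ — and the relaxed ``weakly/strictly ranked linear map'' condition, which is handled by the scaling trick, together with the observation that the dual witness can be chosen rational and hence integral so that it genuinely corresponds to a multi-cycle. Once the integer vector $\by$ is in hand, the Eulerian-circuit construction and the pumping argument are routine.
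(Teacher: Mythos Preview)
Your proof is correct and follows essentially the same route as the paper: both identify the strongly connected sub-VASS $\mathcal{B}$ on which no QRF ranks any transition, invoke the Farkas alternative between the QRF system ($\conSysB_\transition$) and the non-negative multi-cycle system ($\conSysA_\transition$) for each transition, and then merge the resulting per-transition multi-cycles into a single non-negative cycle that can be iterated forever. Your write-up is in fact more explicit than the paper's on two points the paper leaves implicit---the Eulerian-circuit argument for assembling the combined $\by > \zeroVec$ into a genuine cycle, and the final pumping from a sufficiently large configuration---but the underlying argument is the same.
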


Finally, the following two theorems show the correctness of the algorithm w.r.t.
upper and lower bounds on the termination complexity of VASS.

\begin{theorem} \label{thm:O-bound}
	Assume that on input $\A$, Algorithm~\ref{algo:main} returns a tuple
	$(k,\precise)\in \N \times \{\mathit{true},\mathit{false}\}$. Then $k\in
	\{1,\dots,d\}$ and $\A$
	is terminating. Moreover, if $\precise = \mathit{true}$, then
	$\term(n) \in
	\bigO(n^k)$.
\end{theorem}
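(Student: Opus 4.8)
The plan is to prove Theorem~\ref{thm:O-bound} by induction on the depth of recursion of Algorithm~\ref{algo:main}, which is finite and bounded by $d$ thanks to Theorem~\ref{thm:termination-dimension}. First I would observe that the claim $k\in\{1,\dots,d\}$ is immediate from the structure of \texttt{Decompose}: each recursive call returns $1$ or $1+\max(\dots)$ over calls on sub-VASS of strictly smaller dimension (because passing to the $f$-neutral sub-VASS $\A_{T_f}$ of a VASS with a QRF $f$ having at least one $f$-ranked transition strictly decreases the ``effective dimension'', as the normal $\bc_f$ must be orthogonal to all neutral cycle effects). The fact that $\A$ is terminating whenever the algorithm does not return ``non-terminating'' follows from the contrapositive of Theorem~\ref{thm:nonterm} together with the observation that if \texttt{Decompose} returns a finite $k$ then no recursive call returned $\infty$, so the VASS terminates; this needs the fact that termination of all the $f$-neutral SCCs plus existence of a QRF ranking the remaining transitions implies termination of $\A$, which I discuss below.

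The heart of the theorem is the upper bound $\term(n)\in\bigO(n^k)$ when $\precise=\mathit{true}$, i.e.\ when $\A$ admits a \emph{positive} QRF. Here the key step is a bound-propagation argument along the recursion. Let $f$ be the QRF chosen on line~\ref{alg-line-7}, with $f$-ranked transition set $T\setminus T_f$ and $f$-neutral set $T_f$. Since $\bc_f\ge\vec 0$, along any computation started in an $n$-bounded configuration the value $f(p\bv)=\bc_f^\top\bv+\weights_f(p)$ starts in $\bigO(n)$, never increases, and never drops below a fixed constant (as $\bv\ge\vec 0$ and $\bc_f\ge\vec0$); each use of an $f$-ranked transition decreases $f$ by at least $1$. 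Hence \emph{every} computation uses $f$-ranked transitions at most $\bigO(n)$ times. Decomposing any computation into maximal blocks lying entirely inside $\A_{T_f}$, separated by $f$-ranked steps, the number of blocks is $\bigO(n)$, and each block is a computation of one of the SCCs $\A_1,\dots,\A_\ell$ of $\A_{T_f}$ started in an $n$-bounded configuration. By the induction hypothesis applied to each $\A_j$ — whose \texttt{Decompose} value is $k_j$ with $k=1+\max_j k_j$ — each block has length $\bigO(n^{k-1})$, so the total length is $\bigO(n)\cdot\bigO(n^{k-1})=\bigO(n^k)$. The base cases are line~\ref{alg-line-10} ($T_f=\emptyset$: all transitions are $f$-ranked, so by the above every computation has length $\bigO(n)$, matching $k=1$) and one must also check that the termination of the sub-VASS needed for the induction hypothesis is guaranteed — which is exactly the content of the (non-$\infty$) recursive return, and feeds into the proof that $\A$ itself terminates.

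The subtle points I expect to be the main obstacle are two. First, one must be careful that the induction hypothesis is applicable: \texttt{Decompose}$(\A_j)$ might itself recurse and could in principle return $\infty$, but then \texttt{Decompose}$(\A)$ returns $\infty$ too, contradicting our assumption of a finite return; so all sub-calls are finite and the induction hypothesis (including ``$\A_j$ terminating'') applies. This requires the $\precise$ flag to propagate correctly — but note $\precise$ is computed once at the top level for $\A$, and the bound $\bigO(n^k)$ is only claimed when $\A$ has a positive QRF; I would need to verify that a positive QRF for $\A$ restricts to something usable on each $\A_j$, or else redo the upper-bound argument per level using only that each level's chosen $f$ is a genuine QRF (which suffices for the block-counting argument, since positivity of $\bc_f$ is not actually needed for the monotonicity argument — only $\bc_f\ge\vec0$ is). So in fact the upper bound $\bigO(n^k)$ holds whenever the algorithm returns a finite $k$ at all, and the role of $\precise$ is merely to flag when this matches the lower bound; I would state the $\bigO$-bound proof accordingly and let the $\precise$ condition be inherited from the overall soundness of the classification. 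Second, turning the decomposition of a computation into ``$\bigO(n)$ blocks, each an $n$-bounded computation of some $\A_j$'' cleanly requires noting that the configuration at the start of each block is still $n$-bounded (true, since counters only changed via transitions, but values could in principle have grown — here I use that $\bc_f\ge\vec0$ controls a linear combination, and separately that each counter's growth over the whole computation is $\bigO(n^k)$ by a cruder argument, or one simply re-parameterizes the induction hypothesis to allow the starting configuration of the block to be $\bigO(n)$-bounded rather than exactly $n$-bounded, which is harmless asymptotically). This bookkeeping is routine but is where care is needed.
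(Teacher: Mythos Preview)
Your inductive skeleton and the termination argument match the paper's. The gap is exactly at the ``second subtle point'' you flag but do not resolve, and your assessment of it is wrong: you assert that ``the upper bound $\bigO(n^k)$ holds whenever the algorithm returns a finite $k$ at all,'' but this is false---terminating VASS with no positive QRF are precisely the type~(C) VASS, whose termination complexity need not be polynomial. The hypothesis $\precise=\mathit{true}$ is essential, not cosmetic. The reason is that to apply the induction hypothesis to each $f$-neutral block you must know the block begins in an $\bigO(n)$-bounded configuration, and your three proposed fixes all fail: ``$\bc_f\ge\vec 0$ controls a linear combination'' bounds only $\bc_f^\top\bv$, not individual $\bv(j)$ (indeed $\bc_f(j)$ may be $0$, so nothing is said about counter $j$); ``each counter's growth is $\bigO(n^k)$ by a cruder argument'' is circular, since $\bigO(n^k)$ is what is being proved; and ``re-parameterize to $\bigO(n)$-bounded'' merely restates the needed fact without proving it.

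The paper's missing idea is exactly the \emph{positive} QRF $g$ guaranteed by $\precise=\mathit{true}$. Since $\bc_g>\vec 0$ componentwise and $g(q_0\bu_0)\ge g(p\bv)$ for every reachable $p\bv$, setting $c_{\min}=\min_i\bc_g(i)>0$ and $c_{\max}=\max_i\bc_g(i)$ gives
\[
c_{\min}\,\bv(j)\ \le\ \bc_g^\top\bv\ \le\ \bc_g^\top\bu_0+\weights_g(q_0)-\weights_g(p)\ \le\ d\,c_{\max}\,n+\mathrm{const},
\]
so every counter stays linearly bounded in $n$ throughout the entire computation. This is precisely where strict positivity of $\bc_g$ (not merely $\bc_g\ge\vec 0$) is used; with this bound in hand your block decomposition goes through verbatim, and the induction hypothesis applies to each $\A_j$ (which inherits a positive QRF by restricting $g$).
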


\begin{theorem}
\label{thm:omega-bound}
	Assume that on input $\A$, Algorithm~\ref{algo:main} returns a tuple
	$(k,\precise) \in \N \times \{\mathit{true},\mathit{false}\}$. Then $k\in
	\{1,\dots,d\}$ and
	$\term(n)\in \Omega(n^k)$.
\end{theorem}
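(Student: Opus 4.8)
The plan is to prove $\term(n)\in\Omega(n^k)$ by induction on the number of transitions of $\A$; each recursive call of the procedure Decompose in Algorithm~\ref{algo:main} is made on an SCC of $\A_{T_f}$, which is a strongly connected sub-VASS with strictly fewer transitions than $\A$ (when $T_f$ is neither empty nor all of $T$ at least one transition of $\A$ is $f$-ranked and hence absent from $\A_{T_f}$), so this induction matches the structure of the algorithm. The side claim $k\in\{1,\dots,d\}$ is already supplied by Theorem~\ref{thm:O-bound} (and also follows from Theorem~\ref{thm:termination-dimension}); moreover, since the algorithm returns a tuple, $\A$ is terminating and no recursive Decompose call returned $\infty$. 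It is convenient to prove a mild strengthening: there are a state $p\in Q$ and constants $a,C>0$ such that for all sufficiently large $n$ there is a computation of $\A$ initiated in $p\vec n$ of length at least $a\, n^k$, all of whose configurations $q\bu$ satisfy $\bu\le C n\cdot\vec 1$.

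\emph{Case $k=1$.} Since $\A$ is terminating, $\term(n)\in\Omega(n)=\Omega(n^1)$ as noted in Section~\ref{sec-VASS-term}; the witnessing computation can be taken to consist of $\Theta(n)$ iterations of a fixed simple cycle of $\A$ through a state $p$, which trivially respects a $\Theta(n)$-band.

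\emph{Case $k\ge 2$.} Here the value returned by Decompose on $\A$ equals $1+\max_i v_i$, where $v_i$ is the value returned on the SCC $\A_i$ of $\A_{T_f}$ (for the QRF $f$ chosen by Decompose), and every $v_i$ is finite. Fix $j$ with $v_j=k-1$. The induction hypothesis, applied to $\A_j$, yields a state $p\in Q_j$ and constants such that, for large $n$, $\A_j$ has a computation $\sigma_n$ initiated in $p\vec n$ of length at least $a\, n^{k-1}$, all of whose configurations stay within a $\Theta(n)$-band. Appending to $\sigma_n$ a path of length $<|Q_j|$ inside the strongly connected $\A_j$, we may assume that the path of $\sigma_n$ is a cycle at $p$; then $\eff(\sigma_n)\in\Cone(\Inc_{\A_j})$, and since every transition of $\A_j$ is $f$-neutral we get $\bc_f^\top\eff(\sigma_n)=0$, while the band condition forces $\eff(\sigma_n)\ge -C'n\cdot\vec 1$ for a constant $C'$. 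We now \emph{pump} $\sigma_n$ up by a factor of $n$ inside the whole VASS $\A$: exploiting that $\A$ is strongly connected and that $\A_j$ is an SCC of $\A_{T_f}$, we build a cycle $\rho_n=\sigma_n\odot\tau_n$ at $p$ in $\A$ in which $\tau_n$ is a path of length $O(n)$ that \emph{refuels} the counters drained by $\sigma_n$, so that $\eff(\rho_n)\ge -\xi\cdot\vec 1$ for a constant $\xi$ and the entire cycle $\rho_n$ stays within a $\Theta(n)$-band around its start. Then, exactly as in the lower-bound arguments of Theorem~\ref{thm:precise-complexity}, from a configuration $p\vec N$ with $N=\Theta(n)$ we can execute $\rho_n$ a number of times that is linear in $N$, yielding a computation of length $\Omega(N\cdot n^{k-1})=\Omega(N^k)$ that stays within a $\Theta(N)$-band; this establishes $\term(n)\in\Omega(n^k)$ together with the strengthened band property.

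The technical core, and the step I expect to be the main obstacle, is the construction of the refueling path $\tau_n$: one must show that $\A$ contains a cycle at $p$ whose effect compensates $-\eff(\sigma_n)$ up to a bounded additive error, using only $O(n)$ transitions and incurring only bounded ``cost along $\bc_f$''. This is precisely where the geometry of VASS dynamics is needed. Since $f$-neutrality confines $\eff(\sigma_n)$ to the kernel of $\bc_f$, it suffices to move in the opposite direction within that hyperplane at constant $\bc_f$-cost; the $f$-neutral transitions of $\A$ outside $\A_j$, together with the few $f$-ranked transitions required to navigate between SCCs of $\A_{T_f}$, provide such moves, and one argues --- using $\eff(\sigma_n)\in\Cone(\Inc_{\A_j})$ and the absence of a separating positive half-space for the relevant increments, in the spirit of Lemma~\ref{lem-no-halfspace} --- that compensating cycles of the required magnitude exist and that all intermediate configurations of the pumped computation remain non-negative and within the declared $\Theta(n)$-band.
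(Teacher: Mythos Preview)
Your overall architecture matches the paper's: induct along the recursion of \texttt{Decompose}, obtain a cycle $\sigma_n$ of length $\Omega(n^{k-1})$ inside the $f$-neutral sub-VASS, append a ``refueling'' path so that the combined cycle has only bounded total effect, and iterate $\Theta(n)$ times. The base case and the outer pumping argument are fine.

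The genuine gap is in the construction of the refueling path $\tau_n$. You propose to obtain it from ``the $f$-neutral transitions of $\A$ outside $\A_j$ together with the few $f$-ranked transitions required to navigate between SCCs'' and to justify its existence ``in the spirit of Lemma~\ref{lem-no-halfspace}'', i.e., via the absence of a separating positive half-space for the relevant increments. This is not the right mechanism and, as stated, does not go through. Lemma~\ref{lem-no-halfspace} only yields \emph{some} non-negative combination of increments; it gives no control over which increments participate, and in particular no reason why the specific effect $\eff(\sigma_n)$ (or the simple-cycle effects it decomposes into) can be compensated. What is actually needed---and what the paper proves as Lemma~\ref{lem:goodQRF}---is that for every $\bv\in\Inc$ with $\bc_f^\top\bv=0$ there exists $\bw\in\cone(\Inc)$ with $\bv+\bw\ge\vec 0$. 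The proof of this uses, in an essential way, that $f$ \emph{maximizes} the number of $f$-ranked transitions: for each transition $t$ on an $f$-neutral cycle the constraint system $\conSysB_t$ must be infeasible (else $f$ would not be maximal), and then Farkas duality (Lemma~\ref{lem:ranking-or-witness}) produces a non-negative multi-cycle containing $t$. Your sketch never invokes the maximality of $f$, so the key existence statement is unsupported.

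There is a second, separate issue you wave past: even granting a compensating $\bw\in\cone(\Inc)$ with $\eff(\sigma_n)+\bw\ge\vec 0$, concatenating the corresponding simple cycles in arbitrary order can drive intermediate counters far below zero. Controlling the intermediate drop to $O(n)$ is the content of Lemma~\ref{lem-compens} (which in turn relies on the geometric Lemma~\ref{lem-kappa}, bounding partial sums of a conic combination). Your sentence ``the entire cycle $\rho_n$ stays within a $\Theta(n)$-band'' assumes precisely what has to be proved here.
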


Note that the algorithm indeed performs the required classification since
$\precise$ is set to $\mathit{true}$ if and only if the check for the existence of
a positive QRF in the beginning of the algorithm is successful. We now present
the proofs of the above theorems.

\textbf{Proof of Theorem~\ref{thm:termination-dimension}.}
In order to analyze the termination of the algorithm we consider the cone of
cycle effects.
As usual we define the dimension $\dimOp(C)$ of a cone $C$ as the dimension of
the smallest vector space containing $C$.
We show that the dimension of the cone generated by $\Inc_\vass$ decreases with
each recursive call:

\begin{lemma}
	\label{lem:recursive-call-dimension}
	Let $\vass$ be some VASS such that $\dec(\vass)$ leads to some recursive
	call $\dec(\vass')$ for some SCC $\vass'$ of $\vass$.
	Then $\dimOp(\cone(\Inc_\vass)) > \dimOp(\cone(\Inc_{\vass'}))$.
\end{lemma}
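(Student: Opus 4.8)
The claim is that each recursive call to \dec{} on an SCC $\vass'$ of $\vass$ strictly decreases the dimension of the cone of cycle effects. The plan is to exploit the QRF $f$ computed in line~\ref{alg-line-7}: the recursion descends into the SCCs of $\A_{T_f}$, where $T_f$ is the set of $f$-neutral transitions, and crucially $T_f$ is a \emph{proper} subset of $T$ (otherwise \dec{} would have returned $\infty$ in line~\ref{alg-line-9}), so there is at least one transition that is $f$-ranked, i.e., strictly decreases the value of $f$ along it.

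First I would observe that every simple cycle $\gamma$ of $\vass'$ uses only transitions from $T_f$, hence only $f$-neutral transitions, and therefore $\bc_f^\top \cdot \eff(\gamma) = 0$. This shows $\Inc_{\vass'} \subseteq \{\bx \mid \bc_f^\top\cdot\bx = 0\} =: H$, and since a cone is contained in the smallest subspace spanned by its generators, $\cone(\Inc_{\vass'}) \subseteq H$, which is a subspace of dimension $\le d-1$ (as $\bc_f \ge \vec 0$ is nonzero — it has to be nonzero because $\bc_f^\top\updates - \bw_f^\top\flowMatrix$ must have a $\le -1$ entry in the column of any $f$-ranked transition, and such a transition exists). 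So $\dimOp(\cone(\Inc_{\vass'})) \le d-1$.

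Next I would show $\dimOp(\cone(\Inc_{\vass})) \ge \dimOp(\cone(\Inc_{\vass'})) + 1$, which combined with the above finishes the proof only if $\cone(\Inc_\vass)$ is full-dimensional, so this is not quite the right route. The cleaner argument: the inclusion $\Inc_{\vass'} \subseteq \Inc_{\vass}$ holds (every simple cycle of the sub-VASS $\vass'$ is a simple cycle of $\vass$), so $\cone(\Inc_{\vass'}) \subseteq \cone(\Inc_{\vass})$, giving $\dimOp(\cone(\Inc_{\vass'})) \le \dimOp(\cone(\Inc_{\vass}))$. To get strictness, I would exhibit a vector in $\cone(\Inc_\vass)$ that is \emph{not} in the span of $\Inc_{\vass'}$. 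Since $\vass$ is strongly connected and possesses an $f$-ranked transition $t = (p,\bu,q)$, there is a simple cycle $\gamma^*$ of $\vass$ through $t$; along $\gamma^*$ the total change in $f$ is $\bc_f^\top\cdot\eff(\gamma^*) = \sum(\text{per-transition }f\text{-changes}) \le -1 < 0$ (each transition contributes $\le 0$ and $t$ contributes $\le -1$). Hence $\eff(\gamma^*) \in \Inc_\vass$ but $\bc_f^\top\cdot\eff(\gamma^*) \ne 0$, so $\eff(\gamma^*) \notin \lspan(\Inc_{\vass'})$ (which lies in $H$). Therefore the span of $\Inc_\vass$ strictly contains the span of $\Inc_{\vass'}$, so $\dimOp(\cone(\Inc_\vass)) = \dimOp(\lspan(\Inc_\vass)) > \dimOp(\lspan(\Inc_{\vass'})) = \dimOp(\cone(\Inc_{\vass'}))$.

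The main obstacle — really the only subtle point — is justifying that a genuinely $f$-ranked transition exists and that it lies on some simple cycle of $\vass$ whose effect therefore has strictly negative $\bc_f$-value. The first follows from the control flow of \dec{}: if $T_f$ were all of $T$, line~\ref{alg-line-9} would have returned $\infty$ and no recursive call would have been made; so the "else" branches are reached only when $T\setminus T_f\ne\emptyset$. The second uses strong connectivity of $\vass$ to close the transition $t$ into a cycle, then decomposes that cycle into simple cycles (via $\Decomp$); since the total $f$-change is additive over the decomposition and is $\le -1$ overall, at least one simple cycle in the decomposition has strictly negative $\bc_f$-value, and that cycle's effect is the witness in $\Inc_\vass \setminus H$. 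The rest is the routine linear-algebra fact that $\dimOp$ of a cone equals the dimension of its linear span. I should also double-check the edge case where $\cone(\Inc_{\vass'})$ could even be $\{\vec 0\}$ (dimension $0$) — the argument above still goes through since we only ever need the strict inclusion of spans.
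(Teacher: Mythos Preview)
Your proposal is correct and follows essentially the same route as the paper: show $\Inc_{\vass'}\subseteq\Inc_\vass$, show every increment of $\vass'$ lies in the hyperplane $H=\{\bx\mid \bc_f^\top\bx=0\}$ because all its transitions are $f$-neutral, and exhibit a simple cycle of $\vass$ through an $f$-ranked transition whose effect has strictly negative $\bc_f$-value. One small remark: your parenthetical justification that $\bc_f\neq\vec 0$ ``because $\bc_f^\top U-\bw_f^\top F$ must have a $\le -1$ entry'' is not valid on its own (with $\bc_f=\vec 0$ that entry is just $\bw_f(q)-\bw_f(p)$, which can be $\le -1$), but this is in the paragraph you yourself discard, and your final argument---where $\bc_f^\top\eff(\gamma^*)<0$ forces $\bc_f\neq\vec 0$ implicitly---is sound.
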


By Lemma~\ref{lem:recursive-call-dimension} we have that the dimension of
$\cone(\Inc_\vass)$ decreases with every recursive call.
With $\dimOp(\cone(\Inc_\vass)) \le d$, we get that the recursion depth is
bounded by $d - 1$.

Now we focus on the complexity of computing a QRF $f$ maximizing the number of
$f$-ranked transitions. The computation of such a QRF can be directly encoded
by the following linear optimization problem $\qrfLP$.
\vspace{0.1em}
\begin{center}
\begin{tabular}{|c|}
	\hline
	{\begin{minipage}[c]{0.6\linewidth}
			\vspace{0.2cm}
			LP ($\qrfLP$): \hspace{0.3cm} $\max \oneVec^\top \activeQ$
			\begin{align*}
			\zeroVec & \le \activeQ \le \oneVec\\
			\rankCoeff & \ge \zeroVec \\
			\rankCoeff^\top \cdot \updates - \offsets^\top \cdot \flowMatrix & \le -\activeQ^\top
			\end{align*}
		\end{minipage}}\\
		\hline
	\end{tabular}
\end{center}
\vspace{0.2em}

	\begin{lemma}
		\label{lem:lp-qrf}
		Let $\rankCoeff,\offsets,\activeQ$ be an optimal solution to LP
		($\qrfLP$).
		Then, $\qrf(\loc \val) = \rankCoeff^\top \val + \offsets(\loc)$ is a
		QRF, which is maximizing the number of $\qrf$-ranked transitions.
	\end{lemma}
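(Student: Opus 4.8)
The plan is to show that optimal solutions of LP~($\qrfLP$) are in exact correspondence with QRFs that maximize the number of ranked transitions. First I would unpack what the constraints of ($\qrfLP$) say. Given a candidate triple $\rankCoeff,\offsets,\activeQ$, set $\qrf(\loc\val)=\rankCoeff^\top\val+\offsets(\loc)$. The constraint $\rankCoeff\geq\zeroVec$ says the normal is non-negative. Reading the matrix inequality $\rankCoeff^\top\updates-\offsets^\top\flowMatrix\leq-\activeQ^\top$ column-by-column for a transition $\transition=(\loc,\update,\loc')$: the $\transition$-column of $\updates$ is $\update$ and the $\transition$-column of $\flowMatrix$ is $e_\loc-e_{\loc'}$ (or $\zeroVec$ if $\loc=\loc'$, but then $\update$ is a self-loop and the entry is the same), so the constraint becomes $\rankCoeff^\top\update-\offsets(\loc)+\offsets(\loc')\leq-\activeQ(\transition)$, i.e. $\rankCoeff^\top\update+\offsets(\loc')\leq\offsets(\loc)-\activeQ(\transition)$. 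Since $0\leq\activeQ(\transition)\leq1$, this guarantees $\rankCoeff^\top\update+\offsets(\loc')\leq\offsets(\loc)$, so $\transition$ is always at least $\qrf$-neutral; and whenever $\activeQ(\transition)=1$ it forces $\rankCoeff^\top\update+\offsets(\loc')\leq\offsets(\loc)-1$, i.e. $\transition$ is $\qrf$-ranked. Hence any feasible point of ($\qrfLP$) yields a QRF $\qrf$, and the objective $\oneVec^\top\activeQ=\sum_\transition\activeQ(\transition)$ is a lower bound on the number of $\qrf$-ranked transitions.

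Next I would argue the converse direction to pin down the objective value. Given \emph{any} QRF $g$ with normal $\bc_g\geq\zeroVec$ and weights $\bw_g$, define $\activeQ(\transition)=1$ for every $g$-ranked transition and $\activeQ(\transition)=0$ for every $g$-neutral one. Then $(\bc_g,\bw_g,\activeQ)$ is feasible for ($\qrfLP$): the box constraint and $\rankCoeff\geq\zeroVec$ are immediate, and the matrix inequality holds column-wise because $g$-ranked transitions satisfy $\bc_g^\top\update+\bw_g(\loc')\leq\bw_g(\loc)-1$ (matching $\activeQ(\transition)=1$) while $g$-neutral ones satisfy the inequality with $\activeQ(\transition)=0$. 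For this solution $\oneVec^\top\activeQ$ equals exactly the number of $g$-ranked transitions. Therefore the optimum of ($\qrfLP$) is at least the maximum number of ranked transitions over all QRFs; combined with the first paragraph (every feasible point gives a QRF with at least $\oneVec^\top\activeQ$ ranked transitions) we conclude the optimum equals that maximum, and an optimal $(\rankCoeff,\offsets,\activeQ)$ yields a QRF $\qrf$ with at least $\oneVec^\top\activeQ=$ (the max) ranked transitions — hence exactly that many, so $\qrf$ is maximizing.

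The one subtlety I would be careful about — and the only real obstacle — is the gap between ``$\activeQ(\transition)=1$ implies $\transition$ is $\qrf$-ranked'' and ``the set of $\qrf$-ranked transitions has size \emph{exactly} $\oneVec^\top\activeQ$''. A priori an optimal solution could have $\activeQ(\transition)=0$ on some transition that nonetheless happens to be strictly ranked by $\qrf$ (i.e. $\rankCoeff^\top\update+\offsets(\loc')\leq\offsets(\loc)-1$ while $\activeQ(\transition)<1$), which would make $\qrf$ rank \emph{more} than $\oneVec^\top\activeQ$ transitions. This is not a problem for the lemma: it only strengthens ``$\qrf$ ranks at least $\oneVec^\top\activeQ$ transitions'', and since by the second paragraph no QRF can rank more than the LP optimum $=\oneVec^\top\activeQ$, equality is forced and $\qrf$ is still a maximizer. (If one wanted the cleaner statement that one can always read off a maximizing QRF with the ranked set being precisely $\{\transition:\activeQ(\transition)=1\}$, one notes that given such a $\qrf$ one can re-set $\activeQ(\transition):=1$ on every strictly-ranked transition without violating feasibility and without decreasing the objective, so at an optimum this already holds.) Modulo this bookkeeping remark the proof is just the column-wise reading of the constraint matrix together with LP optimality, so it is short.
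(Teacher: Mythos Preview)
Your argument has a real gap: the claim in the first paragraph that ``any feasible point of ($\qrfLP$) yields a QRF'' is false as stated. The constraint $\rankCoeff^\top\update+\offsets(\loc')\leq\offsets(\loc)-\activeQ(\transition)$ with $\activeQ(\transition)\in[0,1]$ only guarantees that the quantity $\rankCoeff^\top\update+\offsets(\loc')-\offsets(\loc)$ is $\le 0$; it does \emph{not} force it to be either $=0$ (neutral) or $\le -1$ (ranked), which is what the paper's definition of QRF demands. A feasible point could perfectly well have this quantity equal to $-1/2$ for some transition, and then $\qrf$ is not a QRF. Your phrase ``at least $\qrf$-neutral'' conflates ``non-increasing'' with ``neutral'', and correspondingly the assertion that $\oneVec^\top\activeQ$ lower-bounds the number of $\qrf$-ranked transitions is false for general feasible points (take two transitions both with value $-1/2$ and $\activeQ=(1/2,1/2)$: objective $1$, zero ranked transitions). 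Your third paragraph discusses only the harmless case $\activeQ(\transition)<1$ with value $\le -1$; the harmful case is value strictly in $(-1,0)$, which you do not rule out.

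What is missing is precisely the scaling step the paper uses: if $(\rankCoeff,\offsets,\activeQ)$ is feasible then so is $(d\rankCoeff,d\offsets,\min(d\activeQ,\oneVec))$ for every $d\ge 1$. At an \emph{optimal} solution this forces each $\activeQ(\transition)\in\{0,1\}$ (if $0<\activeQ(\transition)<1$, scale by $d=1/\activeQ(\transition)$ and strictly increase the objective), and together with the trivial observation that at an optimum each $\activeQ(\transition)$ may be taken as large as the $\transition$-constraint permits, it eliminates constraint values strictly between $-1$ and $0$. Once that is established, your correspondence argument (QRFs $\leftrightarrow$ feasible points with $\activeQ\in\{0,1\}^\trans$) goes through and finishes the proof. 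So the overall structure you propose is close in spirit to the paper's, but the step showing that the optimal $\qrf$ is genuinely a QRF is not free and requires this scaling argument.
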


Similarly, checking the existence of a positive QRF can be performed by a
direct reduction to linear programming. The LP is analogous to $\qrfLP$.

\begin{lemma}
\label{lem:positive-QRF-check}
Checking the existence of a positive QRF can be done in polynomial time.
\end{lemma}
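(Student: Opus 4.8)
The plan is to reduce the existence of a positive QRF to the feasibility of a single linear program over $\Q$ that is analogous to $\qrfLP$, and then invoke the fact that rational LP feasibility is decidable in time polynomial in the bit-size of the input. Recall the matrix characterization stated before Theorem~\ref{thm:onedim_rf}: a linear map $f(p\bv) = \bc_f^\top\bv + \weights_f(p)$ is a QRF iff $\bc_f \ge \zeroVec$ and every column of $\bc_f^\top\updates - \weights_f^\top\flowMatrix$ is either $0$ or $\le -1$, and it is positive iff in addition $\bc_f > \zeroVec$. The two features that block a naive LP encoding are the strict inequality $\bc_f > \zeroVec$ and the disjunctive ``$0$ or $\le -1$'' condition; the key observation I would use is that the relevant constraints are homogeneous in $(\bc_f,\weights_f)$, so both features can be removed by rescaling.

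Concretely, I would consider the LP feasibility problem $\qrfLP'$ asking for $\rankCoeff \in \Q^d$ and $\offsets \in \Q^{\locs}$ with
\begin{align*}
  \rankCoeff &\ge \oneVec,\\
  \rankCoeff^\top \cdot \updates - \offsets^\top \cdot \flowMatrix &\le \zeroVec^\top,
\end{align*}
and then prove that $\qrfLP'$ is feasible iff $\A$ admits a positive QRF. The direction ``positive QRF $\Rightarrow$ $\qrfLP'$ feasible'' I expect to be immediate: $(\bc_f,\weights_f)$ already satisfies the second constraint (each column is $0$ or $\le -1$, hence $\le 0$), and rescaling by $\lambda := 1/\min_i\bc_f(i) \ge 1$ preserves the (homogeneous) second constraint while forcing $\lambda\bc_f \ge \oneVec$. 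For the converse, given a feasible $(\rankCoeff,\offsets)$ I would let $v_\transition$ denote the $\transition$-th entry of $\rankCoeff^\top\updates - \offsets^\top\flowMatrix$ (so $v_\transition \le 0$ for every transition), choose $\lambda \ge 1$ large enough that $\lambda|v_\transition| \ge 1$ whenever $v_\transition < 0$ (there are finitely many transitions), and verify that $g(p\bv) = (\lambda\rankCoeff)^\top\bv + (\lambda\offsets)(p)$ is a positive QRF: its normal $\lambda\rankCoeff$ is strictly positive, and the $\transition$-th column of $(\lambda\rankCoeff)^\top\updates - (\lambda\offsets)^\top\flowMatrix$ equals $\lambda v_\transition$, which is $0$ when $v_\transition = 0$ (so $\transition$ is $g$-neutral) and $\le -1$ when $v_\transition < 0$ (so $\transition$ is $g$-ranked).

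Finally I would note that $\qrfLP'$ has $O(d + |\locs|)$ variables and $O(d + |\trans|)$ constraints with integer coefficients of bit-size $O(\size{\A})$, so its feasibility over $\Q$ is decidable in time polynomial in $\size{\A}$ by standard linear programming (e.g. the ellipsoid method), which gives the claim. I do not anticipate a serious obstacle; the only point that needs care is the rescaling equivalence above, i.e. arguing that a strictly-positive-normal solution of the relaxed ``$\le \zeroVec^\top$'' system can always be turned into a genuine positive QRF (all columns exactly $0$ or $\le -1$), which works precisely because only finitely many columns are involved so a single common scaling factor suffices.
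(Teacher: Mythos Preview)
Your proposal is correct and follows essentially the same approach as the paper. The paper encodes the problem as the LP $\max\varepsilon$ subject to $\by_\updates^\top\updates-\by_\flowMatrix^\top\flowMatrix\le\zeroVec^\top$ and $\by_\updates\ge\vec{\varepsilon}$, and then checks whether a solution with $\varepsilon>0$ exists; by homogeneity this is exactly your feasibility problem with $\rankCoeff\ge\oneVec$, and the paper uses the same rescaling argument (divide by the smallest magnitude of a strictly negative column) to turn a feasible point into a genuine positive QRF.
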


We now finish the proof of Theorem~\ref{thm:termination-dimension}.
We note that computing the QRFs in the algorithm can be done by linear
programming.
We next consider the set of recursive calls made at recursion depth $i$.
The VASSs of these recursive calls are all disjoint sub-VASSs of $\vass$.
Thus, the complexity of solving all the optimization problems at level $i$ is
bounded by the complexity of solving $\qrfLP$ for $\vass$.
Hence, the overall complexity of $\dec(\vass)$ is the complexity of solving
$\qrfLP$ times the dimension $d$.

\textbf{Proof of Theorem~\ref{thm:nonterm}.}
\newcommand{\conSysA}{\ensuremath{A}}
\newcommand{\conSysB}{\ensuremath{B}}
\newcommand{\character}{\ensuremath{\mathit{char}}}
\newcommand{\identity}{\ensuremath{\mathbf{Id}}}
Let $\vass = \ce{\locs,\trans}$ be a VASS.
Consider the constraint systems ($\conSysA_\transition$) and
($\conSysB_\transition$) stated below.
Both constraint systems are parameterized by a transition $\transition \in T$.
Constraint system ($\conSysA_\transition$) is taken from Kosaraju and
Sullivan~\cite{KS88}. Note that system $\conSysA_\transition$ has a rational solution if and only if it has an integer solution.

\vspace{0.2cm}
\begin{tabular}{|c|c|}
	\hline
	{\begin{minipage}[c]{0.4\linewidth}
			\vspace{0.2cm}
			constraint system ($\conSysA_\transition$):
			\begin{align}
			\updates \counters & \ge \zeroVec \label{multcycle:eq1}\\
			\counters & \ge \zeroVec \label{multcycle:eq2} \\
			\flowMatrix \counters & = \zeroVec \label{multcycle:eq3}\\
			\counters(\transition) & \ge 1 \label{multcycle:eq4}
			\end{align}
			\vspace{-0.4cm}
		\end{minipage}}
		&
		{\begin{minipage}[c]{0.5\linewidth}
				\vspace{0.2cm}
				constraint system ($\conSysB_\transition$):
				\begin{align*}
				\rankCoeff & \ge  \zeroVec\\
				\rankCoeff^\top \cdot \updates - \offsets^\top \cdot \flowMatrix & \le
				\zeroVec^\top \text{ with } -1 \\
				& \quad \text{ in column } \transition
				\end{align*}
			\end{minipage}}\\
			\hline
		\end{tabular}
		\vspace{0.2cm}
		
		The next lemma shows the connection between $\conSysA_\transition$ and multi-cycles in $\A$.
		We call a multi-cycle $\multCycle$ \emph{non-negative} if $\eff(\multCycle)\geq \vec{0}$.
		
		\begin{lemma}[Cited from~\cite{KS88}]
			\label{lem:solution-is-multcycle}
			There is a solution $\counters \in \mathbb{Z}^{T}$  to constraints
			(\ref{multcycle:eq1})-(\ref{multcycle:eq3}) iff there exists a
			non-negative multi-cycle $\multCycle$ such that the number of times a transition $t$ appears in cycles of $\multCycle$ is at least $\counters(\transition)$, for each $\transition \in T$.
		\end{lemma}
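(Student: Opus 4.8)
The plan is to prove the two directions of the equivalence through an explicit, effect‑preserving correspondence between integer solutions of (\ref{multcycle:eq1})--(\ref{multcycle:eq3}) and non‑negative multi‑cycles, in each case recording how many times every transition is traversed. Concretely: a solution $\counters$ is turned into a multi‑cycle that uses each transition $\transition$ exactly $\counters(\transition)$ times, and a non‑negative multi‑cycle is turned into its transition‑count vector, which is again a solution. In particular the displayed ``at least $\counters(\transition)$'' clause will always be met with equality.

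For the direction from a solution to a multi‑cycle, I would observe that (\ref{multcycle:eq2}) and (\ref{multcycle:eq3}) are exactly the hypotheses under which the decomposition procedure recalled at the beginning of Section~\ref{sec-linear} applies: repeatedly find a simple cycle supported on transitions with positive $\counters$‑value and subtract its characteristic vector, obtaining a multi‑cycle $\multCycle$ whose members are the peeled cycles. The procedure terminates since $\oneVec^\top\counters$ strictly decreases (by the length of the peeled cycle) at each step, and by construction each transition $\transition$ is traversed by $\multCycle$ exactly $\counters(\transition)$ times. Since $\counters$ is the sum of the characteristic vectors of the peeled simple cycles and $\updates$ is linear, $\eff(\multCycle)=\sum_{\cycle\in\multCycle}\eff(\cycle)=\updates\counters$, which is $\geq\zeroVec$ by (\ref{multcycle:eq1}); hence $\multCycle$ is non‑negative.

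For the converse, let $\multCycle$ be a non‑negative multi‑cycle and let $\counters\in\mathbb{Z}^{\trans}$ be its transition‑count vector, i.e.\ $\counters(\transition)$ is the total number of occurrences of $\transition$ across the cycles of $\multCycle$; this $\counters$ is the solution I would exhibit, and every transition is traversed exactly $\counters(\transition)$ times. Then (\ref{multcycle:eq2}) is immediate. For (\ref{multcycle:eq3}), every simple cycle $\cycle$ is a closed walk, so at each state the number of traversed transitions leaving it equals the number entering it; this is precisely $\flowMatrix\chi_\cycle=\zeroVec$ for the characteristic vector $\chi_\cycle$ of $\cycle$ (a self‑loop contributes the all‑zero column of $\flowMatrix$, consistently), and summing over the cycles of $\multCycle$ gives $\flowMatrix\counters=\zeroVec$. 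Finally (\ref{multcycle:eq1}) follows from $\updates\counters=\sum_{\cycle\in\multCycle}\eff(\cycle)=\eff(\multCycle)\geq\zeroVec$. I do not expect a real obstacle: the content is just the bookkeeping that the decomposition of Section~\ref{sec-linear} uses each transition exactly $\counters(\transition)$ times with aggregate effect $\updates\counters$, and that the characteristic vector of any cycle lies in the kernel of $\flowMatrix$ — both immediate from the definitions of the update matrix $\updates$ and the oriented incidence matrix $\flowMatrix$.
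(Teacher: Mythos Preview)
Your proposal is correct and follows the same decomposition argument the paper already sketches at the start of Section~\ref{sec-linear} (the lemma itself is cited from~\cite{KS88} without a formal proof in the paper): peel off simple cycles using the flow constraint $\flowMatrix\counters=\zeroVec$ to build $\multCycle$ with exact transition counts $\counters(\transition)$, and conversely take the transition-count vector of a non-negative multi-cycle. Your observation that the ``at least'' clause is met with equality, and your verification that $\eff(\multCycle)=\updates\counters$ and $\flowMatrix\chi_\cycle=\zeroVec$, are exactly the required bookkeeping.
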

	
	On the other hand, the system $\conSysB_\transition$ is connected to QRFs.
	
	\begin{lemma}
\label{lem:constr-to-QRF}
Constraint system ($\conSysB_\transition$) has a rational solution $\bc,\bw$ if and only if there exists a $k \in \R^+$ and a QRF
$\qrf$ with $\bc_\qrf = k \cdot \bc$ and $\bw_\qrf = k\cdot \bw$ such that transition $\transition$ is $\qrf$-ranked and every other
transition is $\qrf$-ranked or $\qrf$-neutral.		
	\end{lemma}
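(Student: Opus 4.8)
The plan is to reduce the whole statement to a sign condition on the entries of the single row vector $\bc_\qrf^\top\updates - \bw_\qrf^\top\flowMatrix$, using the matrix reformulation of the ranking conditions recorded just before Theorem~\ref{thm:onedim_rf}. The point is that for a linear map $\qrf$ with normal $\bc_\qrf$ and weighting $\bw_\qrf$ and for any transition $\transition'=(p,\bu,q)$, the entry in column $\transition'$ of $\bc_\qrf^\top\updates - \bw_\qrf^\top\flowMatrix$ equals $\bc_\qrf^\top\bu + \bw_\qrf(q) - \bw_\qrf(p)$ — this is immediate from the definitions of $\updates$ and $\flowMatrix$ and covers the self-loop and non-self-loop cases uniformly — so $\transition'$ is $\qrf$-neutral iff this entry is $0$ and $\qrf$-ranked iff it is $\le -1$. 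Hence ``$\qrf$ is a QRF, $\transition$ is $\qrf$-ranked, and every other transition is $\qrf$-ranked or $\qrf$-neutral'' is the same as ``$\bc_\qrf \ge \zeroVec$, every entry of $\bc_\qrf^\top\updates - \bw_\qrf^\top\flowMatrix$ is $0$ or $\le -1$, and its $\transition$-entry is $\le -1$''. The only way this differs from $(\conSysB_\transition)$ is that $(\conSysB_\transition)$ merely asks for all entries $\le 0$ (with the $\transition$-entry $\le -1$), whereas a QRF additionally forbids entries strictly between $-1$ and $0$.

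For the implication from a QRF back to a solution, I would just check that the pair $\bc:=\bc_\qrf$, $\bw:=\bw_\qrf$ is already a rational solution of $(\conSysB_\transition)$: it is rational because a QRF has rational weights and hence a rational normal; $\bc_\qrf\ge\zeroVec$; every transition being $\qrf$-ranked or $\qrf$-neutral makes every entry of $\bc_\qrf^\top\updates-\bw_\qrf^\top\flowMatrix$ at most $0$; and $\transition$ being $\qrf$-ranked makes the $\transition$-entry $\le -1$. So $(\conSysB_\transition)$ is feasible, and the scaling relation of the statement holds with, e.g., $k$ the given value and $(\bc_\qrf/k,\bw_\qrf/k)$ the solution. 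This direction needs nothing beyond unfolding definitions.

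For the implication from a solution to a QRF, I would start from a rational solution $\bc,\bw$ of $(\conSysB_\transition)$ and fix its only possible defect — entries of $\bc^\top\updates - \bw^\top\flowMatrix$ in the open interval $(-1,0)$ — by a single uniform rescaling. Since $\trans$ is finite, the strictly negative entries attain a minimum absolute value $v_{\min}>0$ (the set is nonempty, as the $\transition$-entry is $\le -1$); I would take $k:=\max\{1,1/v_{\min}\}\in\Q^+$ and let $\qrf$ be the linear map with $\bc_\qrf:=k\bc$ and $\bw_\qrf:=k\bw$. Then $\bc_\qrf\ge\zeroVec$ and $\bc_\qrf^\top\updates - \bw_\qrf^\top\flowMatrix = k(\bc^\top\updates - \bw^\top\flowMatrix)$, where a zero entry stays $0$ and a negative entry $v$ becomes $kv\le -1$; so $\qrf$ is a QRF with every transition $\qrf$-ranked or $\qrf$-neutral, its $\transition$-entry is $\le -k\le -1$, making $\transition$ $\qrf$-ranked, and $\bc_\qrf=k\bc$, $\bw_\qrf=k\bw$ hold by construction. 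The only genuine (and mild) obstacle is exactly this rescaling: one must invoke finiteness of $\trans$ to get a single factor $k$ working simultaneously for all transitions, and must keep $k\ge 1$ so the $\transition$-entry is not pulled back above $-1$; the rest is bookkeeping with the sign conventions of $\updates$ and $\flowMatrix$. This is also why the lemma is stated only up to the positive scalar $k$.
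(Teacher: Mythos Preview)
Your proposal is correct and follows essentially the same approach as the paper: both directions hinge on the matrix reformulation of the ranking conditions, with the backward direction being immediate and the forward direction handled by a single positive rescaling that pushes any entries of $\bc^\top\updates-\bw^\top\flowMatrix$ lying in $(-1,0)$ down to at most $-1$. Your choice of scaling factor $k=\max\{1,1/v_{\min}\}$ differs cosmetically from the paper's $1/|\max_{\hat t}\bu(\hat t)|$ (taken over the negative entries), but the argument is the same.
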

	
The following result is an immediate consequence of Farkas' Lemma.

\begin{lemma}
	\label{lem:ranking-or-witness}
	For each $t\in\trans$ exactly one of the constraint systems ($\conSysA_\transition$) and
	($\conSysB_\transition$) has a solution.
\end{lemma}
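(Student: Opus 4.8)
The plan is to obtain the dichotomy directly from the affine ``theorem of alternatives'' form of Farkas' lemma: for any rational matrix $A$ and rational vector $b$, exactly one of the systems $\{x \ge \vec{0},\ Ax \le b\}$ and $\{y \ge \vec{0},\ y^{\top} A \ge \vec{0}^{\top},\ y^{\top} b < 0\}$ is solvable over $\Q$. (This is standard; one way to see it is to apply LP duality to $\max\{\,0 : Ax\le b,\ x\ge\vec 0\,\}$, whose dual $\min\{\,b^{\top}y : A^{\top}y\ge\vec 0,\ y\ge\vec 0\,\}$ is always feasible, so the primal is feasible iff the dual is bounded below, i.e. iff the second system is infeasible.) I would pick $A$ and $b$ assembled from $\updates$ and $\flowMatrix$ so that the first alternative is exactly $(\conSysB_\transition)$ and the second is exactly $(\conSysA_\transition)$.

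First I would rewrite $(\conSysB_\transition)$ in the form ``$x\ge\vec 0$, $Ax\le b$''. Transposing the row inequality, ``$\rankCoeff^{\top}\updates - \offsets^{\top}\flowMatrix \le \vec 0^{\top}$ with $-1$ in column $\transition$'' becomes $\updates^{\top}\rankCoeff - \flowMatrix^{\top}\offsets \le -e_{\transition}$, where $e_{\transition}$ denotes the $\transition$-th standard basis vector of $\Q^{\trans}$ (the constraint ``$\le\vec 0$ in every component and $\le -1$ in component $\transition$'' cuts out the same polyhedron as ``$\le -e_{\transition}$'', and a positive rescaling turns $\le -1$ into $=-1$ should one prefer that reading). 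Since the weighting vector $\offsets$ is sign-unrestricted, I would split $\offsets = \offsets^{+} - \offsets^{-}$ with $\offsets^{\pm}\ge\vec 0$; then $(\conSysB_\transition)$ is equivalent to the existence of $z = (\rankCoeff,\offsets^{+},\offsets^{-}) \ge \vec 0$ with $M z \le -e_{\transition}$, where $M = [\,\updates^{\top} \mid -\flowMatrix^{\top} \mid \flowMatrix^{\top}\,]$. This is precisely the first Farkas alternative for $A = M$, $b = -e_{\transition}$.

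It then remains to identify the second alternative with $(\conSysA_\transition)$. Spelled out, it asks for $\counters\in\Q^{\trans}$ with $\counters\ge\vec 0$, $\counters^{\top}M \ge \vec 0^{\top}$ and $\counters^{\top}(-e_{\transition}) < 0$, i.e. $\counters(\transition) > 0$. Reading off the three blocks of $M$, the condition $\counters^{\top}M\ge\vec 0^{\top}$ is equivalent to $\updates\counters\ge\vec 0$ together with both $\flowMatrix\counters\ge\vec 0$ and $\flowMatrix\counters\le\vec 0$, i.e. $\flowMatrix\counters=\vec 0$. Thus the second alternative reads: there is $\counters\ge\vec 0$ with $\updates\counters\ge\vec 0$, $\flowMatrix\counters=\vec 0$ and $\counters(\transition)>0$. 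All of these constraints except the last are homogeneous, so replacing $\counters$ by $\max\{1,\,1/\counters(\transition)\}\cdot\counters$ shows this system is solvable exactly when $(\conSysA_\transition)$ is (which only strengthens $\counters(\transition)>0$ to $\counters(\transition)\ge 1$). Hence the two Farkas alternatives are $(\conSysB_\transition)$ and $(\conSysA_\transition)$, and since every solvable rational system has a rational solution, exactly one of them is solvable, as claimed. The argument is essentially bookkeeping; the only mildly delicate steps are the sign-splitting of $\offsets$ and the rescaling reconciling the strict positivity $\counters(\transition)>0$ produced by Farkas with the normalization $\counters(\transition)\ge 1$ used in $(\conSysA_\transition)$ --- no genuine obstacle arises.
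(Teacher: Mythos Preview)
Your proof is correct and takes essentially the same approach as the paper: both apply a standard form of Farkas' Lemma and then do the bookkeeping to match the two alternatives to $(\conSysA_\transition)$ and $(\conSysB_\transition)$. The only cosmetic difference is that the paper uses the variant of Farkas with an explicit equality block (so the dual variable $\offsets$ stays sign-unrestricted) and starts from $(\conSysA_\transition)$, whereas you use the inequality-only variant, split $\offsets=\offsets^{+}-\offsets^{-}$, and start from $(\conSysB_\transition)$; the rescaling to pass between $\counters(\transition)>0$ and $\counters(\transition)\ge 1$ is identical in both.
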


We now finish the proof of Theorem~\ref{thm:nonterm}.
Because Algorithm~\ref{algo:main} returns ``non-terminating'', there is a sub-VASS $\vass'$ of $\A$, encountered during some recursive call, such that no transition of $\A$ is $\qrf$-ranked for any QRF $\qrf$.
Hence, constraint system $\conSysB_\transition$ is unsatisfiable for every transition $\transition$ of $\vass'$.
By Lemma~\ref{lem:ranking-or-witness}, constraint system $\conSysA_\transition$ is satisfiable.
We consider the non-negative multi-cycle $\multCycle$ associated to an integer solution of $\conSysA_\transition$.
This multi-cycle contains at least transition $\transition$.
Because such a multi-cycle exists for every transition $\transition$, we can
combine all these multi-cycles into a single non-negative cycle, which shows
that $\A$ is non-terminating.

\paragraph{Connection to~\cite{KS88}.}
Algorithm~\ref{algo:main} extends algorithm ZCYCLE of Kosaraju and Sullivan~\cite{KS88} by a ranking function construction.
Because of the duality stated in Lemma~\ref{lem:ranking-or-witness},
the ranking function construction part can be interpreted as the \emph{dual} of algorithm ZCYCLE.
Algorithm~\ref{algo:main} makes use of this duality to achieve completeness:
it either returns a ranking function, which witnesses termination, or it returns a non-negative cycle, which witnesses non-termination.
The duality also means that ranking function construction comes essentially for free, as primal-dual LP solvers simultaneously generate solutions for both problems.
An additional result is the improved analysis of the recursion depth:
\cite{KS88} uses the fact that the number of locations $|Q|$ is a trivial upper bound of the recursion depth,
while we have shown the bound $\dimOp(\vass)+1$ (see Theorem~\ref{thm:termination-dimension}).
With this result and with LP ($\qrfLP$), which simultaneously solves all
constraint systems ($\conSysA_\transition$)/($\conSysB_\transition$) and thus
avoids an iteration over $\transition$, we affirmatively answer the open
question of Kosaraju and Sullivan~\cite{KS88}, whether the
complexity can be expressed as a polynomial function in the dimension $d$ times
the complexity of a linear program.

\textbf{Proof of Theorem~\ref{thm:O-bound}.}
First, we will prove the $\bigO$-bound by induction on the depth of recursion (of \dec). More precisely, if the algorithm return $tight = \mathit{true}$ and the depth of recursion (number of calls) is $i$, the termination complexity is in $\bigO(n^{i+1})$.
\begin{compactitem}
	\item If there is no recursive call of procedure \dec then QRF $f$ obtained 
	on line~\ref{alg-line-7} is actually a RF, because $T_f = \emptyset$ i.e., 
	all transitions are $f$-ranked. Due to Theorem~\ref{thm:onedim_rf} we have 
	$\term_\A \in~\bigO(n)$.
	
	\item	Let $i>0$ be the recursion depth. Assume the claim is correct for every run of the algorithm with recursion depth $<i$. By induction hypothesis we have that every SCC $\A_j$ of $\A_{T_f}$ has termination complexity $\term_{\A_j} \in \bigO(n^i)$.
	
	Let $q_0\bu_0$ be an initial configuration. Now assume we have a VASS $\A$ and QRF $f$. If a transition is $f$-ranked, the $f$-value of the next configuration decreases by at least 1. If it is $f$-neutral, it does not increase. Notice that every configuration $p\bv$ satisfies $f(p\bv) = \bc_f^\top \cdot \bv + \weights_f(p) \geq \weights_f(p) \geq \min_{q \in \locs} \weights_f(q)$ since $\bv$ and $\bc_f$ are non-negative. Therefore, any zero-avoiding path can have at most $f(q_0\bu_0) - \min_{q \in \locs} \weights_f(q)$ of $f$-ranked transitions.
	
	Let $g$ be the positive QRF whose existence is ensured on line~\ref{alg-line-1} of the algorithm (since algorithm returns $(k,\mathit{true})$). We give a linear bound on the size of counters in every configuration $p\bv$ reachable from $q_0\bu_0$. Since $g$ is a QRF, we have
	\begin{align*}
	g(q_0\bu_0) &\geq g(p\bv),\\
	\weights_g(q_0) + \sum_{i=1}^d \bc_g(i) \cdot \bu_0(i) &\geq \weights_g(p) + \sum_{i=1}^d \bc_g(i) \cdot \bv(i).
	\end{align*}
	Let $c_{max} = \max_{i = 1,\dots,d} \bc_g(i)$ and $c_{min} = \min_{i=1,\dots,d} \bc_g(i)$. Now for any counter $j \in \{1,\dots,d\}$ we have the following estimates.
	\begin{align*}
	\weights_g(q_0) - \weights_g(p) + \sum_{i=1}^d c_{max} \cdot \bu_0(i) &\geq c_{min} \bv(j),\\
	\frac{\weights_g(q_0) - \min_{q \in \locs} \weights_g(q)}{c_{min}} + d \cdot \frac{c_{max}}{c_{min}} \cdot \max_{i=1,\dots,d} \bu_0(i) &\leq \bv(j).
	\end{align*}
	
	Therefore, the size of any reachable configuration is
	linearly bounded by the size of the initial configuration and after $\bigO(n^i)$ transitions in some SCC we have to do at least one $f$-ranked transition. From this we obtain $\term(\A) =
	\bigO(n^{i+1})$ and the proof is done.
\end{compactitem}
Now we want to prove that the VASS terminates even if $\precise = \mathit{false}$. Again, we do the proof by induction on the depth of recursion of \dec. The base of the induction is the same as in the proof of $\bigO$-bound. In the induction step we assume only that every SCC of $\A_{T_f}$ is terminating. Again, no transition increases the $f$-value and we can do only $f(q_0\bu_0) - \min_{q \in \locs} \weights_f(q)$ of $f$-ranked transitions, therefore $\A$ is terminating (we cannot stay in one SCC indefinitely and by switching between them, we have to make at least one $f$-ranked transition).

\textbf{Proof of Theorem~\ref{thm:omega-bound}. } We now prove the correctness
of our algorithm w.r.t. lower bounds. To do this, we show how to construct, for
each sufficiently large $n$, a path of length $\Omega(n^k)$ which results into
a computation. We start with a lemma which shows a useful
property of a QRF
$f$ that maximizes the number of $f$-ranked transitions: whenever we have a
cycle consisting solely of $f$-neutral transitions, the effect of this cycle
can be in some sense compensated by executing a combination of some other
cycles.

\begin{lemma}\label{lem:goodQRF}
	Let $\A$ be a connected VASS, and let $f$ be a QRF for $\A$ which maximizes the number of $f$-ranked transitions.
	Let $\bc_f$ be the normal of $f$.
	Then for each vector $\bv \in \Inc$ with $\bc_f^\top \cdot \bv = 0$ there exists a vector $\bw \in \cone(\Inc)$ such that $\bv + \bw \geq \vec{0}$.
\end{lemma}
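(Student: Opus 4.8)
The plan is to argue by contradiction, exploiting the maximality of $f$ (the number of $f$-ranked transitions) together with a separation argument. Suppose there is some $\bv \in \Inc$ with $\bc_f^\top \cdot \bv = 0$ for which no $\bw \in \cone(\Inc)$ satisfies $\bv + \bw \geq \vec{0}$. Equivalently, $-\bv \notin \cone(\Inc) + \R^d_{\geq 0}$, where $\R^d_{\geq 0}$ denotes the nonnegative orthant. Since $\cone(\Inc) + \R^d_{\geq 0}$ is a finitely generated (hence closed, convex) cone, by the separating-hyperplane theorem for cones there is a vector $\bn \in \R^d$ such that $\bn^\top \cdot \bx \geq 0$ for all $\bx$ in this cone but $\bn^\top \cdot (-\bv) < 0$, i.e. $\bn^\top \cdot \bv > 0$. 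From $\bn^\top \cdot \bx \geq 0$ on the whole nonnegative orthant we get $\bn \geq \vec{0}$, and from $\bn^\top \cdot \bu \geq 0$ for every $\bu \in \Inc$ (generators of $\cone(\Inc)$) we get that $\bn$ gives a non-negative value to every simple-cycle effect.

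The next step is to use this $\bn$ to build a \emph{strictly better} QRF than $f$, contradicting maximality. Fix a simple cycle $\gamma$ with $\eff(\gamma) = \bv$; at least one transition $t^\ast$ on $\gamma$ satisfies $\bc_f^\top\cdot \bu_{t^\ast} - (\weights_f(p)-\weights_f(q)) < 0$ would already make $t^\ast$ $f$-ranked — the point is rather the opposite: since $\bc_f^\top \cdot \eff(\gamma) = 0$ and every transition of $\gamma$ is $f$-ranked or $f$-neutral (each contributing $\leq 0$ to $\bc_f^\top\cdot\eff(\gamma) + \weights_f$-telescoping, which sums to $0$ around the cycle), \emph{all} transitions on $\gamma$ must be $f$-neutral. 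Now consider the perturbed linear map $g$ with normal $\bc_g = \bc_f + \varepsilon\, \bn$ and weighting $\weights_g = \weights_f + \varepsilon\, \weights'$, where $\weights'$ is chosen so that the telescoping contribution of $\bn$ around simple cycles is absorbed into the offsets (this is possible because $\A$ is strongly connected and $\bn^\top\cdot\eff(\gamma)\ge 0$ for all simple cycles, so the "potential" $\bn^\top\cdot\bv$ at a configuration can be realized by an offset function on states up to the cycle contributions). For $\varepsilon > 0$ small enough: $\bc_g \geq \vec 0$; every transition that was $f$-ranked stays $f$-ranked (strict inequality $\leq -1$ is an open condition — here we use that only finitely many transitions exist); every $f$-neutral transition becomes either $g$-neutral or $g$-ranked; and crucially, since $\bn^\top \cdot \bv > 0$, the cycle $\gamma$ acquires strictly negative total change under $g$, so at least one transition on $\gamma$ — previously $f$-neutral — becomes $g$-ranked. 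Hence $g$ is a QRF with strictly more ranked transitions than $f$, the desired contradiction.

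The main obstacle is the bookkeeping in the second step: turning the separating normal $\bn$ into a \emph{valid} QRF perturbation, i.e. choosing the offset correction $\weights'$ so that $\bc_g^\top\cdot\bu + \weights_g(q) - \weights_g(p) \le 0$ holds for \emph{every} transition (not just summed around cycles), while keeping the inequality strict on at least one transition of $\gamma$ and preserving the $\le -1$ margin on already-ranked transitions. The natural route is: the constraint that a linear map is a QRF is a system of linear inequalities in $(\bc, \weights)$, so it defines a polyhedron $\mathcal{Q}$; $f$ corresponds to an interior point of the face on which exactly the $f$-neutral transitions are tight; the vector $\bn$ together with the appropriate $\weights'$ lies in the recession cone of $\mathcal{Q}$ (by the cycle-non-negativity of $\bn$, using strong connectedness to reconstruct $\weights'$ from the generators — essentially a discrete potential / Bellman-style argument on the transition graph), and moving from $f$ in this recession direction by a small $\varepsilon$ stays in $\mathcal{Q}$ while making the $\gamma$-transition-constraint strictly tighter. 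One must also handle the edge case where $\gamma$ is a self-loop (the incidence column is $\vec 0$), which is immediate. I expect the remaining verifications ("finitely many transitions, so small $\varepsilon$ works", "ranked stays ranked") to be routine.
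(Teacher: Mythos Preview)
Your separation reformulation has a sign error that breaks the argument. The condition ``there exists $\bw \in \cone(\Inc)$ with $\bv + \bw \ge \vec{0}$'' is equivalent to $-\bv \in \cone(\Inc) - \R^d_{\ge 0}$, not $-\bv \in \cone(\Inc) + \R^d_{\ge 0}$ as you write (check: $-\bv = \bw - \be$ with $\be \ge 0$ rearranges to $\bv + \bw = \be \ge 0$). Separating $-\bv$ from the correct cone $\cone(\Inc) - \R^d_{\ge 0}$ yields a normal $\bn$ with $\bn^\top \bu \ge 0$ for all $\bu \in \Inc$ and $\bn^\top(-e_i) \ge 0$ for all $i$, hence $\bn \le \vec{0}$, not $\bn \ge \vec{0}$. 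More fatally for the argument as written: with your signs you get $\bn^\top \bu \ge 0$ on every simple-cycle effect, so for $\bc_g = \bc_f + \varepsilon \bn$ you compute $\bc_g^\top \bv = \varepsilon\, \bn^\top \bv > 0$, i.e.\ the cycle $\gamma$ acquires \emph{positive} total change under $g$, and $g$ is not a QRF at all. The claim ``the cycle $\gamma$ acquires strictly negative total change under $g$'' is simply false with your $\bn$.

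The strategy is salvageable: with the corrected cone you get $\bm := -\bn \ge \vec{0}$ satisfying $\bm^\top \bu \le 0$ for all $\bu \in \Inc$ and $\bm^\top \bv < 0$; the cycle inequalities $\bm^\top \eff(\gamma') \le 0$ guarantee (via a Bellman-type potential) offsets $\bw'$ with $\bm^\top \bu + \bw'(q) - \bw'(p) \le 0$ for every transition; adding $(\bm,\bw')$ to $(\bc_f,\bw_f)$ and rescaling yields a QRF ranking strictly more transitions, the desired contradiction. This fixed version is essentially the LP dual of the paper's proof, which is direct rather than by contradiction: the paper observes that every transition $t$ on $\gamma$ is $f$-neutral, hence (by maximality of $f$) the ranking system $(\conSysB_t)$ is infeasible, hence by Farkas (Lemma~\ref{lem:ranking-or-witness}) the multi-cycle system $(\conSysA_t)$ is feasible; summing the resulting non-negative multi-cycles over all $t$ on $\gamma$ gives a non-negative multi-cycle containing $\gamma$, and subtracting $\gamma$ produces $\bw$ explicitly. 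The paper's route avoids the offset-reconstruction step and delivers the compensating $\bw$ constructively.
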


We now proceed with the proof of Theorem~\ref{thm:omega-bound}. We show that if
$\A$ is a strongly connected VASS, and the call $\texttt{Decompose}(\A)$
returns a number $k\in \N$, then for all sufficiently large $n\in\N$ there
exists a configuration $p_n \vec{n}$ and a computation $\beta_n$
of length at least $b\cdot n^k$ initiated in $p_n\vec{n}$, where $b\in \Q^+$ is
a fixed
positive constant independent of $n$. We proceed by induction on $k$. If $k=1$,
then $\A$ admits a RF and the existence of such a zero-avoiding computation of
linear length follows from Theorems~\ref{thm:onedim_rf}
and~\ref{thm:precise-complexity}.  Now assume that $k>1$. Then the call
$\texttt{Decompose}(\A)$ must result in a recursive sub-call
$\texttt{Decompose}(\A')$ which returns $k-1$. We prove that for all
sufficiently large $n$ there
exists a computation initiated in some $p_n\vec{n}$ of length
$\Omega(n^k)$. We
prove the existence of such a path in several sub-steps.

 \emph{Constructing the paths of length $\Omega(n^{k-1})$.}\quad Since the
 termination complexity of $\A'$ is $\Omega(n^{k-1})$ (by induction
 hypothesis), there is $b \in
 \R^+$ such that for all sufficiently large $m \in \N$ there exist a
 configuration $p_m \vec{m}$ and a computation $\beta_m$ of
 length at least $b \cdot m^{k-1}$ initiated in $p_m \vec{m}$. Since
 $\pi_{\beta_m}$ inevitably contains a cycle whose length is at least $b' \cdot
 m^{k-1}$ (for some fixed $b' \in \R^+$ independent of $\beta_m$), we can
 safely
 assume that $\pi_{\beta_m}$ is actually a cycle, which implies
 $\eff(\pi_{\beta_m}) \in \cone(\Inc)$.

 \emph{Constructing the compensating path.}\quad Since $\pi_{\beta_m}$ is such
 that $\bc_f^\top \cdot \eff(\pi_{\beta_m}) = 0$ and $\eff(\pi_{\beta_m}) \in
 \cone(\Inc)$, it follows from Lemma~\ref{lem:goodQRF} that there exists $\bu
 \in \cone(\Inc)$ such that $\bu + \eff(\pi_{\beta_m}) \geq \vec{0}$ i.e., $\bu \geq
 -\eff(\pi_{\beta_m})$. Since $\bu = \sum_{j=1}^k a_j \cdot \bv_j$, where $k
 \in \N$, $a_j \in \Q^+$, and $\bv_j \in \Inc$ for all $1 \leq j \leq k$, a
 straightforward idea is to define the compensating path by ``concatenating''
 $\lfloor a_j \rfloor$ copies of $\gamma_j$, where $\eff(\gamma_j) = \bv_j$,
 for all $1 \leq j \leq k$. This would produce the desired effect on the
 counters, but there is no bound on the counter decrease in intermediate
 configurations visited when executing this path. To overcome this problem, we choose $m$ and
 construct the compensating path for $\pi_{\beta_m}$ more carefully. We use the following lemma.

\begin{lemma}\label{lem-compens}
Let $\A$ be a VASS and $f$ a QRF maximizing the number of $f$-ranked transitions. Then there exists $\delta \in \R^+$ such that for every $m \in \N$ and every cycle $\pi_m$ with $\bc_f^\top \cdot \eff(\pi_m) = 0$ and $\eff(\pi_m)\geq -\vec{m}$ there is a path $\varrho_m$ such that $\eff(\pi_m) + \eff(\varrho_m) \geq -(d+1)\cdot |Q|\cdot\maxup{\A}$ and no counter is decreased by more than $\delta \cdot m$ along $\varrho_m$.
\end{lemma}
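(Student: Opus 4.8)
Write $\bc_f$ for the normal of $f$. I would build $\varrho_m$ in two stages: first locate a \emph{compensating vector} $\bw\in\cone(\Inc)$ with $\eff(\pi_m)+\bw\geq\vec{0}$ and $\ell_\infty$-norm linear in $m$, and then realise an approximation of $\bw$ by an actual path, using conic Carathéodory to keep the counter drop under control. Since $f$ is a QRF, the weight contributions along any cycle $\gamma$ cancel and one obtains $\bc_f^\top\cdot\eff(\gamma)\leq -(\text{number of }f\text{-ranked transitions on }\gamma)$; hence $\bc_f^\top\cdot\eff(\pi_m)=0$ forces every transition of $\pi_m$ to be $f$-neutral. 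Therefore each simple cycle occurring in $\Decomp(\pi_m)$ consists of $f$-neutral transitions only, so its effect lies in $\Inc$ and is annihilated by $\bc_f^\top$; listing the distinct such effects as $\bv^{(1)},\dots,\bv^{(s)}$ with multiplicities $n_1,\dots,n_s$ we get $\eff(\pi_m)=\sum_j n_j\bv^{(j)}$ (the residual of a cycle is the empty path, as a nontrivial closed walk always contains a simple cycle). Applying Lemma~\ref{lem:goodQRF} to each of the finitely many $\bv^{(j)}$ gives fixed vectors $\bw^{(j)}\in\cone(\Inc)$ with $\bv^{(j)}+\bw^{(j)}\geq\vec{0}$, so that $\bw:=\sum_j n_j\bw^{(j)}\in\cone(\Inc)$ and $\eff(\pi_m)+\bw=\sum_j n_j(\bv^{(j)}+\bw^{(j)})\geq\vec{0}$.

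The key step is to bound $\|\bw\|_\infty$ linearly in $m$. From $\sum_j n_j\bv^{(j)}=\eff(\pi_m)\geq-\vec{m}$ the rescaled vector $\frac{1}{m}(n_1,\dots,n_s)$ is feasible for the \emph{fixed} linear program $\max\{\sum_j x_j\mid \bx\geq\vec{0},\ \sum_j x_j\bv^{(j)}\geq-\vec{1}\}$ over the $f$-neutral simple-cycle effects. This LP is bounded, since an unbounded direction would be a nonzero $\bx\geq\vec{0}$ with $\sum_j x_j\bv^{(j)}\geq\vec{0}$, i.e.\ a nonempty non-negative multi-cycle whose cycles all lie in one SCC of the strongly connected $\A$; such a multi-cycle can be iterated indefinitely (by the standard characterisation of non-terminating strongly connected VASS, cf.~the proof of Theorem~\ref{thm:precise-complexity} and~\cite{KS88}), contradicting termination of $\A$. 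Hence $\sum_j n_j\leq\kappa m$ for a constant $\kappa$ depending only on $\A$, whence $\|\bw\|_\infty\leq\kappa m\cdot\max_j\|\bw^{(j)}\|_\infty=:K m$ with $K=K(\A,f)$. (If $\A$ is non-terminating, Lemma~\ref{lem-compens} is applied only where the surrounding argument of Theorem~\ref{thm:omega-bound} already concludes separately; alternatively, a parametric-LP argument shows that the nonempty polyhedron $\cone(\Inc)\cap\{\bw:\bw\geq-\eff(\pi_m)\}$ always contains a point of $\ell_\infty$-norm $O(\|\max(-\eff(\pi_m),\vec{0})\|_\infty)=O(m)$.)

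It remains to turn $\bw$ into a path. By Carathéodory's theorem for cones, $\bw=\sum_{i=1}^r c_i\bv_i$ with $r\leq d$, $c_i\geq0$ and $\bv_1,\dots,\bv_r\in\Inc$ linearly independent; as $\Inc$ is a fixed finite set whose vectors have all entries at most $|Q|\maxup{\A}$, Cramer's rule yields $c_i\leq\Gamma\|\bw\|_\infty\leq\Gamma K m$ for a constant $\Gamma=\Gamma(\A)$. Fix simple cycles $\gamma_i$ with $\eff(\gamma_i)=\bv_i$ and let $\varrho_m$ run, from a suitable state, $\lceil c_i\rceil$ copies of each $\gamma_i$, consecutive blocks joined by connecting sub-paths of length $\leq|Q|$ (available since $\A$ is strongly connected). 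Rounding up loses at most $|Q|\maxup{\A}$ per coordinate and per generator, and there are at most $d$ generators and connecting segments, so careful accounting gives $\eff(\varrho_m)\geq\bw-(d+1)\cdot|Q|\cdot\maxup{\A}$, hence $\eff(\pi_m)+\eff(\varrho_m)\geq-(d+1)\cdot|Q|\cdot\maxup{\A}$. Moreover every prefix of $\varrho_m$ has effect at least $-\big(\sum_i\lceil c_i\rceil+2d\big)\cdot|Q|\cdot\maxup{\A}\geq-(\Gamma K d+2d)\cdot|Q|\cdot\maxup{\A}\cdot m$, so starting $\varrho_m$ from any configuration no counter is decreased along it by more than $\delta m$, where $\delta:=(\Gamma K d+2d)\cdot|Q|\cdot\maxup{\A}$ depends only on $\A$ and $f$.

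The step I expect to be the real obstacle is the linear bound on the compensating vector: nothing on the surface prevents $\pi_m$ from traversing far more simple cycles than its deficit $m$ would suggest, and it is precisely termination of $\A$ — through the equivalence ``unbounded LP $\Leftrightarrow$ nonempty non-negative multi-cycle $\Leftrightarrow$ non-terminating VASS'' — that excludes this. Everything afterwards (conic Carathéodory, the rounding estimate, and the prefix-effect bound for the drop) is routine, apart from pinning down the exact constant $(d+1)\cdot|Q|\cdot\maxup{\A}$, which merely requires care in how the chosen simple cycles are concatenated into $\varrho_m$.
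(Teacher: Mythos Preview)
Your argument is correct and reaches the same conclusion, but it follows a genuinely different route from the paper's proof. The paper bounds the compensating vector by passing to an auxiliary VASS $\A'$ obtained by adding self-loops $(q,-e_i,q)$ for all $q,i$; it then shows that the maximizing QRF $g$ of $\A'$ ranks exactly the same transitions of $\A$ as $f$, and uses the positive components of $\bc_g$ to linearly cap the relevant counter values (the zero components are ``drained'' via the new loops). This yields a compensating vector of norm $O(m)$ without appealing to termination of $\A$. The paper then invokes a bespoke geometric lemma (their Lemma on bounded partial sums for a minimal conic representation inside a half-space) to control intermediate drops.

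Your approach replaces both devices by more standard tools: the $O(m)$ bound on $\|\bw\|_\infty$ comes from boundedness of a fixed LP over the $f$-neutral simple-cycle effects (equivalently, from the absence of a non-negative multi-cycle, i.e.\ termination), and the control of intermediate drops comes from conic Carath\'eodory plus Cramer's rule. This is more elementary and avoids the auxiliary-VASS detour, but it imports termination of $\A$ as an extra hypothesis. You handle that honestly---either by noting that the lemma is only invoked inside Theorem~\ref{thm:omega-bound} for terminating $\A$, or via your parametric-LP alternative, which is valid: for each $d$-subset $S\subseteq\Inc$ the polyhedron $\{c\geq 0:\sum_{i\in S}c_i\bv_i\geq -\eff(\pi_m)\}$ is pointed, and any vertex has coordinates linear in $\|\eff(\pi_m)\|_\infty\leq m$ by Cramer's rule; Carath\'eodory guarantees some $S$ is feasible. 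The only place to be a little more careful is the exact constant $(d{+}1)\,|Q|\,\maxup{\A}$: with $r\leq d$ generators and separate connecting paths your accounting naturally gives $O(d\,|Q|\,\maxup{\A})$, and matching the paper's constant requires using a single connecting cycle through all states (as the paper does) rather than $r$ separate connectors.
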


 \emph{Constructing a computation $\alpha_n$ of length
 $\Omega(n^k)$.}\quad Now we are ready to put the above ingredients together,
 which still requires some effort.
 
 Assume $p\bv$ is an initial configuration. Now we need only $|Q|$ transitions in order to get to the SCC $\A'$ where we execute a path $\pi_{\beta_m}$ with $\bc_f^\top \cdot \eff(\pi_{\beta_m}) = 0$ of length $\Omega(n^{k-1})$.
 
 We need to choose $m$ as large as possible but small enough so that we can execute path $\pi_{\beta_m}$ and its compensating path $\varrho_m$. At the end of $\pi_{\beta_m}$ some counters may be decreased by $m$ to $n-(|Q|\cdot\maxup{\A})-m$ (remember, we used at most $|Q|$ transitions to get to the starting state of $\pi_{\beta_m}$). Then we need to execute the compensating path $\varrho_m$. For this we need counters of size at most $\delta \cdot m$ + $|Q|\cdot \maxup{\A}$ (we need to reach the initial state of $\varrho_m$ and then execute this path). Together we need
\[
n-|Q|\cdot\maxup{\A}-m \geq \delta \cdot m + |Q| \cdot \maxup{\A}.
\]
We want to maximize $m$ (in order to get a long path). This yields
\[
m = \left\lfloor\frac{n - 2|Q|\cdot \maxup{\A}}{1+\delta}\right\rfloor.
\]
After this, every counter decreased by at most $(d+3)|Q|\cdot\maxup{\A}$ (we needed to get to the right SCC and then run the compensating path).

Repeating this procedure $\bigO(n)$ times, we obtain a path of length $\Omega(n \cdot n^{k-1}) = \Omega(n^k)$. This finishes the proof of Theorem~\ref{thm:omega-bound}.

\textbf{Non-strongly connected VASS.} We finish our analysis with a remark that
our complete complexity classification of type (B) VASS extends to non-strongly
connected VASS whose each SCC is also of type (B).

\begin{lemma}
Let $\A$ be a VASS, $\A_1,\dots,\A_l$ its SCCs (reachable from the initial configuration).
\begin{enumerate}
\item $\term_{\A} \in \Omega(\max_{i \in \{1,\dots,l\}} \term_{\A_l})$
\item If for every SCC of $\A$ there is a positive QRF then $\term_{\A} \in \bigO(\max_{i \in \{1,\dots,l\}} \term_{\A_l})$.
\end{enumerate}
\end{lemma}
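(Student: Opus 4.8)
Let $q_0$ be the initial state of $\A$ and let $\A_1,\dots,\A_l$ be the (nontrivial) SCCs reachable from $q_0$, with state sets $Q_1,\dots,Q_l$. I would first invoke the standard fact that contracting each SCC to a point turns $\A$ into a directed acyclic graph; hence the sequence of SCCs visited along any computation of $\A$ is a simple path of this DAG, so each $\A_i$ is entered \emph{at most once} (in one contiguous block of the computation), $l\le|Q|$, and the transitions used outside the nontrivial SCCs are used $\bigO(|Q|)$ times in total. The two parts are then handled separately.

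For part~(1) I would argue directly. Fix $i$. Since $\A_i$ is reachable from $q_0$, there is a path of length $\le|Q|$ leading from $q_0$ into $\A_i$; executed from $q_0\vec n$ it ends in a configuration $r\bv$ of $\A_i$ with $\bv\ge\vec n-|Q|\maxup{\A}\cdot\oneVec$. As $\A_i$ is strongly connected, from $r\bv$ one can route (losing at most $|Q|\maxup{\A}$ per counter) to the configuration of size $n':=n-2|Q|\maxup{\A}$ witnessing $\term_{\A_i}(n')$, and then follow a longest computation of $\A_i$; since every computation of the sub-VASS $\A_i$ is also a computation of $\A$, this whole run is a computation of $\A$ from $q_0\vec n$, so $\term_\A(n)\ge\term_{\A_i}(n-2|Q|\maxup{\A})$. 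Taking the maximum over $i$ and using that $\term_{\A_i}(n-c)\in\Omega(\term_{\A_i}(n))$ for a constant $c$ — immediate since $\term_{\A_i}$ is polynomially bounded for the fixed VASS $\A_i$ (and in particular in the only case relevant here, where each $\term_{\A_i}\in\Theta(n^{k_i})$) — yields $\term_\A(n)\in\Omega(\max_i\term_{\A_i}(n))$. If some $\term_{\A_i}(n)=\infty$ the same construction produces an infinite computation of $\A$, so both sides are $\infty$.

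For part~(2), assume every SCC of $\A$ has a positive QRF. If some SCC is non-terminating the right-hand side is $\infty$ and there is nothing to prove, so assume every $\A_i$ is terminating. Then $\texttt{Decompose}(\A_i)$ cannot return $\infty$, for otherwise Algorithm~\ref{algo:main} would report ``non-terminating'' and Theorem~\ref{thm:nonterm} would contradict termination of $\A_i$; hence it returns some $k_i\le d$, and since $\A_i$ admits a positive QRF, Theorems~\ref{thm:O-bound} and~\ref{thm:omega-bound} give $\term_{\A_i}\in\Theta(n^{k_i})$. In particular $\A$ is terminating. Exactly as in the proof of Theorem~\ref{thm:O-bound}, the positive QRF $g_i$ of $\A_i$ forces every configuration reachable \emph{inside} $\A_i$ from an entry configuration $c$ to have size at most a fixed affine function of $\size{c}$. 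Composing these affine bounds along the SCC-DAG — a path of length $\le l$, a constant — and absorbing the $\bigO(|Q|)$ connecting transitions, I obtain a constant $C=C(\A)$ such that every configuration reachable from $q_0\vec n$ has size at most $Cn$. Now decompose an arbitrary computation of $\A$ from $q_0\vec n$ into the $\le l$ maximal segments lying inside individual SCCs plus the $\bigO(|Q|)$ connecting transitions: the segment inside $\A_i$ is a computation of $\A_i$ started in a configuration of size $\le Cn$, hence (by monotonicity of $\plen$) of length at most $\term_{\A_i}(Cn)\in\bigO(n^{k_i})\subseteq\bigO(\term_{\A_i}(n))$, so $\term_\A(n)\le\sum_i\term_{\A_i}(Cn)+\bigO(|Q|)$, which lies in $\bigO(\max_i\term_{\A_i}(n))$.

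The one genuinely delicate point is the rescaling $\term_{\A_i}(Cn)\in\bigO(\term_{\A_i}(n))$, and this is exactly where the positive-QRF hypothesis of part~(2) is essential: it forces, via Theorems~\ref{thm:O-bound}/\ref{thm:omega-bound}, each SCC's complexity to be a true polynomial $\Theta(n^{k_i})$, a class closed under scaling the argument by a constant; without it, entering an SCC with a merely linearly larger counter valuation could in principle inflate the running time super-polynomially. A secondary care point is the use of acyclicity of the SCC-DAG, which guarantees that no SCC is re-entered with ever-larger counters, so that the composed size bound stays linear in $n$.
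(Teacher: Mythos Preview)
Your proof is correct and follows essentially the same line as the paper's: for part~(1), reach the SCC along a short path and run its worst-case computation; for part~(2), use the positive QRF of each SCC to get a linear bound on counter sizes within the SCC, compose these bounds along the (constant-length) SCC-DAG to get a global linear size bound $Cn$, and then bound each intra-SCC segment by $\term_{\A_i}(Cn)$. The paper's proof is much terser and in fact glosses over the two scaling steps you explicitly flag, namely $\term_{\A_i}(n-c)\in\Omega(\term_{\A_i}(n))$ for part~(1) and $\term_{\A_i}(Cn)\in\bigO(\term_{\A_i}(n))$ for part~(2); you correctly justify both via the $\Theta(n^{k_i})$ classification supplied by Theorems~\ref{thm:O-bound} and~\ref{thm:omega-bound} under the positive-QRF hypothesis, which is exactly the intended use of the lemma.
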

\begin{proof}
The first part of the lemma is trivial. Since we can visit any SCC $\A_i$ in a number of transitions bounded by $|\locs|$ from the initial one, the asymptotic complexity cannot be lower than that of $\A_i$.

As in the proof of Theorem~\ref{thm:O-bound} we have for any SCC $\A_i$ and any configuration $q\bu$ with $q \in \locs_{\A_i}$ that the size of any $p\bv$ with $p \in \locs_{\A_i}$ reachable from $q\bu$ is linearly bounded by some constant depending only on the positive QRF for $\A_i$.

Since the number of SCCs for a given VASS is a constant, the size of the counter vector can increase during any computation at most by a factor independent of the size of the initial configuration. Therefore, the second claim holds.
\end{proof}
\section{Conclusions}
\label{sec-concl}

Our result gives rise to a number of interesting directions for future work.
First, whether our precise complexity analysis or the complete method can be extended to other
models in program analysis (such as affine programs with loops) is an interesting theoretical
direction to pursue. 
Second, our result can be used for developing a scalable tool for sound and 
complete analysis of asymptotic bounds for VASS.


\clearpage
\appendix
\begin{center}
	{\Large Technical Appendix}
\end{center}

\section{Proof of Lemma~\ref{lem-no-halfspace}}

	We distinguish two possibilities.
	\begin{itemize}
		\item[(a)] There exists a closed half-space $\hat{\calH}_{\bn}$ of 
		$\R^d$
		such that $\bn > \vec{0}$ and  $X \subseteq \hat{\calH}_{\bn}$.
		\item[(b)] There is no closed half-space $\hat{\calH}_{\bn}$ of $\R^d$ 
		such 
		that $\bn > \vec{0}$ and  $X \subseteq \hat{\calH}_{\bn}$.
	\end{itemize}
	\textbf{Case~(a).} We show that there exists $\bu \in X$ such that $\bu 
	\neq 
	\vec{0}$ and $-\bu \in \Cone(X)$. Note that this immediately implies the 
	claim of our lemma---since $-\bu \in \Cone(X)$, there are 
	$\bv_1,\ldots,\bv_k \in X$ and $c_1,\ldots,c_k \in \R^+$ such that $-\bu = 
	\sum_{i=1}^k c_i \bv_i$. Since all elements of $X$ are vectors of 
	non-negative integers, we can safely assume $c_i \in \Q^+$ for all $1\leq i 
	\leq k$. Let $b$ be the least common multiple of $c_1,\ldots,c_k$. Then 
	$b\bu + (b\cdot c_1) \bv_1 + \cdots + (b\cdot c_k) \bv_k = \vec{0}$ and we 
	are done.
	
	It remains to prove the existence of $\bu$. Let us fix a normal vector $\bn 
	> \vec{0}$ such that $X \subseteq \hat{\calH}_{\bn}$ and the set $X_{\bn} = 
	\{ \bv \in X \mid \bv^\top \cdot \bn < 0\}$ is \emph{maximal} (i.e., there is no 
	$\bn' > \vec{0}$ satisfying $X \subseteq \hat{\calH}_{\bn'}$ and $X_{\bn} 
	\subset X_{\bn'}$). Further, we fix $\bu \in X$ such that $\bu^\top \cdot \bn = 
	0$. Note that such $\bu \in X$ must exist, because otherwise $X_{\bn} = X$ 
	which contradicts the assumption of our lemma. We show $-\bu \in \Cone(X)$. 
	Suppose the converse. Then by Farkas' lemma there exists a separating 
	hyperplane for $\Cone(X)$ and $-\bu$ with normal vector $\bn'$, i.e., $\bv^\top 
	\cdot \bn' \leq 0$ for all $\bv \in X$ and $-\bu^\top \cdot \bn' > 0$. Since 
	$\bn 
	> \vec{0}$, we can fix a sufficiently small $\varepsilon > 0$ such that the 
	following conditions are satisfied:
	\begin{itemize}
		\item $\bn + \varepsilon \bn' > \vec{0}$,
		\item for all $\bv \in X$ such that $\bv^\top \cdot \bn < 0$ we have that 
		$\bv^\top 
		\cdot (\bn + \varepsilon \bn') < \vec{0}$.
	\end{itemize}
	Let $\bw = \bn + \varepsilon \bn'$. Then $\bw > 0$, $\bv^\top \cdot \bw < 0$ for 
	all $\bv \in X_{\bn}$, and $\bu^\top \cdot \bw = \bu^\top \cdot \bn + \varepsilon 
	(\bu^\top 
	\cdot \bn') = \varepsilon (\bu^\top \cdot \bn') < 0$. This contradicts the 
	maximality of $X_{\bn}$.
	\smallskip
	
	\noindent
	\textbf{Case~(b).} Let $B = \{\bv \in \R^d \mid \bv \geq \vec{0} \text{ and 
	}  1  \leq \sum_{i=1}^d \bv(i)  \leq  2\}$. We prove $\Cone(X) \cap B \neq 
	\emptyset$, which implies the claim of our lemma (there are 
	$\bv_1,\ldots,\bv_k \in X$ and $c_1,\ldots,c_k \in \Q^+$ such that 
	$\sum_{i=1}^k c_i \bv_i \in B$). Suppose the converse, i.e., $\Cone(X) \cap 
	B = \emptyset$. Since both $\Cone(X)$ and $B$ are closed and convex and $B$ 
	is also compact, we can apply the ``strict'' variant of hyperplane 
	separation theorem. Thus, we obtain a vector $\bn \in\R^d$ and a constant 
	$c 
	\in \R$ such that $\bx^\top \cdot \bn < c$ and $\by^\top \cdot \bn > c$ for all $\bx 
	\in \Cone(X)$ and $\by \in B$. Since $\vec{0} \in \Cone(X)$, we have that 
	$c 
	> 0$. Further, $\bn \geq \vec{0}$ (to see this, realize that if $\bn(i) < 
	0$ 
	for some $1 \leq i \leq d$, then $\by \cdot \bn < 0$ where $\by(i) = 1$ and 
	$\by(j) = 0$ for all $j \neq i$; since $\by \in B$ and $c > 0$, we have a 
	contradiction). Now we show $\bx^\top \cdot \bn \leq 0$ for all $\bx \in 
	\Cone(X)$, which contradicts the assumption of Case~(b). Suppose $\bx^\top \cdot 
	\bn > 0$ for some $\bx \in \Cone(X)$. Then $(m \cdot \bx)^\top \cdot \bn > c$ 
	for 
	a sufficiently large $m \in \N$. Since $m \cdot \bx \in \Cone(X)$, we have 
	a 
	contradiction.
	
\section{Proof of Corollary~\ref{coro:lin-polytime}}

If the VASS is not strongly connected, then, due to Lemma~\ref{lem-SCC}, it 
suffices to check linearity of the termination for all its strongly connected 
components.
Consider a strongly connected VASS. The~LP~(R) always has a feasible solution 
(e.g. a zero 
vector). Hence, checking whether the complexity is linear amounts to checking 
whether the LP has a bounded objective function, which can be done in 
polynomial time.
\section{Proof of Theorem~\ref{thm:onedim_rf}}
Consider the LP from Theorem~\ref{thm:precise-complexity}.
\begin{center}
\begin{tabular}{|c|}
\hline

{\begin{minipage}[c]{0.5\linewidth}
\vspace{0.2cm}

rational LP ($\rationalLP$):

\vspace{0.2cm}
$\max \oneVec^\top\countersAlt$  with
\begin{align*}
  \countersAlt & \ge \zeroVec\\
  \updates \countersAlt & \ge -\oneVec\\
  \flowMatrix \countersAlt & = \zeroVec
\end{align*}
\end{minipage}}\\
\hline
\end{tabular}
\end{center}
The dual LP is
\begin{center}
\begin{tabular}{|c|}
\hline

{\begin{minipage}[c]{0.5\linewidth}

\vspace{0.2cm}
$\min \by_U^\top \cdot \oneVec$  with
\begin{align*}
  \by_{\updates}^\top \cdot \updates - \by_\flowMatrix^\top \cdot \flowMatrix & \leq -\oneVec^\top\\
  \by_\updates & \geq \zeroVec
\end{align*}
\end{minipage}}\\
\hline
\end{tabular}
\end{center}
Here the variables are vectors $\by_\updates \in \Q^d$ and $\by_\flowMatrix \in \Q^\locs$.

The dual LP has a feasible solution if and only if the original LP has an optimal solution (since it always has a feasible solution) and that is if and only if the VASS $\A$ is linear (due to Theorem~\ref{thm:precise-complexity}).

Assume there exists a feasible solution. Let $f$ be a function such that
\[
f(p\bv) \quad=\quad \by_\updates^\top\cdot \bv + \by_\flowMatrix(p)
\]
i.e., $\bc_f = \by_\updates$ and $\weights_f = \by_\flowMatrix$. From the constraints of the dual LP we obtain for any transition $(p,\bu,q)$
\[
\bc_f^\top \cdot \bu - (\weights_f(p) - \weights_f(q)) \leq -1,
\]
i.e. $f$ is a RF.

Conversely, let $f$ be any RF. Then $\by_\updates = \bc_f$, $\by_\flowMatrix = 
\weights_f$ is a feasible solution for the dual LP.

\section{Proof of Lemma~\ref{lem:recursive-call-dimension}}

	Clearly, $\Inc_{\vass'} \subseteq \Inc_\vass$ because $\vass'$ is a
	sub-VASS of $\vass$ and hence $\dimOp(\cone(\Inc_\vass)) \ge
	\dimOp(\cone(\Inc_{\vass'}))$.
	Let $\qrf$ be the QRF computed for $\vass$ and let $\normalVec_\qrf$ be the
	associated normal.
	We will show that $\cone(\Inc_{\vass'})$ is contained in the hyperplane
	$\{\val \mid \normalVec_\qrf^\top \val = 0\}$ while $\cone(\Inc_\vass)$ is
	not.
	This is sufficient to infer $\dimOp(\cone(\Inc_\vass)) >
	\dimOp(\cone(\Inc_{\vass'}))$.
	
	We consider some $\val \in \Inc_{\vass'}$.
	We have $\valueSum(\cycle) = \val$ for some simple cycle $\cycle$.
	We consider the edges along the cycle $\cycle$.
	Because $\vass'$ appears in some recursive call of $\dec(\vass)$, we have
	that every edge of $\cycle$ is $\qrf$-neutral, i.e.,
	$\normalVec_\qrf^\top \update + \weights_\qrf(\loc') = \weights_\qrf(\loc)$
	for every transition $(\loc,\update,\loc')$ of $\cycle$.
	Adding these equations along the cycle $\cycle$ establishes $\val =
	\valueSum(\cycle) = 0$.
	Hence, $\cone(\Inc_{\vass'}) \subseteq \{\val \mid \normalVec_\qrf^\top
	\val = 0\}$.
	
	On the other hand, because there is a recursive call i.e., $T_f \neq T$, there is at least one
	transition of $\vass$ which is $\qrf$-ranked.
	Because $\vass$ is connected we can choose some simple cycle $\cycle$ which
	contains an $\qrf$-ranked transition.
	Adding the inequalities $\normalVec_\qrf^\top \update +
	\weights_\qrf(\loc') \le \weights_\qrf(\loc)$ for every transition
	$(\loc,\update,\loc')$ of $\cycle$ establishes $\normalVec_\qrf^\top
	\valueSum(\cycle) < 0$.
	Hence, $\cone(\Inc_\vass)$ is not contained in $\{\val \mid
	\normalVec_\qrf^\top \val = 0\}$.

\section{Proof of Lemma~\ref{lem:lp-qrf}}
	We state the following properties about LP ($\qrfLP$), which are easy to
	verify:
	LP ($\qrfLP$) is always satisfiable (consider $\rankCoeff = \zeroVec$,
	$\offsets = \zeroVec$ and $\activeQ = \zeroVec$).
	Let $\rankCoeff,\offsets,\activeQ$ and $\rankCoeff',\offsets',\activeQ'$ be
	feasible points of LP ($\qrfLP$).
	Then, (1) $\rankCoeff + \rankCoeff',\offsets +
	\offsets',\max\{\activeQ,\activeQ'\}$ is a feasible point of LP ($\qrfLP$)
	and (2) $d\rankCoeff,d\offsets,\min\{d\activeQ,\oneVec\}$ is a feasible
	point of LP ($\qrfLP$) for all $d \in \mathbb{Q}$ with $d \ge 1$.
	
			We now show that every transition is either $\qrf$-ranked or
			$\qrf$-neutral.
			It is sufficient to show that $\activeQ(\transition) = 0$ or
			$\activeQ(\transition) = 1$ for each transition $\transition$.
			Assume $0 < \activeQ(\transition) < 1$ for some transition
			$\transition$.
			Then, we can choose $d = \frac{1}{\activeQ(\transition)} > 1$ and 
			apply
			(2) in order to obtain a feasible point with value $\oneVec^\top
			d\activeQ > \oneVec^\top \activeQ$, which is a contradiction to the
			assumption that $\rankCoeff,\offsets,\activeQ$ is optimal.
			
			Assume that there is another QRF $\qrf'(\loc \val) = 
			\rankCoeff'^\top
			\val + \offsets'(\loc)$ and a transition $\transition$ such that
			$\transition$ is $\qrf'$-ranked but $\qrf$-neutral (we note that we
			must have $\activeQ(t) = 0$).
			We set $\activeQ'(\transition) = 1$ and $\activeQ'(\transition') = 
			0$
			for all transitions $\transition' \neq \transition$.
			We can now apply (1) in order to obtain a feasible point with value
			$\oneVec^\top \max\{\activeQ,\activeQ'\} > \oneVec^\top \activeQ$,
			which is a contradiction to the assumption that
			$\rankCoeff,\offsets,\activeQ$ is optimal.

\section{Proof of Lemma~\ref{lem:positive-QRF-check}}
Consider the following LP similar to the dual LP in the proof of Theorem~\ref{thm:onedim_rf}.
\begin{center}
\begin{tabular}{|c|}
\hline

{\begin{minipage}[c]{0.5\linewidth}
\vspace{0.2cm}
$\max \varepsilon$  with
\begin{align*}
  \by_{\updates}^\top \cdot \updates - \by_\flowMatrix^\top \cdot \flowMatrix & \leq \zeroVec^\top\\
  \by_\updates & \geq \vec{\varepsilon}
\end{align*}
\end{minipage}}\\
\hline
\end{tabular}
\end{center}
Here the variables are $\by_{\updates}$, $\by_{\flowMatrix}$ and $\varepsilon$. We show that this program has a feasible positive solution if and only if there exists a positive QRF.

If the program has a feasible positive solution then define a linear map $g$ such that $\bc_g = \by_\updates$ and $\weights_g = \by_\flowMatrix$. For every transition $(p,\bu,q)$ we have
\[
\bc_g^\top \cdot \bu - (\weights_g(p) - \weights_g(q)) \leq 0.
\]
Every transition $t = (p,\bu,q)$ that is not $g$-neutral, we have
\[
\bc_g^\top \cdot \bu - (\weights_g(p) - \weights_g(q)) = -\delta_t.
\]
Let $\delta = \min_{t \in \trans} \delta_t$. Then function $f = \frac{g}\delta$ is a positive QRF.

Positive QRF is a feasible solution for the dual program of 
Theorem~\ref{thm:onedim_rf}. It is therefore a feasible positive solution for 
this LP (since these constraints are weaker).

\section{Proof of Lemma~\ref{lem:constr-to-QRF}}

We know that a linear map $f$ is a QRF if and only if $\bc_f \geq 0$ and 
$\bc_f^{\top}\cdot \updates  -\bw_f^\top\cdot \flowMatrix \leq \vec{0}^T $ such 
that each column contains either 0 or a number $\leq -1$. Transitions with 
column $\leq -1$ are $f$-ranked, the rest are $f$-neutral. Therefore, any QRF 
$f$ with $(\bc_f^T \cdot U) (t) - (\bw_f^T \cdot F) (t) \leq -1$ satisfies the 
constraints.

Now let $\bc, \bw$ be a rational solution of $\conSysB_\transition$. Let
\[
\bu^T = \bc_f^{\top}\cdot \updates  -\bw_f^\top\cdot \flowMatrix \leq \vec{0}^T.
\]
Surely $\bu(t) \leq -1$. But it may happen that $0>\bu(t') >-1$ for some $t' 
\in T$. To remedy this, consider $k = \max_{\hat{t} \in T} \bu(\hat{t})$. Let
\[
\bc_f = \frac{1}{|k|}\bc, \quad \bw_f = \frac{1}{|k|}\bw.
\]
Now if $\bu(t') < 0$, then $\frac{1}{|k|}\bu(t') \leq -1$. Therefore, $f$ is a 
QRF such that $t$ is $f$-ranked.

\section{Proof of Lemma~\ref{lem:ranking-or-witness}}

We will use the following variant of \emph{Farkas' Lemma}, which states that 
given matrices $A$,$C$ and vectors $b$,$d$, exactly one of the following 
statements is true:

\vspace{0.2cm}
\begin{tabular}{|c|c|}
	\hline
	\begin{minipage}[c]{0.4\linewidth}
		constraint system ($A$):
		
		\vspace{0.2cm}
		there exists $\bolden{x}$ with
		
		\vspace{0.2cm}
		$\begin{array}{rcr}
		A\bolden{x} & \ge & \bolden{b} \\
		C\bolden{x} & = & \bolden{d}
		\end{array}$
		
	\end{minipage}
	&
	\begin{minipage}[c]{0.5\linewidth}
		\vspace{0.2cm}
		constraint system ($B$):
		
		\vspace{0.2cm}
		there exist $\bolden{y},\bolden{z}$ with
		
		\vspace{0.2cm}
		$\begin{array}{rcl}
		\bolden{y} & \ge & \zeroVec \\
		\by^\top \cdot A + \bz^\top \cdot C & = & \zeroVec^T \\
		\by^\top \cdot \bolden{b} + \bz^\top \cdot\bolden{d} & > & 0
		\end{array}$
		\vspace{0.1cm}
	\end{minipage} \\
	
	\hline
\end{tabular}
\vspace{0.2cm}

We fix some transition~$\transition$.
We denote by $\character_\transition \in \mathbb{Z}^Q$ the vector with 
$\character_\transition(\transition') = 1$, if $\transition' = \transition$, 
and $\character_\transition(\transition') = 0$, otherwise.
Using this notation we rewrite ($\conSysA_\transition$) to the equivalent 
constraint system ($\conSysA_\transition'$),
where $\identity$ denotes the identity matrix:

\vspace{0.2cm}
\begin{tabular}{|lc|}
	\hline
	{\begin{minipage}[r]{0.4\linewidth}
			\vspace{-0.1cm}
			constraint system ($\conSysA_\transition'$):
			\vspace{0.8cm}
		\end{minipage}}
		&
		{\begin{minipage}[r]{0.4\linewidth}
				\vspace{-0.1cm}
				\begin{eqnarray*}
					\begin{pmatrix}
						\updates \\
						\identity
					\end{pmatrix}
					\counters & \ge &
					\begin{pmatrix}
						\zeroVec \\
						\character_\transition
					\end{pmatrix}\\
					\flowMatrix \counters & = & \zeroVec
				\end{eqnarray*}
				\vspace{-0.4cm}
			\end{minipage}}\\
			\hline
		\end{tabular}
		\vspace{0.2cm}
		
		Using Farkas' Lemma (note that $\bz$ is not restricted so we can take $\bz = -\bw$), we see that either ($\conSysA_\transition'$) is 
		satisfiable or the following constraint system 
		($\conSysB_\transition'$) is satisfiable:
		
		\vspace{0.3cm}
		\begin{tabular}{|r|c|}
			\hline
			{\begin{minipage}[r]{0.49\linewidth}
					\vspace{0.2cm}
					\hspace{-0.15cm}
					constraint system ($\conSysB_\transition'$):
					\begin{eqnarray*}
						\begin{pmatrix}
							\rankCoeff \\
							\by
						\end{pmatrix} & \ge & \zeroVec \\
						\begin{pmatrix}
							\rankCoeff \\
							\by
						\end{pmatrix}^\top \cdot 
						\begin{pmatrix}
							\updates \\
							\identity
						\end{pmatrix} -						 
						\offsets^\top \cdot \flowMatrix & = & 
						\zeroVec^\top \\
						\begin{pmatrix}
							\rankCoeff \\
							\by
						\end{pmatrix}^\top\cdot
						\begin{pmatrix}
							\zeroVec \\
							\character_\transition
						\end{pmatrix} - \offsets^\top \cdot \zeroVec & > & 0
					\end{eqnarray*}
					\vspace{-0.1cm}
				\end{minipage}}
				&
				{\begin{minipage}[c]{0.45\linewidth}
						simplified version of
						constraint system ($\conSysB_\transition'$):
						\begin{eqnarray*}
							\rankCoeff & \ge & \zeroVec \\
							\by & \ge & \zeroVec \\
							\rankCoeff^\top \cdot \updates + \by^\top - \offsets^\top \cdot \flowMatrix & = & \zeroVec^\top \\
							\by(\transition) & > & 0
						\end{eqnarray*}
					\end{minipage}}\\
					\hline
				\end{tabular}
				\vspace{0.3cm}
				
				We recognize that constraint systems ($\conSysB_\transition'$) 
				and ($\conSysB_\transition$) are equivalent,
				because solutions of ($\conSysB_\transition'$) with 
				$\by(\transition) > 0$ can always be turned into solutions with 
				$\by(\transition) \ge 1$ by multiplying with a sufficiently large 
				positive rational number.

\section{Proof of Lemma~\ref{lem:goodQRF}}

We consider some vector $\bv \in \Inc$ with $\bc_f^\top \cdot \bv  = 0$.
We have $\valueSum(\cycle) = \bv$ for some simple cycle $\cycle$.
We consider the edges along the cycle $\cycle$.
Every edge of $\cycle$ must be $\qrf$-neutral:
otherwise we could add the equations $\normalVec_\qrf^\top \update + \weights_\qrf(\loc') \le \weights_\qrf(\loc)$ for every transition $(\loc,\update,\loc')$ along $\cycle$ in order to witness $\normalVec_f^\top \cdot \bv < 0$.
Hence, for every transition $\transition$ along $\cycle$ there is no other QRF $\qrf'$ such that $\transition$ is $\qrf'$-ranked;
otherwise, $\qrf$ would not be maximal with regard to the number of $\qrf$-ranked transitions.
Thus, constraint system $\conSysB_\transition$ is unsatisfiable for every transition $\transition$ of $\cycle$.
By Lemma~\ref{lem:ranking-or-witness}, constraint system $\conSysA_\transition$ is satisfiable.
For every transition $\transition$ of $\cycle$,
we fix some non-negative multi-cycle $\multCycle_\transition$ associated to some integer solution of $\conSysA_\transition$.
We take the union of the non-negative multi-cycles $\multCycle_\transition$ in order to obtain the non-negative multi-cycle $\multCycle$.
We note that $\multCycle$ contains every transition~$\transition$ of $\cycle$.
Hence, $\multCycle$ can be decomposed into the simple cycle $\cycle$ and a set of simple cycles, whose effect corresponds to some vector $\bw \in \cone(\Inc)$.
Because $\multCycle$ is non-negative, we get $\bv + \bw \geq \vec{0}$.

\section{Proof of Lemma~\ref{lem-compens}}
We need the following two lemmas.

 \begin{lemma}
 	\label{lem-kappa}
 	Let $X \subset \Z^d$ be a finite set and $\bn$ such that $X \subset
 	\hat{\calH}_{\bn}$. Then there is a constant $\kappa \in \R^+$ such that
 	for every $\bw \in \cone(X)$, where $\Norm(\bw) = 1$, there exist $k \in
 	\N$, $a_1,\ldots,a_k \in \R^+$, and $\bv_1,\ldots,\bv_k \in X$ such that
 	$\bw = \sum_{j=1}^k a_j \cdot \bv_j$ and for any
 	$S \subseteq \{1,\dots,k\}$, the absolute values of all components of the vector
 	$\sum_{j \in S} a_j \cdot \bv_j$ are bounded by~$\kappa$.
 \end{lemma}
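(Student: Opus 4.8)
The plan is to reduce the decomposition of $\bw$ to a \emph{linearly independent} family via Carath\'{e}odory's theorem, bound the resulting coefficients uniformly, and then control all partial sums by the triangle inequality. First, since $\bw\in\cone(X)$, Carath\'{e}odory's theorem for conic hulls yields linearly independent vectors $\bv_1,\dots,\bv_k\in X$ with $k\le d$ and reals $a_1,\dots,a_k>0$ such that $\bw=\sum_{j=1}^k a_j\bv_j$; here we have discarded any zero coefficients, and $k\ge 1$ because $\Norm(\bw)=1\neq 0$. (If $X$ is empty or contains only the zero vector then $\cone(X)=\{\vec{0}\}$ and the statement is vacuous, so we may assume $M:=\max_{\bv\in X}\Norm(\bv)>0$.)

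Second, I would establish a uniform lower bound for the relevant linear maps. For every linearly independent subset $B=\{\bu_1,\dots,\bu_m\}\subseteq X$, the linear map $L_B\colon\R^m\to\R^d$ given by $L_B(t)=\sum_{i=1}^m t_i\bu_i$ is injective, hence bounded below: there is $c_B\in\R^+$ with $\Norm(L_B(t))\ge c_B\cdot\Norm(t)$ for all $t\in\R^m$ (the image of the unit sphere of $\R^m$ is compact and avoids $\vec{0}$). Since $X$ is finite there are only finitely many such $B$, so $c:=\min_B c_B\in\R^+$ is well defined. Applying this bound to our decomposition (with $B=\{\bv_1,\dots,\bv_k\}$) gives $\Norm\big((a_1,\dots,a_k)\big)\le\Norm(\bw)/c=1/c$, hence $a_j\le 1/c$ for each $j$.

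Finally, for any $S\subseteq\{1,\dots,k\}$ the triangle inequality shows that the absolute value of every component of $\sum_{j\in S}a_j\bv_j$ is at most $\Norm\big(\sum_{j\in S}a_j\bv_j\big)\le\sum_{j\in S}a_j\Norm(\bv_j)\le k\cdot\tfrac{1}{c}\cdot M\le dM/c$, so $\kappa:=dM/c$ works (it automatically satisfies $\kappa\ge\Norm(\bw)=1$ by taking $S=\{1,\dots,k\}$). I expect the only delicate point is that the lower bound $c$ must be chosen uniformly over the \emph{finitely many} linearly independent subfamilies of $X$ rather than for a single fixed one; once that is in place the rest is routine. Note that the hypothesis $X\subseteq\hat{\calH}_{\bn}$ plays no role in this argument --- it is recorded in the statement only because of how Lemma~\ref{lem-kappa} feeds into the proof of Lemma~\ref{lem-compens}.
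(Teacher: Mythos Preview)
Your proof is correct and takes a genuinely different route from the paper's. The paper starts from a representation $\bw=\sum_{j=1}^k a_j\bv_j$ with $k$ \emph{minimal}, shows that minimality forces $-\bv_j\notin\cone(\{\bv_i:i\neq j\})$ for each $j$, and then invokes the half-space hypothesis $X\subseteq\hat{\calH}_{\bn}$ together with a separation argument to produce an \emph{open} half-space $\calH_{\bn'}$ containing all the $\bv_j$; the strict inequality $\bv_j\cdot\bn'<0$ then bounds the $a_j$ via the projection onto $\bn'$. You instead appeal to Carath\'{e}odory's theorem for cones to obtain a linearly independent generating family outright, and then bound the coefficients uniformly via the injectivity (hence lower-boundedness) of the associated linear maps, quantified over the finitely many linearly independent subfamilies of $X$. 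Your approach is shorter, and it reveals that the half-space assumption is in fact superfluous for the lemma as stated---the paper's argument genuinely uses it, whereas yours dispenses with it entirely. The paper's route, on the other hand, makes the geometric picture (all generators pointing into a common open half-space) more explicit, which aligns with how the bound is later exploited in Lemma~\ref{lem-compens}.
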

\begin{proof}
	Let $\bw = \sum_{j=1}^k a_j \cdot \bv_j$ where $a_j \in \R^+$, $\bv_j \in 
	X$ for all $1 \leq j \leq k$, and $k$ is \emph{minimal}. First, we show 
	that for every $j \leq k$, the vector $-\bv_j$ does not belong to 
	$\cone(\{\bv_1,\ldots,\bv_{j-1},\bv_{j+1},\ldots,\bv_k\})$. Assume the 
	converse, such $j$ exists. Wlog let $j=1$, i.e., $-\bv_1 \in \cone(\{\bv_2,\ldots,\bv_k\})$. Then $-\bv_1 = 
	\sum_{j=2}^k b_j \cdot \bv_j$, where $b_j \in \R^+$ for all $2 \leq j \leq 
	k$. Further, 
	\[
	\bw \quad = \quad (a_1-c) \cdot \bv_1 
	\ + \ (a_2 - c b_2)\cdot \bv_2 \ + \ \cdots \ + \ (a_k -c b_k)\bv_k
	\]
	for every $c > 0$. Clearly, there exists $c>0$ such that at least one of 
	the coefficients $(a_1-c)$, $(a_2 - c b_2),\ldots, (a_k -c b_k)$ is zero 
	and the other remain positive, which contradicts the minimality of~$k$. 
	Since $\{\bv_1,\ldots,\bv_k\} \subseteq \hat{\calH}_{\bn}$, there must 
	exist $\bn'$ such that $\{\bv_1,\ldots,\bv_k\} \subseteq \calH_{\bn'}$ 
	(otherwise, we can use the hyperplane separation theorem argument similar 
	to the one in the proof of Case~(a) of Lemma~\ref{lem-no-halfspace} to show 
	that \mbox{$-\bv_j \in 
	\cone(\{\bv_1,\ldots,\bv_{j-1},\bv_{j+1},\ldots,\bv_k\})$} for some $1 \leq 
	j \leq k$). Since $\bv_j \cdot \bn' < 0$ for all $1 \leq j \leq k$, each 
	$\bv_j$ moves in the direction of $-\bn'$ by some fixed positive distance. 
	Since $\Norm(\bw) = 1$, there is a bound $\delta_{\bv_1,\ldots,\bv_k} \in 
	\R^+$ such that $a_j \leq \delta_{\bv_1,\ldots,\bv_k}$ for all $1 \leq j 
	\leq k$, because no $a_j\cdot \bv_j$ can go in the direction of $-\bn'$ by 
	more than a unit distance. 
	
	The above claim applies to every $\bw \in \cone(X)$. Since $X$ is finite, 
	there are only finitely many candidates for the set of vectors 
	$\{\bv_1,\ldots,\bv_k\}$ used to express $\bw$, and hence there exists a 
	fixed upper bound $\delta \in \R^+$ for all $\delta_{\bv_1,\ldots,\bv_k}$. 
	This means that, for every $\bw \in \cone(X)$, there exist $k \in \N$, 
	$a_1,\ldots,a_k \in \R^+$, and $\bv_1,\ldots,\bv_k \in X$ such that $\bw = 
	\sum_{j=1}^k a_j \cdot \bv_j$, and $a_j \leq \delta$ for all $1 \leq j \leq 
	k$. This immediately implies the existence of~$\kappa$.
\end{proof}
\begin{lemma}
Let $\A$ be a VASS and $f$ a QRF maximizing the number of $f$-ranked transitions. Then there exists $c \in \R^+$ such that for every $m \in \N$ and every cycle $\pi_m$ with $\bc_f^\top \cdot \eff(\pi_m) = 0$ and $\eff(\pi_m)\geq -\vec{m}$ there exists $\bu \in \cone(Inc)$ such that $\eff(\pi_m) + \bu \geq 0$ and there exist $a_j \in \R^+$ and $\bv_j \in \Inc$ such that
$\bu = \sum_{j=1}^k a_j \cdot \bv_j$ and for every
 	$S \subseteq \{1,\dots,k\}$, the minimum of all components of the vector
 	$\sum_{j \in S} a_j \cdot \bv_j$ is bounded by $-c \cdot m$.
\end{lemma}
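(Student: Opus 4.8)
The plan is to locate $\eff(\pi_m)$ inside the cone generated by the \emph{neutral} increments $\Inc_0 := \{\bv \in \Inc \mid \bc_f^\top \cdot \bv = 0\}$, to produce a compensating vector $\bu$ of Euclidean norm $O(m)$ lying in that cone, and then to feed $\bu$ into Lemma~\ref{lem-kappa} to obtain the required subset‑sum bound. We may assume $\bc_f \neq \vec{0}$: if $\bc_f = \vec{0}$ then $\Inc_0 = \Inc$ and the argument below goes through verbatim with any $\bn \neq \vec{0}$ satisfying $\Inc \subseteq \hat{\calH}_{\bn}$ in place of $\bc_f$; such $\bn$ exists unless $\cone(\Inc) = \R^d$, a degenerate case in which $\A$ is non-terminating.

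First I would show $\eff(\pi_m) \in \cone(\Inc_0)$. Decompose $\pi_m$ into simple cycles; since $\pi_m$ is a cycle, the remainder of the decomposition has length $0$, so $\eff(\pi_m) = \sum_i \eff(C_i)$ over the simple cycles $C_i$ of the decomposition. Because $f$ is a QRF, traversing a cycle $C_i$ changes the value of $f$ by $\bc_f^\top \cdot \eff(C_i)$, and this change is the sum of the per-transition changes, each of which is $\le -1$ (if $f$-ranked) or $=0$ (if $f$-neutral); hence $\bc_f^\top \cdot \eff(C_i) \le 0$. Since $\sum_i \bc_f^\top \cdot \eff(C_i) = \bc_f^\top \cdot \eff(\pi_m) = 0$ and every summand is $\le 0$, each is $0$, so $\eff(C_i) \in \Inc_0$ and $\eff(\pi_m) \in \cone(\Inc_0)$. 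Note also, by the same sign bookkeeping, that \emph{any} $\bu \in \cone(\Inc)$ with $\eff(\pi_m) + \bu \geq \vec{0}$ satisfies $\bc_f^\top \cdot \bu = 0$ (use $\bc_f \ge \vec 0$ and that $\bc_f^\top$ is $\le 0$ on every element of $\Inc$), hence $\bu$ can be written using only vectors of $\Inc_0$, i.e.\ $\bu \in \cone(\Inc_0)$.

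The hard part is getting $\bu$ with $\Norm(\bu) = O(m)$: compensating each simple cycle of $\pi_m$ separately via Lemma~\ref{lem:goodQRF} and summing produces a $\bu$ whose norm is proportional to the length of $\pi_m$, which is far too large. Instead I would study the value function $h(\bv) := \min\{\Norm(\bu) \mid \bu \in \cone(\Inc),\ \bv + \bu \geq \vec{0}\}$, defined for $\bv \in \cone(\Inc_0)$. It is finite: writing $\bv = \sum_{\bv' \in \Inc_0} c_{\bv'} \bv'$ and using the compensations for the generators $\bv' \in \Inc_0$ supplied by Lemma~\ref{lem:goodQRF} with the same coefficients yields a feasible $\bu$. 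It is positively homogeneous of degree $1$ (the feasible set scales with $\bv$) and subadditive (sum optimal compensations), hence sublinear on $\cone(\Inc_0)$; moreover $h \equiv 0$ on $\cone(\Inc_0) \cap \R^d_{\geq 0}$, since $\bu = \vec{0}$ is feasible there. Now $P := \cone(\Inc_0) \cap \{\bx \mid \bx \geq -\vec{1}\}$ is a line-free polyhedron whose recession cone is $\cone(\Inc_0) \cap \R^d_{\geq 0}$, so by Minkowski--Weyl it is the convex hull of finitely many points $\bx_1,\dots,\bx_p \in P$ plus the cone generated by finitely many $\bd_1,\dots,\bd_q \geq \vec{0}$. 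For any $\bv \in P$, writing $\bv = \sum_i \alpha_i \bx_i + \sum_j \beta_j \bd_j$ with $\alpha_i \geq 0$, $\sum_i \alpha_i = 1$, $\beta_j \geq 0$, sublinearity gives $h(\bv) \leq \sum_i \alpha_i h(\bx_i) + \sum_j \beta_j h(\bd_j) \leq \max_i h(\bx_i) =: M < \infty$. By homogeneity, $h(\bv) \leq M m$ whenever $\bv \in \cone(\Inc_0)$ and $\bv \geq -\vec{m}$; applied to $\bv = \eff(\pi_m)$ this yields $\bu \in \cone(\Inc)$ — in fact $\bu \in \cone(\Inc_0)$ by the remark above — with $\eff(\pi_m) + \bu \geq \vec{0}$ and $\Norm(\bu) \leq M m$.

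Finally, if $\bu = \vec{0}$ we take $k = 0$ (the empty representation) and are done; otherwise put $\bw := \bu / \Norm(\bu) \in \cone(\Inc_0)$, so $\Norm(\bw) = 1$. Since $\Inc_0 \subseteq \{\bx \mid \bc_f^\top \cdot \bx = 0\} \subseteq \hat{\calH}_{\bc_f}$, Lemma~\ref{lem-kappa} applied with $X = \Inc_0$ and $\bn = \bc_f$ gives a constant $\kappa$ — depending only on $\A$ and $f$ — together with a representation $\bw = \sum_{j=1}^{k} a'_j \bv_j$, $\bv_j \in \Inc_0 \subseteq \Inc$, $a'_j \in \R^+$, such that for every $S \subseteq \{1,\dots,k\}$ all components of $\sum_{j \in S} a'_j \bv_j$ have absolute value $\leq \kappa$. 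Taking $a_j := \Norm(\bu)\, a'_j \in \R^+$ we get $\bu = \sum_{j=1}^{k} a_j \bv_j$, and for every $S$ all components of $\sum_{j \in S} a_j \bv_j = \Norm(\bu) \sum_{j \in S} a'_j \bv_j$ have absolute value $\leq \Norm(\bu)\,\kappa \leq M\kappa \cdot m$; in particular their minimum is $\geq -M\kappa \cdot m$. Setting $c := M\kappa$ (a constant independent of $m$ and of $\pi_m$) finishes the proof. The one genuinely delicate point is the boundedness of $h$ on $P$ via the recession-cone decomposition; the rest is bookkeeping around the QRF inequalities and Lemma~\ref{lem-kappa}.
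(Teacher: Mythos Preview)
Your proof is correct and follows a genuinely different route from the paper. The paper bounds the norm of the compensating vector by passing to an auxiliary VASS $\A'$ obtained from $\A$ by adding self-loop ``drain'' transitions $(q,-e_i,q)$ for every state $q$ and coordinate $i$; it argues that a maximal QRF $g$ for $\A'$ ranks exactly the same transitions of $\A$ as $f$ does, uses the new drains to replace $\pi_m$ by a cycle $\pi_m'$ in $\A'$ whose effect has norm $O(m)$, applies Lemma~\ref{lem:goodQRF} in $\A'$ to obtain a compensator of norm $O(m)$, and finally strips the $-e_i$ contributions to land back in $\cone(\Inc_\A)$. You bypass this construction entirely: you note that the value function $h(\bv)=\min\{\Norm(\bu):\bu\in\cone(\Inc),\,\bv+\bu\ge\vec 0\}$ is sublinear on $\cone(\Inc_0)$ and vanishes on its non-negative part, so by the Minkowski--Weyl decomposition of the line-free polyhedron $\cone(\Inc_0)\cap\{\bx\ge-\vec 1\}$ into vertices plus recession cone it is bounded there by some $M$, whence $h(\eff(\pi_m))\le Mm$ by homogeneity. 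Both arguments then finish identically via Lemma~\ref{lem-kappa}. Your route is shorter and conceptually cleaner, trading the auxiliary-VASS machinery for a standard polyhedral fact; the paper's route has the virtue of staying entirely inside the VASS formalism. The edge case $\bc_f=\vec 0$ that you flag is harmless in context: in a strongly connected VASS a QRF with zero normal ranks no transition (the $\bw_f$-changes telescope around every cycle), so a maximizing $f$ with $\bc_f=\vec 0$ forces non-termination, whereas the lemma is only invoked for terminating $\A$.
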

\begin{proof}
From Lemma~\ref{lem:goodQRF} we obtain that for every $\bv \in \Inc$ such that $\bc_f^\top \cdot \bv = 0$
 there exists $\bw \in \cone(\Inc)$ such that $\bv + \bw \geq \vec{0}$. If $\bv \notin \Inc$ but $\bv \in \cone(\Inc)$,
  we may claim the same (since, in this case, the vector $\bv$ is just some positive combination of vectors from $\Inc$).
   We need to restrict $\Norm(\bv)$. In order to do that, consider a VASS $\A' = (Q,T')$ where
    $T' = T \cup \{(q,-e_i,q) \mid q \in Q\, 1\leq i \leq d\}$
     ($e_i$ is the $i$-th base vector i.e., $e_i(j) = 0$ for $j \neq i$ and $e_i(i) = 1$).

Let $g$ be a QRF for $\A'$ maximizing the number of $g$-ranked transitions. We want to prove that transition of $\A$ is $f$-ranked if and only if it is $g$-ranked. Surely $g$ is also QRF for $\A$ and $k \cdot f$ is QRF for $\A'$ for a suitable constant $k\in\R^+$ (it may happen that $0>\bc_f^\top \cdot e_i>-1$). But then $kf+g$ is a QRF and every $g$-ranked or $f$-ranked transition is $(kf+g)$-ranked. Therefore, every transition of $\A$ must be $g$-ranked iff it is $f$-ranked. Otherwise it is a contradiction with definition of $f$ or $g$ (maximal number of ranked transitions).

The size of any counter $j$ such that $\bc_g(j) > 0$ can be linearly bounded by the size of the initial configuration $q_0\bu_0$. Let $q\bu$ be any configuration reachable from $q_0\bv_0$. Then from the definition of QRF we have
\[
\weights_g(q_0) - \weights_g(q) + \bc_g^{\top} \cdot \bu_0 \geq \bc_g^{\top} \cdot \bv.
\]
Let $a = \weights_g(q_0) - \weights_g(q)$ and $g_{max} = \max_{1\leq i \leq d} \bc_g(i)$ and $g_{min} = \min_{1\leq i \leq d} \bc_g(i)$ and $u_{max} = \max_{1 \leq i \leq d} \bu_0(i)$. Then if $\bc_g(j) > 0$ we have
\[
\frac{a + d \cdot u_{max} \cdot g_{max}}{g_{min}} \geq \bv(i).
\]
Let $b = \frac{a + d \cdot u_{max} \cdot g_{max}}{g_{min}}$.

Now consider any cycle $\pi_m$ in $\A$ such that $\bc_f^\top \cdot \eff(\pi_m) = 0$. Then $\bc_g^{\top} \cdot \eff(\pi_m) = 0$. 
This cycle can be done also in $\A'$. In $\A'$ we can produce a cycle $\pi_{m}'$ such that, at the end, all counters are bounded by $b$ (if $\bc_g(i) = 0$ then $(q,-e_i,q)$ is $g$-neutral for all $q \in Q$ and we can take path $\pi_m$ and add these loops).

From Lemma~\ref{lem:goodQRF} there is a compensating vector for $\pi_{m}'$ i.e., $\bu \in \cone(\Inc_{\A'})$ such that $\pi_{m}'+\bu \geq 0$. Again, we can take such a path that all counters are bounded. Surely $\Norm(\bu)$ is bounded by some constant depending on $m$.

Using Lemma~\ref{lem-kappa} on $\bu$ we get the constant $c$. Now take a vector $\bu'$ for $\pi_{m}'$ such that we ommit all vectors $-e_i$, i.e., $\bu' \in \cone(\Inc(\A))$.
\end{proof}

Now use the path $\pi_{m}$. Let $\bu' = \sum_{i=1}^k a_i \bv_i$ where $\bv_i \in \Inc$. Consider a multicycle where every cycle with effect $\bv_i$ appears $\lfloor a_i \rfloor$-times. Therefore $\eff(\pi_{m}) + \sum_{i=1}^k \lfloor a_i \rfloor \bv_i$ is bounded from below by some small negative constant vector (note that the number of different effects is bounded by $d$, see the proof of Lemma~\ref{lem-kappa} and each effect is bounded by $|Q|\cdot\max{\A}$) not depending on $m$. In order to connect all cycles in this multicycle, we need to have a cycle $\eta$ going through all states of $\A$. But this can cause only a decrease of $|Q|\cdot \maxup{\A}$. This gives us a possible decrease of $|Q| \cdot \maxup{\A} \cdot (d+1)$.

\end{document}